\keywords{quantum programming languages, quantum compilers}
\crefname{lemma}{Lemma}{Lemmas}
\Crefname{lemma}{Lemma}{Lemmas}
\lstdefinelanguage{tower}
{morekeywords={
     fun,let,bool,unit,uint,ptr,type,if,else,return,with,do,not,test,null,alloc,default,static,land,lor,lnot,lsl,lsr,true,false
   },
 sensitive=true,
 morecomment=[n]{/*}{*/},
 morecomment=[l]{//},
 morestring=[b]",
 escapechar=\%,
 columns=fullflexible,
 keepspaces=true,
 basicstyle=\ttfamily,
 mathescape=true,
}
\lst@ifdisplaystyle\color{blue!50!black}\bfseries\fi\ttfamily,
\lstdefinestyle{color}{
    keywordstyle=\color{blue!50!black}\bfseries\ttfamily,
}
\lstdefinestyle{nonum}{
    numbers=none,
}
\lstdefinestyle{tiny}{
    basicstyle=\tiny\ttfamily,
}
\lstdefinestyle{highlightif}{
    keywordstyle = [2]{\color{orange!80!black}\bfseries\ttfamily},
    deletekeywords = {if, else},
    morekeywords = [2]{if, else},
}
    \edef\@temp{\noexpand\label{\lst@label-\arabic{lstnumber}}}%
\newcommand{\tgate}{\textit{T}}
\DeclarePairedDelimiter\abs{\lvert}{\rvert}%
\newcommand{\ctx}{\ensuremath{\Gamma}}
\newcommand{\type}{\ensuremath{\tau}}
\newcommand{\tUnit}{\ensuremath{()}}
\newcommand{\tUInt}{\ensuremath{\texttt{uint}}}
\newcommand{\tBool}{\ensuremath{\texttt{bool}}}
\newcommand{\tPair}[2]{\ensuremath{({#1}, {#2})}}
\newcommand{\tPtr}[1]{\ensuremath{\texttt{ptr}({#1})}}
\newcommand{\vValue}{\ensuremath{v}}
\newcommand{\vUnit}{\ensuremath{()}}
\newcommand{\vNum}[1]{\ensuremath{\overline{#1}}}
\newcommand{\vTrue}{\ensuremath{\texttt{true}}}
\newcommand{\vFalse}{\ensuremath{\texttt{false}}}
\newcommand{\vNull}[1]{\ensuremath{\texttt{null}_{#1}}}
\newcommand{\vPtr}[2]{\ensuremath{\texttt{ptr}_{#1}[{#2}]}}
\newcommand{\eExp}{\ensuremath{e}}
\newcommand{\ePair}[2]{\ensuremath{({#1},{#2})}}
\newcommand{\eUnop}[2]{\ensuremath{{#1}\ {#2}}}
\newcommand{\eBinop}[3]{\ensuremath{{#1}\ {#2}\ {#3}}}
\newcommand{\eProj}[2]{\ensuremath{\pi_{#1}({#2})}}
\newcommand{\oNot}{\ensuremath{\texttt{not}}}
\newcommand{\oTest}{\ensuremath{\texttt{test}}}
\newcommand{\oAdd}{\ensuremath{\texttt{+}}}
\newcommand{\oMul}{\ensuremath{\texttt{*}}}
\newcommand{\oSub}{\ensuremath{\texttt{-}}}
\newcommand{\oAnd}{\ensuremath{\texttt{\&\&}}}
\newcommand{\oOr}{\ensuremath{\texttt{||}}}
\newcommand{\sStmt}{\ensuremath{s}}
\newcommand{\sSkip}{\ensuremath{\texttt{skip}}}
\newcommand{\sSeq}[2]{\ensuremath{{#1};\,{#2}}}
\newcommand{\sBind}[2]{\ensuremath{{#1} \leftarrow {#2}}}
\newcommand{\sUnbind}[2]{\ensuremath{{#1} \rightarrow {#2}}}
\newcommand{\sSwap}[2]{\ensuremath{{#1} \Leftrightarrow {#2}}}
\newcommand{\sMemSwap}[2]{\ensuremath{*{#1} \Leftrightarrow {#2}}}
\newcommand{\sIf}[2]{\ensuremath{\texttt{if}\ {#1}\ \{\ {#2}\ \}}}
\newcommand{\sWith}[2]{\ensuremath{\texttt{with}\ \{\ {#1}\ \}\ \texttt{do}\ \{\ {#2}\ \}}}
\newcommand{\sHadamard}[1]{\ensuremath{H({#1})}}
\newcommand{\hastype}[3]{\ensuremath{{#1} \vdash {#2} : {#3}}}
\newcommand{\stmtok}[3]{\ensuremath{{#1} \vdash {#2} \dashv {#3}}}
\newcommand{\modified}[1]{\ensuremath{\textsf{mod}({#1})}}
\newcommand{\reg}{\ensuremath{R}}
\newcommand{\mem}{\ensuremath{M}}
\newcommand{\denote}[1]{\ensuremath{\left\llbracket{#1}\right\rrbracket}}
\newcommand{\reverse}[1]{\mathcal{I}[{#1}]}
\newcommand{\circuit}[1]{\mathcal{C}\!\denote{#1}}
\newcommand{\oset}[3][0ex]{%
  \mathrel{\mathop{#3}\limits^{
    \vbox to#1{\kern-0.75\ex@
    \hbox{$\scriptstyle#2$}\vss}}}}
\newcommand{\domain}[1]{\ensuremath{\mathrm{dom}\,{#1}}}
\newcommand{\mcxComplexity}[2][]{\ensuremath{C^{\textrm{MCX}}_{#1}(#2)}}
\newcommand{\tComplexity}[2][]{\ensuremath{C^{\tgate}_{#1}(#2)}}
\begin{document}

\author{Charles Yuan}
\affiliation{
  \institution{MIT CSAIL}
  \streetaddress{32 Vassar St}
  \city{Cambridge}
  \state{MA}
  \postcode{02139}
  \country{USA}
}
\email{chenhuiy@csail.mit.edu}
\orcid{0000-0002-4918-4467}

\author{Michael Carbin}
\affiliation{
  \institution{MIT CSAIL}
  \streetaddress{32 Vassar St}
  \city{Cambridge}
  \state{MA}
  \postcode{02139}
  \country{USA}
}
\email{mcarbin@csail.mit.edu}
\orcid{0000-0002-6928-0456}

\title{The \tgate{}-Complexity Costs of Error Correction for Control Flow in Quantum Computation}

\begin{abstract}
Numerous quantum algorithms require the use of quantum error correction to overcome the intrinsic unreliability of physical qubits.
However, quantum error correction imposes a unique performance bottleneck, known as \tgate{}-complexity, that can make an implementation of an algorithm as a quantum program run more slowly than on idealized hardware.
In this work, we identify that programming abstractions for control flow, such as the quantum \texttt{if}-statement, can introduce polynomial increases in the \tgate{}-complexity of a program. If not mitigated, this slowdown can diminish the computational advantage of a quantum algorithm.

To enable reasoning about the costs of control flow, we present a cost model that a developer can use to accurately analyze the \tgate{}-complexity of a program under quantum error correction and pinpoint the sources of slowdown.
To enable the mitigation of these costs, we present a set of program-level optimizations that a developer can use to rewrite a program to reduce its \tgate{}-complexity, predict the \tgate{}-complexity of the optimized program using the cost model, and then compile it to an efficient circuit via a straightforward strategy.

We implement the program-level optimizations in Spire, an extension of the Tower quantum compiler.
Using a set of 11 benchmark programs that use control flow, we empirically show that the cost model is accurate, and that Spire's optimizations recover programs that are asymptotically efficient, meaning their runtime \tgate{}-complexity under error correction is equal to their time complexity on idealized hardware.

Our results show that optimizing a program before it is compiled to a circuit can yield better results than compiling the program to an inefficient circuit and then invoking a quantum circuit optimizer found in prior work. For our benchmarks, only 2 of 8 tested quantum circuit optimizers recover circuits with asymptotically efficient \tgate{}-complexity. Compared to these 2 optimizers, Spire uses $54\times$--$2400\times$ less compile time.
\end{abstract}

\maketitle

\section{Introduction} \label{sec:introduction}

\emph{Quantum algorithms} promise computational advantage over classical algorithms across numerous domains, including cryptography and communication~\citep{shor1997,proos2003,bennett1993,bennett2014}, search and optimization~\citep{grover1996,farhi2014}, data analysis and machine learning~\citep{rebentrost2018,biamonte2017,lloyd2015}, and physical simulation~\citep{abrams1997,childs2018,babbush2018}.

The power of quantum algorithms is rooted in their ability to manipulate quantum information, which exists in a \emph{superposition} of weighted classical states. A quantum computer may use \emph{quantum logic gates} to modify the states and weights within a superposition, and \emph{measure} quantum data to obtain a classical outcome with probability determined by the weights in the superposition.

A common representation of a quantum algorithm is as a \emph{quantum circuit}, a sequence of quantum logic gates that operate over individual \emph{qubits}, which are the quantum analogue of bits.

\paragraph{Error Correction}
Whereas an idealized quantum computer can execute any quantum algorithm, a realistic device must contend with the fact that every known physical implementation of a qubit is unreliable, meaning its state becomes irreversibly corrupted after a small number of logic gates are performed. To execute the algorithms that possess a provable asymptotic advantage in time complexity, including~\citet{shor1997,grover1996}, a quantum computer must employ \emph{quantum error correction} to encode a reliable \emph{logical} qubit within a number of unreliable physical qubits.

\paragraph{Resource Estimation}
A quantum algorithm that executes more logic gates requires logical qubits to be more reliable, which in turn demands more physical qubits, a scarce hardware resource.
Given an algorithm, it is thus essential to determine its time complexity in terms of the number of logic gates that it executes. This task, known as \emph{resource estimation}~\citep{suchara2013,hoefler2023,leymann2020}, is key to recognizing the scale of hardware needed to execute a quantum algorithm and the problem size at which it offers advantage over classical algorithms.

\paragraph{\tgate{}-Complexity Bottleneck}
In principle, conducting resource estimation for a quantum algorithm involves simply writing it out as a circuit and counting the number of logic gates used. A practical challenge is that quantum error correction affects the available logic gates and their costs.

An idealized quantum computer supports any physically realizable quantum logic gate, including analogues of classical NAND gates known as \emph{multiply-controlled NOT} (MCX) gates that are necessary for arithmetic~\citep{rines2018,gidney2021} and memory~\citep{low2018} within a quantum algorithm. By contrast, the prevailing \emph{surface code}~\citep{fowler2012} architecture for error correction that has been implemented in practice by quantum hardware from Google~\citep{google2023} and IBM~\citep{takita2016} supports only a restricted set of gates known as the \emph{Clifford+\tgate{}} gates, into which all MCX gates must be decomposed.

In turn, the decomposition of MCX uses the single-qubit \tgate{} gate, a performance bottleneck on the surface code.
Unlike \emph{Clifford} gates such as NOT and the two-qubit controlled-NOT (CNOT) that are natively supported by the code, the \tgate{} gate is realized separately via \emph{magic state distillation}~\citep{bravyi2005} at an area-latency cost\footnote{By area-latency cost, we refer to the product of the number of qubits and the number of processing cycles of the device.} of about $10^2$ times that of a CNOT gate~\citep{gidney2019} and $10^{10}$ times that of a NAND gate in classical transistors~\citep{babbush2021}.

Although deriving an efficient quantum circuit requires navigating all of its gate costs, the order of magnitude increase in cost for \tgate{} gates relative to other gates has contributed to a contemporary consensus that ``The number of \tgate{} gates \ldots{} typically dominates the cost when implementing a fault tolerant algorithm''~\citep{reiher2017}. It is therefore broad practice to quantify the runtime cost of a quantum algorithm under error correction using its \emph{\tgate{}-complexity}~\citep{babbush2018}, i.e.\ number of \tgate{} gates, which is often greater than its number of MCX gates.

\subsection{\tgate{}-Complexity Costs of Control Flow in Quantum Programs}

Resource estimation is made even more challenging by the reality that it is often impractical for a developer to explicitly write quantum circuits by hand. Instead, the developer uses \emph{quantum programming languages}~\citep{qsharp,quipper,qwire,selinger2004}, which provide programming abstractions over quantum data that are ultimately compiled to circuits.

\paragraph{Control Flow}
One abstraction provided by languages~\citep{qml,silq,javadi2014,voichick2023,tower,ying2012} is a quantum \texttt{if}-statement that conditions on the value of qubit in superposition. This concept of \emph{control flow in superposition} enables algorithms for simulation~\citep{babbush2018}, factoring~\citep{shor1997}, and search~\citep{ambainis2004} to be expressed as a program more concisely than without the abstraction.
In turn, the language compiler produces a circuit for a program that utilizes quantum \texttt{if} by translating the abstraction into individual qubit-controlled logic gates.

\paragraph{Costs of Control Flow}
The problem is that quantum error correction can make the use of control flow abstractions significantly more inefficient than on idealized hardware.
In this work, we identify that the usage of programming abstractions for control flow in superposition, such as quantum \texttt{if}, in a program can cause its asymptotic \tgate{}-complexity to be polynomially larger than the time complexity found by a standard analysis that assumes idealized hardware. This blowup arises because a quantum \texttt{if} compiles to significantly more \tgate{} gates than it does to MCX gates.

In turn, a polynomial increase in \tgate{}-complexity diminishes the theoretical advantage of quantum algorithms for tasks such as search~\citep{grover1996,brassard2002} and optimization~\citep{sanders2020} that have only polynomial advantage over classical algorithms. Moreover, emerging evidence suggests that ``at least cubic or quartic speedups are required for a practical quantum advantage''~\citep{hoefler2023}, which implies that even a linear slowdown jeopardizes the practical advantage of an algorithm that is otherwise marginally over the cubic speedup threshold.

\subsection{Cost Model for Accurately Predicting \tgate{}-Complexity Costs}

To optimize away this overhead, the developer must pinpoint the specific locations in a quantum program that incur the overhead.
A challenge is that without the ability to accurately reason about the program at syntax level, the developer must repeatedly compile it to a large circuit and count its gates, which does not efficiently or precisely identify the cause of the slowdown.

As an alternative, we present a cost model for reasoning about the \tgate{}-complexity of programming abstractions for control flow in superposition within a quantum program.
Using the cost model, a developer can pinpoint the sources of slowdown through a syntax-level analysis that accurately determines the runtime cost of each program statement under quantum error correction.

\subsection{Program-Level Optimizations for Mitigating \tgate{}-Complexity Costs}

Next, we present a set of \emph{program-level optimizations} for quantum programs.
Using them, a developer can rewrite a program to reduce its \tgate{}-complexity, predict the \tgate{}-complexity of the optimized program using the cost model, and then compile it to an efficient circuit by a straightforward strategy.

The first optimization, \emph{conditional flattening}, identifies excess \tgate{} gates caused by nested quantum \texttt{if}-statements, and removes these gates by introducing a temporary qubit and using it to flatten the structure of conditional statements. The second optimization, \emph{conditional narrowing}, identifies excess \tgate{} gates caused by statements that do not need to be placed under a quantum \texttt{if} and safely moves these statements outside the \texttt{if}, thereby narrowing the range of conditional statements.

We implement and evaluate these optimizations in Spire, an extension of the Tower~\citep{tower} quantum compiler.
For a set of 11 benchmark programs that use control flow, Spire successfully recovers programs that are asymptotically efficient, meaning their \tgate{}-complexity under error correction is equal to their time complexity on idealized hardware.

Our results show that optimizing a program before it is compiled to a circuit can yield better results than compiling the program to an inefficient circuit and then invoking a quantum circuit optimizer found in prior work. For our benchmarks, a majority of existing optimizers we tested do not recover circuits that are asymptotically efficient in \tgate{}-complexity. Moreover, Spire's optimizations followed by an existing quantum circuit optimizer achieve better results than either approach alone.

Forms of conditional narrowing and flattening appear in prior work~\citep{ittah2022,steiger2018,seidel2022}. Our novel contributions are to unify both optimizations as program rewrite rules, identify that they can mitigate the asymptotic slowdown caused by control flow, and empirically evaluate their effectiveness and speed relative to existing circuit optimizers.

\subsection{Contributions}

In this work, we present the following contributions:
\begin{itemize}[leftmargin=5.5mm]
\item \emph{Costs of Control Flow} (\Cref{sec:example}). We identify that programming abstractions for control flow in superposition can introduce polynomial overheads in the \tgate{}-complexity of a program. These costs can diminish the advantage of a quantum algorithm under error correction.
\item \emph{Cost Model} (\Cref{sec:cost-model}). We present a cost model that computes the \tgate{}-complexity of a quantum program that utilizes control flow. Using the cost model, the developer can accurately analyze the runtime cost of a program under an error-corrected quantum architecture.
\item \emph{Program-Level Optimizations} (\Cref{sec:optimizations}). We present two optimizations for quantum programs, \emph{conditional flattening} and \emph{conditional narrowing}. Using them, a developer can rewrite a program to reduce its \tgate{}-complexity, predict the \tgate{}-complexity of the optimized program using the cost model, and compile that program to an efficient circuit via a straightforward strategy.
\item \emph{Evaluation} (\Cref{sec:implementation,sec:evaluation}).
We implement the optimizations in Spire, an extension of the Tower quantum compiler.
Using a set of 11 benchmark programs that contain control flow, we empirically show that the cost model is accurate, and that Spire's optimizations can mitigate the \tgate{}-complexity costs of control flow and recover an asymptotically efficient program. By contrast, only 2 of 8 existing quantum circuit optimizers we tested recover circuits with asymptotically efficient \tgate{}-complexity. Spire uses $54\times$--$2400\times$ less compile time than these 2 optimizers.
\end{itemize}

\paragraph{Implications}
This work reveals challenges that must be overcome to fully realize the asymptotic advantage of quantum algorithms on an error-corrected quantum computer. By incorporating our cost model and optimizations, program optimizers may more precisely account for the architectural costs of error correction and the abstraction costs of control flow in a quantum program.

\section{Background on Quantum Computation} \label{sec:background}

This section overviews key concepts in quantum computation that are relevant to this work. For a comprehensive reference, please see~\citet{nielsen_chuang_2010}.

\paragraph{Superposition}
The fundamental unit of quantum information is the \emph{qubit}, a linear combination or \emph{superposition} $\gamma_0 \ket{0} + \gamma_1 \ket{1}$ of the classical \emph{basis states} 0 and 1, in which $\gamma_0, \gamma_1 \in \mathbb{C}$ are complex \emph{amplitudes} satisfying $\abs{\gamma_0}^2 + \abs{\gamma_1}^2 = 1$ describing relative weights of basis states. Examples of qubits are classical $\ket{0}$ and $\ket{1}$, and the states $\frac{1}{\sqrt{2}}{(\ket{0} + e^{i\varphi} \ket{1})}$ where $\varphi \in [0, 2\pi)$ is known as a \emph{phase}.

More generally, a \emph{quantum state} $\ket{\psi}$ is a superposition over $n$-bit strings. For example, $\ket{\psi} = \smash{\frac{1}{\sqrt{2}}}{(\ket{00}+\ket{11})}$ is a quantum state over two qubits.
Formally, multiple component states form a composite state by the \emph{tensor product} $\otimes$, e.g.\ the state $\ket{01}$ is equal to $\ket{0} \otimes \ket{1}$. As is customary in quantum computation, we also use the notation $\ket{0, 1}$ to represent $\ket{01} = \ket{0} \otimes \ket{1}$.

\paragraph{Unitary Operator}
A \emph{unitary operator} $U$ is a linear operator on quantum states that preserves inner products and whose inverse is its Hermitian adjoint $U^\dagger$.
Formally, a unitary operator may be constructed as a circuit of \emph{quantum gates}. The quantum gates over a single qubit include:
\begin{itemize}
    \item Bit flip ($X$ or NOT), which maps $\ket{x} \mapsto \ket{1 - x}$ for $x \in \{0, 1\}$;
    \item Phase flip ($Z$), which maps $\ket{x} \mapsto (-1){}^x\ket{x}$;
    \item $\pi/4$ phase rotation (\tgate{}), which maps $\ket{x} \mapsto e^{ix\pi/4}\ket{x}$;
    \item Hadamard ($H$), which maps $\ket{x} \mapsto \frac{1}{\sqrt{2}}{(\ket{0} + (-1){}^x \ket{1})}$.
\end{itemize}
A gate may be \emph{controlled} by one or more qubits, forming a larger unitary operator. For example, the two-qubit CNOT gate maps $\ket{0, x} \mapsto \ket{0, x}$ and $\ket{1, x} \mapsto \ket{1, \textrm{NOT}\ x} = \ket{1, 1-x}$. Generalized versions of CNOT are known as multi-controlled-$X$ (MCX) gates. In particular, the MCX gate with two control bits is known as the Toffoli gate, and the MCX with zero controls is the NOT gate.

The \emph{Clifford} gates~\citep{gottesman1998} are the quantum gates that can be constructed by compositions and tensor products of $H$, $S = T^2$, and CNOT\@. Examples of Clifford gates include $Z = S^2$ and $X = HZH$. By contrast, no MCX gate larger than CNOT is Clifford, and constructing e.g.\ a Toffoli gate requires the use of the non-Clifford \tgate{} gate. The Clifford gates plus the \tgate{} gate form the Clifford+\tgate{} gates, the gate set of the predominant surface code for quantum error correction.

\paragraph{Measurement}
Performing a \emph{measurement} of a quantum state probabilistically collapses its superposition into a classical outcome.
When a qubit $\gamma_0 \ket{0} + \gamma_1 \ket{1}$ is measured in the standard basis, the observed classical outcome is $0$ with probability $\abs{\gamma_0}^2$ and $1$ with probability $\abs{\gamma_1}^2$.

\paragraph{Entanglement}
A state is \emph{entangled} when it consists of two components but cannot be written as a tensor product of its components.
For example, the \emph{Bell state}~\citep{bell1964} $\smash{\frac{1}{\sqrt{2}}}(\ket{00} + \ket{11})$ is entangled, as it cannot be written as a product of two independent qubits.

Given an entangled state, measuring one of its components causes the superposition of the other component to also collapse.
For example, measuring the second qubit in the Bell state causes the state of the first qubit to also collapse, to either $\ket{0}$ or $\ket{1}$ with probability $\big\lvert\frac{1}{\sqrt{2}}\big\rvert{}^2 = \frac{1}{2}$ each.

\paragraph{Uncomputation}
Entanglement means that in general, measuring or discarding a component of a quantum state can destroy the superposition of the remainder of the state. The consequence is that a quantum algorithm may not simply discard a temporary variable that it no longer needs, which could cause a superposition collapse that negates the possibility of quantum advantage. Instead, the algorithm must \emph{uncompute} the variable~\citep{bennett1973,silq}, meaning it reverses the sequence of operations on that variable and returns it to its initial value of zero.

\section{\tgate{}-Complexity Costs of Control Flow in Quantum Programs} \label{sec:example}

In this section, we demonstrate how programming abstractions for control flow in superposition can cause a quantum program to have asymptotic time complexity worse than that found by its idealized theoretical analysis. These costs arise from the performance bottleneck of quantum error correction, and if not mitigated, can diminish the computational advantage of quantum algorithms.

\subsection{Running Example}

Quantum algorithms for search~\citep{ambainis2004,grover1996}, game tree evaluation~\citep{ambainis2010,childs2007}, combinatorial optimization~\citep{bernstein2013}, and geometry~\citep{aaronson2019} utilize abstract data structures in superposition to achieve computational advantage. For example, they rely on a set to efficiently maintain a collection of items, check an item for membership, and add and remove items. In turn, an abstract set can be concretely implemented as a linked list, whose structure and contents exist in quantum superposition.

In \Cref{fig:length-tower}, we present a program in the language Tower~\citep{tower} to compute the length of a linked list. This \texttt{length} function accepts a pointer \texttt{xs} to the head of the list and a value \texttt{acc} that stores the number of list nodes traversed so far. Line~\ref{line:length-tower-comparison} checks whether the list is empty, meaning that \texttt{xs} is null, and if so returns \texttt{acc}. If not, line~\ref{line:length-tower-dereference} dereferences \texttt{xs} to obtain the pointer to the next list node, and line~\ref{line:length-tower-addition} adds 1 to the value of \texttt{acc}.

\paragraph{Recursion}
\begin{wrapfigure}[16]{r}{.46\textwidth}
\begin{lstlisting}
type list = (uint, ptr<list>);
fun length[n](xs: ptr<list>, acc: uint) {
    with {%\label{line:length-tower-begin-with}%
        let is_empty <- xs == null;%\label{line:length-tower-comparison}%
    } do if is_empty {%\label{line:length-tower-begin-if}%
        let out <- acc;%\label{line:length-tower-out}%
    } else with {
        let temp <- default<list>;
        *xs <-> temp;%\label{line:length-tower-dereference}%
        let next <- temp.2;
        let r <- acc + 1;%\label{line:length-tower-addition}%
    } do {%\label{line:length-tower-begin-do}%
        let out <- length[n-1](next, r);%\label{line:length-tower-recursion}%
    }%\label{line:length-tower-end-if}%
    return out;
}
\end{lstlisting}
\setlength{\abovecaptionskip}{3pt}
\caption{Program computing the length of a list.} \label{fig:length-tower}
\end{wrapfigure}%
On line~\ref{line:length-tower-recursion}, the function makes a recursive call.
In Tower, all function calls are inlined by the compiler, and the values \texttt{n} and \texttt{n-1} statically instruct the compiler to unroll \texttt{length} to depth \texttt{n}.
In the example, \texttt{n} is a concrete integer known at compile time, and \texttt{length} is effectively a family of functions whose \texttt{n}th instance returns the length of the list \texttt{xs} if it is less than \texttt{n}, or 0 otherwise.

\paragraph{Quantum Data}
All data types in Tower denote data in quantum superposition. For example, when \texttt{xs} is a superposition of lists $[], [1]$, and $[1,2,3]$, the output of \texttt{length} is a superposition of the integers 0, 1, and 3. The \texttt{if}-statement on line~\ref{line:length-tower-begin-if} conditions on the value of \lstinline{is_empty} in superposition, meaning it executes the \texttt{if}-clause on the classical states in the machine state superposition where \lstinline{is_empty} is true, and the \texttt{else}-clause on all other states.

\paragraph{Uncomputation}
The structure of the program in \Cref{fig:length-tower} enables the use of uncomputation (\Cref{sec:background}) to clean up temporary values in the program.
In Tower, an operator known as \emph{un-assignment} \lstinline[style=color]{let x -> e} is defined as the reverse of the assignment \lstinline[style=color]{let x <- e}. Whereas assignment initializes \texttt{x} to zero and sets it to \texttt{e}, un-assignment resets \texttt{x} from \texttt{e} back to zero and deinitializes \texttt{x}.

In \Cref{fig:length-tower}, un-assignment does not appear explicitly but is performed implicitly by the \texttt{with}-\texttt{do} construct as follows.
First, the \texttt{with}-block on lines~\ref{line:length-tower-begin-with} to~\ref{line:length-tower-begin-if} executes, initializing a variable \lstinline{is_empty}. The \texttt{do}-block on lines~\ref{line:length-tower-begin-if} to~\ref{line:length-tower-end-if} then executes, computing the result \texttt{out}. Then, the \texttt{with}-block is executed in reverse, with all assignments flipped to un-assignments and vice versa. That is, the inverse of lines~\ref{line:length-tower-begin-with} to~\ref{line:length-tower-begin-if} un-assigns and deinitializes \lstinline{is_empty}. The function then returns \texttt{out}.

Equivalents of the \texttt{with}-\texttt{do} construct can be found in other quantum programming languages. Examples include the \texttt{within}-\texttt{apply} blocks of Q\#~\citep{qsharp} and the automatic uncomputation of scoped variables in Silq~\citep{silq} and Qunity~\citep{voichick2023}.

\subsection{Complexity Analysis and Diminished Advantage} \label{sec:complexity-analysis}

Algorithms such as~\citet{ambainis2004,grover1996,bernstein2013,aaronson2019} offer theoretical advantage over classical algorithms that is sub-exponential, meaning it could be diminished if their implementation as a program introduces additional polynomial overhead.

\paragraph{Idealized Analysis}
A standard analysis of \texttt{length} reveals that its time complexity is $O(n)$ where $n$ is the recursion depth \texttt{n} from above.
In this work, we assume that the bit width of integer and pointer registers is a small constant, with only the depth $n$ of recursion considered a variable.\footnotemark{}
At each of $n$ levels of recursion, \texttt{length} performs $O(1)$ work in primitive operations, and makes one recursive call. The recurrence $C(n) = O(1) + C(n-1)$ for time complexity yields $C(n) = O(n)$.%
\footnotetext{For detailed discussion of the effect of a variable bit width on the \tgate{}-complexity of a program, please see \Cref{sec:bitwidth}.}

In \Cref{fig:slowdown}, we plot the empirical time complexity of \texttt{length} on an idealized quantum computer, as determined by compiling \Cref{fig:length-tower} to a quantum circuit of multiply-controlled NOT (MCX) gates, and counting its \emph{MCX-complexity}, i.e.\ number of MCX gates, which is $O(n)$ as above.

\paragraph{Asymptotic Slowdown}
\begin{wrapfigure}[14]{r}{.46\textwidth}
\begin{tikzpicture}
\begin{axis}[
    xlabel=Recursion depth $n$ of \texttt{length},
    ylabel=Number of gates,
    yticklabel style={
        /pgf/number format/.cd,sci,
        /pgf/number format/precision=5
    },
    y label style={at={(axis description cs:0.025,.5)},anchor=south},
    scaled y ticks=false,
    ytick distance=5e5,
    xtick distance=1,
    width=6.25cm,
    legend pos=north west,
    legend cell align=left,
    legend style={fill=none,draw=none,nodes={scale=0.8, transform shape}},
    domain=2:10,
    samples=100,
    label style={font=\small},
    tick label style={font=\small}
]
\addplot[Dark2-A,-,thick,mark=square*] plot coordinates {
    (2,105406)
    (3,203308)
    (4,332654)
    (5,493444)
    (6,685678)
    (7,909356)
    (8,1164478)
    (9,1451044)
    (10,1769054)
};
\addlegendentry{\tgate{}-complexity}
\addplot[dashed,thick,mark=*,mark options=solid] plot coordinates {
    (2,7265)
    (3,10879)
    (4,14493)
    (5,18107)
    (6,21721)
    (7,25335)
    (8,28949)
    (9,32563)
    (10,36177)
};
\addlegendentry{MCX-complexity}
\end{axis}
\end{tikzpicture}
\setlength{\abovecaptionskip}{5pt}
\caption{Number of gates in the circuit of \Cref{fig:length-tower}.} \label{fig:slowdown}
\end{wrapfigure}
\Cref{fig:slowdown} also plots the empirical time complexity of \texttt{length} on a quantum computer with error correction, as found by compiling it to a circuit in the Clifford+\tgate{} gate set, and counting its \emph{\tgate{}-complexity}, i.e.\ number of \tgate{} gates.

The number of \tgate{} gates is an appropriate metric because \tgate{} gates act as the bottleneck of the \emph{surface code}~\citep{fowler2012}, the prevailing quantum error correction code. On the surface code, realizing the \tgate{} gate incurs an area-latency cost of about $10^2$ times that of Clifford gates such as CNOT~\citep{gidney2019} and $10^{10}$ times that of a NAND gate in classical transistors~\citep{babbush2021}.

As seen in \Cref{fig:slowdown}, the \tgate{}-complexity of the program is not $O(n)$ but rather $O(n^2)$, meaning that a quantum algorithm that invokes \texttt{length} obtains diminished advantage under error correction. Such a slowdown does not fully erase the theoretical advantage of~\citet{ambainis2004}, which is $O(N^{1/3})$ where $N$ is the size of the input. In~\citet{ambainis2004}, the depth of the data structure and hence of recursion is only poly-logarithmic in $N$, i.e. $O(\log^c N)$ for some constant $c$. However, such a slowdown jeopardizes instances of quantum search~\citep{grover1996} in which the advantage is $O(N^{1/2})$ and the depth $n$ of each query is $O(N^{1/2})$ or greater.

\subsection{\tgate{}-Complexity Costs of Control Flow} \label{sec:compilation}

The cause of the disparity is that on an error-corrected quantum computer, logic gates that are controlled by more bits are more costly to realize in terms of \tgate{}-complexity. In turn, these control bits accrue in the compiled form of a control flow abstraction such as the quantum \texttt{if}-statement.

\paragraph{Compilation of Control Flow}
To execute on a quantum computer, a program such as \Cref{fig:length-tower} is compiled to a quantum circuit, a fixed sequence of logic gates controlled by individual bits. Each statement compiles to gates controlled by all of the qubits that lead to that control flow path.

To demonstrate this translation on a smaller scale, in \Cref{fig:simple-if} we depict a simple program that uses quantum \texttt{if}-statements.
Given Booleans \texttt{x}, \texttt{y}, and \texttt{z}, the program sets the value of output variables \texttt{a} and \texttt{b} to the negation of \texttt{z} and true respectively, when \texttt{x}, \texttt{y}, and \texttt{z} are all true. Though a toy example, this program exemplifies the same overheads of control flow as in \Cref{fig:length-tower}.

In \Cref{fig:simple-if-circuit-full}, we depict the circuit to which \Cref{fig:simple-if} compiles, which has wires labeled with the name of each program variable.
Gates labeled \textit{X} denote NOT gates, while gates with black dots denote bit-controlled gates that execute only if all control bits, denoted by the dots, are true.

The nested quantum \texttt{if}-statements on lines~\ref{line:simple-if-x} and~\ref{line:simple-if-y} compile to a sequence of gates controlled by both \texttt{x} and \texttt{y}.
Line~\ref{line:simple-if-1} compiles to the first gate, a controlled-NOT (CNOT) gate that flips \texttt{t} based on the value of \texttt{z}. In turn, this CNOT is controlled by \texttt{x} and \texttt{y}.
Next, the quantum \texttt{if} on line~\ref{line:simple-if-z} compiles to gates controlled by \texttt{x}, \texttt{y}, and \texttt{z}.
Line~\ref{line:simple-if-2} compiles to three gates --- a CNOT over \texttt{t} and \texttt{a}, surrounded by NOT gates on \texttt{t}.
Line~\ref{line:simple-if-3} compiles to the next gate, a NOT over \texttt{b}. Finally, the semantics of the \texttt{with}-block states that line~\ref{line:simple-if-1} is reversed after the \texttt{do}-block, corresponding to the last gate.

\paragraph{Error Correction}
If we were targeting an ideal quantum computer not constrained by hardware, then the circuit in \Cref{fig:simple-if-circuit-full} consisting of MCX gates could serve as the final representation of the program. Indeed, the idealized analysis of \texttt{length} finds its MCX-complexity, which is linear.

By contrast, a computer that uses the surface code~\citep{fowler2012} for error correction supports the restricted Clifford+\tgate{} gate set, to which MCX gates larger than CNOT must be decomposed. In \Cref{fig:mcx-decomposition}, we depict how an MCX gate decomposes into Toffoli gates by the process of~\citet{barenco1995}. Then, in \Cref{fig:toffoli-decomposition}, we depict how Toffoli decomposes into Clifford+\tgate{} gates.

\newcommand{\notgate}{X}
\newcommand{\redctrl}[1]{\ctrl[style={orange!80!black,fill=orange!80!black},wire style=black]{#1}}
\begin{figure}[!t]
\begin{minipage}{0.33\textwidth}
\vspace*{-0.75em}%
\begin{center}
\begin{tabular}{c}
\begin{lstlisting}
if x {%\label{line:simple-if-x}%
  if y {%\label{line:simple-if-y}%
    with {
      let t <- z;%\label{line:simple-if-1}%
    } do {
        if z {%\label{line:simple-if-z}%
          let a <- not t;%\label{line:simple-if-2}%
          let b <- true;%\label{line:simple-if-3}%
} } } }%\label{line:simple-if-end}%
\end{lstlisting}
\end{tabular}
\end{center}
\end{minipage}%
\hspace*{\fill}%
\begin{minipage}{0.65\textwidth}
\vspace*{-0.5em}%
\resizebox{\textwidth}{!}{
\newcommand{\grpa}{\gategroup[2,style={dashed,inner xsep=2pt,inner ysep=2pt},background,label style={label position=above,anchor=north,yshift=1.5cm}]{\lstinline[style=color]{let t <- z;}}}
\newcommand{\grpb}{\gategroup[2,steps=3,style={dashed,inner xsep=2pt,inner ysep=2pt},background,label style={label position=above,anchor=north,yshift=2.25cm}]{\lstinline[style=color]{let a <- not t;}}}
\newcommand{\grpc}{\gategroup[1,style={dashed,inner xsep=2pt,inner ysep=2pt},background,label style={label position=above,anchor=north,yshift=3.75cm}]{\lstinline[style=color]{let b <- true;}}}
\newcommand{\grpd}{\gategroup[2,style={dashed,inner xsep=2pt,inner ysep=2pt},background,label style={label position=above,anchor=north,yshift=1.5cm}]{\lstinline[style=color]{let t -> z;}}}
\begin{quantikz}[row sep={0.75cm,between origins},column sep={0.75cm,between origins}]
\lstick{\texttt{x}} & \redctrl{2}     &[1.25cm] \redctrl{3}  & \redctrl{4}     & \redctrl{2}     &[1.5cm] \redctrl{5}   &[1.8cm] \redctrl{3} & \ghost{X} \\
\lstick{\texttt{y}} & \redctrl{2}     & \redctrl{2}          & \redctrl{3}     & \redctrl{1}     & \redctrl{4}          & \redctrl{2}        & \ghost{X} \\
\lstick{\texttt{z}} & \ctrl{1}\grpa   & \ctrl{1}             & \redctrl{2}     & \ctrl{1}        & \ctrl{3}             & \ctrl{1}\grpd      & \ghost{X} \\
\lstick{\texttt{t}} & \gate{\notgate} & \gate{\notgate}\grpb & \ctrl{1}        & \gate{\notgate} &                      & \gate{\notgate}    &           \\
\lstick{\texttt{a}} &                 &                      & \gate{\notgate} &                 &                      &                    &           \\
\lstick{\texttt{b}} &                 &                      &                 &                 & \gate{\notgate}\grpc &                    &
\end{quantikz}
}%
\end{minipage}
\setlength{\abovecaptionskip}{2pt}
\setlength{\belowcaptionskip}{-8pt}
\begin{minipage}[t]{0.31\textwidth}
\caption{Tower program that uses nested quantum \texttt{if}-statements.} \label{fig:simple-if}
\end{minipage}%
\hspace{1em}%
\begin{minipage}[t]{0.64\textwidth}
\caption{Translation of \Cref{fig:simple-if} to a circuit. On each multiply-controlled-NOT (MCX) gate, each \textcolor{orange!80!black}{orange} control bit incurs \tgate{}-complexity.} \label{fig:simple-if-circuit-full}
\end{minipage}
\end{figure}%
\begin{figure}[!t]
\begin{minipage}{0.34\textwidth}
\resizebox{\textwidth}{!}{
\begin{quantikz}[row sep={0.75cm,between origins}]
\ghost{T} \\
& \ctrl[style={orange!80!black,fill=orange!80!black},wire style=black]{3} & \\
& \ctrl[style={orange!80!black,fill=orange!80!black},wire style=black]{2} & \\
& \ctrl{1} & \\
& \gate{\notgate} &
\end{quantikz}
\hspace*{0.4em}\raisebox{-0.3em}{\huge =}\hspace*{-1.2em}
\begin{quantikz}[column sep=0.2cm,row sep={0.75cm,between origins}]
\lstick{$\ket{0}$} & \gate{\notgate} & \ctrl[style={orange!80!black,fill=orange!80!black},wire style=black]{4} & \gate{\notgate} & \rstick{$\ket{0}$} \\
& \ctrl{-1} & & \ctrl{-1} & \\
& \ctrl[style={orange!80!black,fill=orange!80!black},wire style=black]{-2} & & \ctrl[style={orange!80!black,fill=orange!80!black},wire style=black]{-2} & \\
& & \ctrl{1} & & \\
& & \gate{\notgate} & &
\end{quantikz}
}%
\end{minipage}%
\hspace*{\fill}%
\begin{minipage}{0.66\textwidth}
\resizebox{\textwidth}{!}{
\begin{quantikz}[row sep={0.75cm,between origins}]
& \ctrl[style={orange!80!black,fill=orange!80!black},wire style=black]{2} & \ghost{T} \\
& \ctrl{1} & \\
& \gate{\notgate} &
\end{quantikz}
=
\begin{quantikz}[column sep=0.2cm,row sep={0.75cm,between origins}]
& & & & \ctrl{2} & & & & \ctrl{2} & \ctrl{1} & & \ctrl{1} & \gate[style={fill=orange!20}]{T} \\
& & \ctrl{1} & & & & \ctrl{1} & \gate[style={fill=orange!20}]{T\smash{{}^\dagger}} & & \gate{\notgate} & \gate[style={fill=orange!20}]{T\smash{{}^\dagger}} & \gate{\notgate} & \gate{S} & \\
& \gate{H} & \gate{\notgate} & \gate[style={fill=orange!20}]{T\smash{{}^\dagger}} & \gate{\notgate} & \gate[style={fill=orange!20}]{T} & \gate{\notgate} & \gate[style={fill=orange!20}]{T\smash{{}^\dagger}} & \gate{\notgate} & \gate[style={fill=orange!20}]{T} & \gate{H} & & &
\end{quantikz}
}%
\end{minipage}
\setlength{\abovecaptionskip}{2pt}
\setlength{\belowcaptionskip}{-1pt}
\hspace{1em}%
\begin{minipage}[t]{0.33\textwidth}
\caption{Decomposing MCX to Toffoli.} \label{fig:mcx-decomposition}
\end{minipage}%
\begin{minipage}[t]{0.67\textwidth}
\caption{Decomposing Toffoli into Clifford+\tgate{} gates.} \label{fig:toffoli-decomposition}
\end{minipage}
\end{figure}%
\begin{figure}[!t]
\begin{minipage}{0.34\textwidth}
\vspace*{-0.75em}%
\begin{center}
\begin{tabular}{c}
\begin{lstlisting}
with {%\label{line:simple-if-optimized-with}%
  let t <- z;%\label{line:simple-if-optimized-1}%
  let s <- x && y && z;%\label{line:simple-if-optimized-s}%
} do {
  if s {%\label{line:simple-if-optimized-if}%
    let a <- not t;%\label{line:simple-if-optimized-2}%
    let b <- true;%\label{line:simple-if-optimized-3}%
} }
\end{lstlisting}
\end{tabular}
\end{center}
\end{minipage}%
\hspace{\fill}%
\begin{minipage}{0.64\textwidth}
\vspace*{-0.5em}%
\resizebox{\textwidth}{!}{
\newcommand{\grpe}{\gategroup[2,style={dashed,inner xsep=2pt,inner ysep=2pt},background,label style={label position=above,anchor=north,yshift=2.25cm}]{\lstinline[style=color]{let t <- z;}}}
\newcommand{\grpf}{\gategroup[2,steps=3,style={dashed,inner xsep=2pt,inner ysep=2pt},background,label style={label position=above,anchor=north,yshift=3cm}]{\lstinline[style=color]{let a <- not t;}}}
\newcommand{\grpg}{\gategroup[1,style={dashed,inner xsep=2pt,inner ysep=2pt},background,label style={label position=above,anchor=north,yshift=4.5cm}]{\lstinline[style=color]{let b <- true;}}}
\newcommand{\grph}{\gategroup[2,style={dashed,inner xsep=2pt,inner ysep=2pt},background,label style={label position=above,anchor=north,yshift=2.25cm}]{\lstinline[style=color]{let t -> z;}}}
\begin{quantikz}[row sep={0.75cm,between origins},column sep={0.75cm,between origins}]
\lstick{\texttt{s}} &                 &[.75cm] \gate{\notgate} &[.75cm] \ctrl{4}      & \redctrl{5}     & \ctrl{4}        &[1.5cm] \ctrl{6}      &[1cm] \gate{\notgate} &[1cm]            & \\
\lstick{\texttt{x}} & \ghost{X}       & \redctrl{-1}           &                      &                 &                 &                      & \redctrl{-1}         &                 & \\
\lstick{\texttt{y}} & \ghost{X}       & \redctrl{-2}           &                      &                 &                 &                      & \redctrl{-2}         &                 & \\
\lstick{\texttt{z}} & \ctrl{1}\grpe   & \ctrl{-3}              & \ghost{X}            &                 &                 &                      & \ctrl{-3}            & \ctrl{1}\grph   & \\
\lstick{\texttt{t}} & \gate{\notgate} &                        & \gate{\notgate}\grpf & \ctrl{1}        & \gate{\notgate} &                      &                      & \gate{\notgate} & \\
\lstick{\texttt{a}} &                 &                        &                      & \gate{\notgate} &                 &                      &                      &                 & \\
\lstick{\texttt{b}} &                 &                        &                      &                 &                 & \gate{\notgate}\grpg &                      &                 &
\end{quantikz}
}%
\end{minipage}
\setlength{\abovecaptionskip}{2pt}
\setlength{\belowcaptionskip}{-8pt}
\begin{minipage}[t]{0.34\textwidth}
\caption{Optimized version of \Cref{fig:simple-if}.} \label{fig:simple-if-optimized}
\end{minipage}%
\hspace{\fill}%
\begin{minipage}[t]{0.64\textwidth}
\caption{Quantum circuit that corresponds to \Cref{fig:simple-if-optimized}.} \label{fig:simple-if-circuit-optimized}
\end{minipage}
\end{figure}%

The decomposition of an MCX to a Clifford+\tgate{} circuit introduces \tgate{}-complexity.
For example, \Cref{fig:toffoli-decomposition} uses 7 \tgate{} gates to decompose one Toffoli gate,\footnote{As a technical note, the gate $T^\dagger = TSZ$ has a \tgate{}-complexity of 1, as it can be realized using Clifford gates plus one \tgate{} gate.} meaning that \Cref{fig:mcx-decomposition} uses $3 \times 7 = 21$ \tgate{} gates to decompose an MCX gate with 3 control bits.
In general, \citet[Proposition 4.1]{beverland2020} prove that an MCX gate with $n \ge 2$ controls requires at least $n+1$ \tgate{} gates to realize. This lower bound is not reached in practice by \Cref{fig:mcx-decomposition,fig:toffoli-decomposition}, which instead require $7 \times (2(n-2) + 1)$ \tgate{} gates.

\paragraph{Costs of Control Flow}
In other words, on error-corrected quantum hardware, instructions become more costly to execute as the program's control flow becomes more deeply nested.
To accurately predict performance under error correction, one must account for the \tgate{}-complexity of each control bit beyond the first on each MCX gate --- only the first is free in principle because CNOT is a Clifford gate.
In \Cref{fig:simple-if-circuit-full}, we highlight these additional control bits in \textcolor{orange!80!black}{orange}. In addition to the 6 MCX gates, the 13 orange controls cost at least $7 \times 2 \times 13 = 182$ \tgate{} gates using \Cref{fig:toffoli-decomposition,fig:mcx-decomposition}.

\subsection{Cost Model for Accurately Predicting \tgate{}-Complexity Costs} \label{sec:example-cost-model}

The increased cost of control flow under quantum error correction explains the discrepancy between the idealized analysis of \texttt{length} in \Cref{sec:complexity-analysis} and the empirical gate counts in \Cref{fig:slowdown}.
We now demonstrate how using our \tgate{}-complexity cost model, a developer can conduct an analysis that pinpoints the overhead of control flow in a quantum program.

\paragraph{Running Example}
\begin{figure}
\vspace*{-5pt}%
\begin{minipage}[t]{0.47\textwidth}
\begin{lstlisting}[style=highlightif]
fun length[n](xs: ptr<list>, acc: uint) {
 with {
  let is_empty <- xs == null;
 } do {
  if is_empty { let out <- acc; }%\label{line:length-inlined-is_empty}%
  else with {
   /* elided: compute next, r */%\label{line:length-inlined-elided-1}%
   let is_empty2 <- next == null;
  } do {
   if is_empty2 { let out <- r; }
   else with {
    /* elided: compute next2, r2 */%\label{line:length-inlined-elided-2}%
    let is_empty3 <- next2 == null;%\label{line:length-inlined-is_empty2}%
   } do {
    if is_empty3 { let out <- r2; }%\label{line:length-inlined-is_empty3}%
    else with {
     let temp <- default<list>;%\label{line:length-inlined-begin-with}%
     *next2 <-> temp;
     let next3 <- temp.2;
     let r3 <- r2 + 1;%\label{line:length-inlined-end-with}%
    } do {
     let out <- length[n-3](next3, r3);
 }}}}
 return out;
}
\end{lstlisting}
\end{minipage}%
\begin{minipage}[t]{0.52\textwidth}
\begin{lstlisting}[style=highlightif]
fun length[n](xs: ptr<list>, acc: uint) {
 with {
  let is_empty <- xs == null;
  let not_empty <- not is_empty;
  /* elided: compute next, r */
  let is_empty2 <- not_empty && next == null;
  let not_empty2 <- not_empty && next != null;
  /* elided: compute next2, r2 */
  let is_empty3 <- not_empty2 && next2 == null;
  let not_empty3 <- not_empty2 && next2 != null;
  let temp <- default<list>;%\label{line:length-optimized-begin-with}%
  *next2 <-> temp;
  let next3 <- temp.2;
  let r3 <- r2 + 1;%\label{line:length-optimized-end-with}%
 } do {
  if is_empty { let out <- acc; }
  if is_empty2 { let out <- r; }%\label{line:length-optimized-is_empty2}%
  if is_empty3 { let out <- r2; }%\label{line:length-optimized-is_empty3}%
  if not_empty3 {
   let out <- length[n-3](next3, r3);
 }}
 return out;
}
\end{lstlisting}
\end{minipage}
\setlength{\abovecaptionskip}{1pt}
\setlength{\belowcaptionskip}{-9pt}
\begin{minipage}[t]{0.47\textwidth}
\caption{Version of \Cref{fig:length-tower} inlined to 3 levels of recursion, depicting the nesting of conditionals.} \label{fig:length-inlined}
\end{minipage}%
\hspace{\fill}%
\begin{minipage}[t]{0.52\textwidth}
\caption{Optimized version of \Cref{fig:length-inlined}.} \label{fig:length-optimized}
\end{minipage}
\end{figure}
In \Cref{fig:length-inlined}, we illustrate a version of \texttt{length} in which the recursive call on line~\ref{line:length-tower-recursion} of \Cref{fig:length-tower} has been unfolded and inlined twice to reveal the nesting of \texttt{if}-statements.
This program features three levels of nested \texttt{if}, highlighted in \textcolor{orange!80!black}{orange}. When the program is compiled to MCX gates, each \texttt{if} becomes a sequence of control bits placed over its branches. For example, the gates corresponding to the assignment on line~\ref{line:length-inlined-is_empty} are conditioned by \lstinline{is_empty}.

The source of the asymptotic cost is that nested conditional statements compile to nested control bits. For example, the assignment on line~\ref{line:length-inlined-is_empty2} lies under two levels of \texttt{if}-statements, and compiles to a sequence of gates that are controlled by \lstinline{is_empty} and \lstinline{is_empty2}. Likewise, the assignment on line~\ref{line:length-inlined-is_empty3} is controlled by three bits, as are all of lines~\ref{line:length-inlined-begin-with} to~\ref{line:length-inlined-end-with}.

\paragraph{Analysis with Cost Model}
Returning to the recursive form of \texttt{length} in \Cref{fig:length-tower}, we now use our cost model to repair the analysis of \Cref{sec:complexity-analysis} to account for the \tgate{}-complexity of control flow.
Let $\mcxComplexity{n}$ denote the MCX-complexity and $\tComplexity{n}$ the \tgate{}-complexity of the program.

To compute $\tComplexity{n}$, we start as before with the $O(1)$ primitive operations per level and the $\tComplexity{n-1}$ term for the recursive call.
Next, we account for the \tgate{}-complexity of control flow.
In \Cref{fig:length-tower}, the \texttt{if}-\texttt{else} on lines~\ref{line:length-tower-begin-if} to~\ref{line:length-tower-end-if} incurs one control bit for each statement on lines~\ref{line:length-tower-out} to~\ref{line:length-tower-addition}, adding an $O(1)$ term. On line~\ref{line:length-tower-recursion}, the \texttt{if} incurs $O(1)$ cost for each of the $\mcxComplexity{n-1} = O(n)$ primitive operations in the recursive call.
The final recurrence is:
\begin{align*}
    \tComplexity{n}\ \ = \!\!\!\!\!\!\!\underbrace{O(1)}_{\text{operations in level}}\!\!\!\!\!\!\! +\quad \underbrace{\tComplexity{n-1}}_{\text{recursive call}}\quad +\!\!\!\!\!\! \underbrace{O(1)}_{\substack{\text{control flow over}\mathstrut\\\text{operations in level}}}\!\!\!\!\!\!\! +\ \ \ \underbrace{\mcxComplexity{n-1}}_{\substack{\text{control flow over}\mathstrut\\\text{recursive call}}}\ \ =\ \ \tComplexity{n-1} + O(n)
\end{align*}
which yields $\tComplexity{n} = O(n^2)$, agreeing with the empirical results in \Cref{fig:slowdown}.

\subsection{Program-Level Optimizations for Mitigating \tgate{}-Complexity Costs}

We showed that control flow can incur asymptotic overhead in \tgate{}-complexity when compiled using a straightforward strategy. We next present two \emph{program-level optimizations} that rewrite the syntax of the program and produce a new program that then compiles using the same straightforward strategy to a circuit with reduced \tgate{}-complexity.
The first one, \emph{conditional flattening}, can provide an asymptotic speedup while the second, \emph{conditional narrowing}, yields additional constant speedups.

\paragraph{Conditional Flattening}
In \Cref{fig:simple-if-optimized}, we present an optimized form of \Cref{fig:simple-if} that has been subject to both optimizations. First, the \emph{conditional flattening} optimization eliminates control bits that are introduced by nested \texttt{if}-statements, by flattening them via the use of temporary variables. Whereas the original program in \Cref{fig:simple-if} uses three \texttt{if}-statements on lines~\ref{line:simple-if-x},~\ref{line:simple-if-y}, and~\ref{line:simple-if-z}, the optimized program in \Cref{fig:simple-if-optimized} introduces a variable \texttt{s} on line~\ref{line:simple-if-optimized-s} and uses it in a single \texttt{if} on line~\ref{line:simple-if-optimized-if}.

The benefit can be seen in \Cref{fig:simple-if-circuit-optimized}, the circuit to which \Cref{fig:simple-if-optimized} compiles. The gates to which lines~\ref{line:simple-if-optimized-2} and~\ref{line:simple-if-optimized-3} compile are now controlled by only \texttt{s} rather than \texttt{x}, \texttt{y}, and \texttt{z} as in the original circuit in \Cref{fig:simple-if-circuit-full}, saving 8 control bits or $7 \times 2 \times 8 = 112$ \tgate{} gates. Though the computation of \texttt{s} adds 4 control bits, this cost is asymptotically constant with respect to the length of the body of the \texttt{if}.

\paragraph{Conditional Narrowing}
Second, the \emph{conditional narrowing} optimization eliminates control bits introduced by a \texttt{with}-\texttt{do} block under an \texttt{if}-statement, by moving the \texttt{if} into the \texttt{do}-block. In \Cref{fig:simple-if-optimized}, line~\ref{line:simple-if-optimized-1} is no longer under an \texttt{if} as in the original \Cref{fig:simple-if}. As a result, in \Cref{fig:simple-if-circuit-optimized}, the first and last gates are not controlled by \texttt{x}, \texttt{y}, and \texttt{z}, saving 4 more control bits over \Cref{fig:simple-if-circuit-full}.

\paragraph{Running Example}
\begin{wrapfigure}[17]{r}{.44\textwidth}
\vspace*{-1.7em}%
\begin{lstlisting}[xleftmargin=9pt]
fun length[n](xs, acc: uint, b: bool) {
 with {
  let is_empty <- b && xs == null;
  let n_e <- b && xs != null;
  let temp <- default<list>;
  *xs <-> temp;
  let next <- temp.2;
  let r <- acc + 1;
 } do {
  if is_empty { let out <- acc; }
  let rest <- length[n-1](next, r, n_e);
  if n_e { out <-> rest; }
  let rest -> 0;
 }
 return out;
}
\end{lstlisting}
\setlength{\abovecaptionskip}{3pt}
\caption{Program that reflects the optimizations in \Cref{fig:length-optimized} back to the original form in \Cref{fig:length-tower}.} \label{fig:length-optimized-recursive}
\end{wrapfigure}
In \Cref{fig:length-optimized}, we depict the result of optimizations on the unfolded \texttt{length} program from \Cref{fig:length-inlined}.
Conditional flattening turns 3 levels of nested \texttt{if} to 1, such that assuming 8-bit registers, lines~\ref{line:length-optimized-is_empty2} and~\ref{line:length-optimized-is_empty3} save $7 \times 2 \times (2-1+3-1) \times 8 = 336$ \tgate{} gates. Accounting for uncomputation, the use of temporary variables adds $7 \times 2 \times 2 \times 2 = 56$ \tgate{} gates, for a net savings of $336 - 56 = 280$ \tgate{} gates.

Next, conditional narrowing saves a further $7 \times 2 \times 4 \times 8 = 448$ \tgate{} gates by moving lines~\ref{line:length-optimized-begin-with} to~\ref{line:length-optimized-end-with} outside \texttt{if}-statements.
Notably, the program remains safe even though pointer dereferences have been moved outside null checks. All writes to the observable output \texttt{out} remain guarded by appropriate checks, meaning uninitialized data never propagates to the output.

\paragraph{Efficient \tgate{}-Complexity}
When applied to the original \texttt{length} program from \Cref{fig:length-tower}, these optimizations produce the program depicted in \Cref{fig:length-optimized-recursive}.
In this program, recursion no longer takes place under nested \texttt{if}. As a result, in the \tgate{}-complexity analysis, control flow incurs only $O(1)$ overhead, and the recurrence yields an asymptotically efficient $O(n)$ for the optimized program.

\subsection{Comparison to Quantum Circuit Optimizers}

In principle, an alternative to the approach of program-level optimizations -- rewrite the program so that it straightforwardly compiles to a more efficient circuit -- is to emit the asymptotically inefficient circuit of the original program and then attempt to recover an asymptotically efficient circuit using a general-purpose \emph{quantum circuit optimizer}~\citep{hietala2021,sivarajah2021,xu2022,xu2023,quizx,kissinger2020} that researchers have developed to remove and replace inefficient sequences of gates in circuits.

To compare these approaches, we implemented both program-level optimizations in Spire, an extension to the Tower compiler.
In \Cref{fig:optimization-program-level}, we plot the \tgate{}-complexity of \texttt{length} after Spire's optimizations only, and no circuit optimizer. In \Cref{fig:optimization-circuit-level}, we plot the \tgate{}-complexity without Spire's optimizations, and only the Qiskit~\citep{qiskit} and Feynman~\citep{amy2014,feynman} circuit optimizers. We also plot in \Cref{fig:optimization-program-level} the results of a combined approach --- running Spire on the original program, compiling the optimized program to a circuit, and then running Feynman on that circuit. Lastly, we plot the idealized MCX-complexity from \Cref{fig:slowdown} for reference.\footnote{As a technical note, the MCX-complexity is the performance on an idealized architecture and not the minimal \tgate{}-complexity to implement the function.
The MCX and \tgate{}-complexities should be compared only in terms of asymptotics, not constants.}%
\begin{figure}
\begin{subfigure}{0.5\textwidth}
\begin{tikzpicture}
\begin{axis}[
    xlabel=Recursion depth $n$ of \texttt{length},
    ylabel=Number of gates,
    y label style={at={(axis description cs:0.02,.5)},anchor=south},
    scaled y ticks=false,
    ytick distance=50000,
    xtick distance=1,
    ymin=-12500,
    ymax=2.125e5,
    width=6.8cm,
    legend pos=north west,
    legend cell align=left,
    legend style={nodes={scale=0.7, transform shape},cells={align=left}},
    domain=2:10,
    samples=100,
    tick label style={font=\small},
    every axis plot/.append style={thick}
]
\addplot[Dark2-A,mark=square*] plot coordinates {
    (2,105406)
    (3,203308)
    (4,332654)
    (5,493444)
    (6,685678)
    (7,909356)
    (8,1164478)
    (9,1451044)
    (10,1769054)
};
\addlegendentry{Original program}
\addplot[Dark2-C,-,mark=diamond] plot coordinates {
    (2,24952)
    (3,39020)
    (4,53084)
    (5,67152)
    (6,81220)
    (7,95272)
    (8,109340)
    (9,123392)
    (10,137460)
};
\addlegendentry{Feynman \texttt{-mctExpand}}
\addplot[Dark2-E,-,mark=pentagon] plot coordinates {
    (2,25438)
    (3,38178)
    (4,50918)
    (5,63658)
    (6,76398)
    (7,89138)
    (8,101878)
    (9,114618)
    (10,127358)
};
\addlegendentry{Spire (Ours)}
\addplot[Dark2-F,-,mark=triangle] plot coordinates {
    (2,18780)
    (3,28232)
    (4,37684)
    (5,47136)
    (6,56588)
    (7,66040)
    (8,75492)
    (9,84944)
    (10,94396)
};
\addlegendentry{Spire + Feynman \texttt{-mctExpand}}
\addplot[dashed,thick,mark=*,mark options=solid] plot coordinates {
    (2,7265)
    (3,10879)
    (4,14493)
    (5,18107)
    (6,21721)
    (7,25335)
    (8,28949)
    (9,32563)
    (10,36177)
};
\addlegendentry{Ideal MCX-complexity}
\end{axis}
\end{tikzpicture}%
\caption{Program-level optimizations in Spire.} \label{fig:optimization-program-level}
\end{subfigure}%
\hspace*{\fill}%
\begin{subfigure}{0.5\textwidth}
\begin{tikzpicture}
\begin{axis}[
    xlabel=Recursion depth $n$ of \texttt{length},
    scaled y ticks=false,
    ytick distance=5e5,
    xtick distance=1,
    width=6.8cm,
    legend pos=north west,
    legend cell align=left,
    legend style={fill=none,draw=none,nodes={scale=0.7, transform shape}},
    domain=2:10,
    samples=100,
    tick label style={font=\small},
    every axis plot/.append style={thick}
]
\addplot[Dark2-A,-,mark=square*] plot coordinates {
    (2,105406)
    (3,203308)
    (4,332654)
    (5,493444)
    (6,685678)
    (7,909356)
    (8,1164478)
    (9,1451044)
    (10,1769054)
};
\addlegendentry{Original program}
\addplot[Dark2-G,-,mark=+] plot coordinates {
    (2,94058)
    (3,180090)
    (4,293204)
    (5,433270)
    (6,600288)
    (7,794258)
    (8,1015180)
    (9,1263054)
    (10,1537880)
};
\addlegendentry{Qiskit}
\addplot[Set2-D,-,mark=square] plot coordinates {
    (2,30496)
    (3,58032)
    (4,94434)
    (5,139800)
    (6,194164)
    (7,257506)
    (8,329830)
    (9,411120)
    (10,501412)
};
\addlegendentry{Feynman \texttt{-toCliffordT}}
\addplot[Dark2-C,-,mark=diamond] plot coordinates {
    (2,24952)
    (3,39020)
    (4,53084)
    (5,67152)
    (6,81220)
    (7,95272)
    (8,109340)
    (9,123392)
    (10,137460)
};
\addlegendentry{Feynman \texttt{-mctExpand}}
\end{axis}
\end{tikzpicture}
\caption{Existing quantum circuit optimizers.} \label{fig:optimization-circuit-level}
\end{subfigure}

\caption{\tgate{}-complexity of \texttt{length} after quantum circuit optimizers and program-level optimizations in Spire.}
\label{fig:optimization}
\end{figure}%

First, Qiskit and one configuration of Feynman do not produce a circuit with linear \tgate{}-complexity, while Spire and a second configuration of Feynman do.
A possible explanation for the difference is that conditional flattening is not captured by the rewrites of Clifford+\tgate{} gates that Qiskit implements.
By contrast, as we discuss in \Cref{sec:discussion}, the optimization can be captured by cancelling adjacent Clifford+Toffoli gates, enabling a configuration of Feynman using that strategy to succeed.

Next, Spire and Feynman together achieve better speedups than either alone --- Feynman leaves behind some fraction of \tgate{} gates that Spire can eliminate.
As we discuss in \Cref{sec:discussion}, one challenge that a circuit optimizer must overcome to fully capture the effect of conditional narrowing is that the circuit optimizer must perform rewrites over an unbounded number of gates.

Finally, Spire takes only \SI{0.05}{\second} to emit an efficient circuit, whereas Feynman takes 2 minutes to do so in this case. The reason is that whereas the circuit optimizer must process a large circuit to shrink it down, Spire optimizes the program so that the large circuit is not created in the first place.

\section{Tower Language Overview} \label{sec:tower}

In this section, we briefly review the syntax and semantics of Tower~\citep{tower}, a quantum programming language featuring abstractions for control flow in superposition.

\paragraph{Language Syntax}
The Tower language features the data types of integers, tuples, and pointers, along with operations on these data types.
In \Cref{fig:tower-syntax}, we depict the core syntax of the language.
\begin{figure}
\vspace*{-0.5em}%
\scalebox{0.9}{\parbox{.5\linewidth}{%
\begin{align*}
    \textsf{Type}\ \type \Coloneqq{} & \tUnit \mid \tUInt \mid \tBool \mid \tPair{\type_1}{\type_2} \mid \tPtr{\type} \\
    \textsf{Value}\ \vValue \Coloneqq{} & x \mid \vUnit \mid \ePair{x_1}{x_2} \mid \vNum{n} \mid \vTrue \mid \vFalse \mid \vNull{\type} \mid \vPtr{\type}{p} \hspace{1em} (n \in \textsf{UInt}, p \in \textsf{Addr}) \\
    \textsf{Expression}\ \eExp \Coloneqq{} & \vValue \mid \eProj{1}{x} \mid \eProj{2}{x} \mid \eUnop{uop}{x} \mid \eBinop{x_1}{bop}{x_2} \\
    \textsf{Operator}\ uop \Coloneqq{} & \oNot \mid \oTest \hspace{2em} bop \Coloneqq{} \oAnd \mid \oOr \mid \oAdd \mid \oSub \mid \oMul \\
    \textsf{Statement}\ \sStmt \Coloneqq{} & \sIf{x}{s} \mid \sSeq{\sStmt_1}{\sStmt_2} \mid \sSkip \mid \sBind{x}{\eExp} \mid \sUnbind{x}{\eExp} \mid \sHadamard{x} \mid \sSwap{x_1}{x_2} \mid \sMemSwap{x_1}{x_2}
\end{align*}%
}}
\setlength{\abovecaptionskip}{3pt}
\setlength{\belowcaptionskip}{-7pt}
\caption{Core syntax of the Tower quantum programming language.} \label{fig:tower-syntax}
\end{figure}

In Tower, all recursive function definitions and calls are inlined by the compiler, producing a program that uses only the core syntax above~\citep[Section 6]{tower}. In the example from \Cref{fig:length-tower}, the annotation \texttt{n} instructs the compiler to inline \texttt{length} into itself \texttt{n} times.

Apart from standard imperative programming features, Tower supports a number of constructs necessary for quantum programming. The \emph{un-assignment} construct $\sUnbind{x}{\eExp}$ uncomputes (\Cref{sec:background}) the value of $x$ using the value of $e$. The construct $\sSwap{x_1}{x_2}$ swaps the values of variables $x_1$ and $x_2$, and $\sMemSwap{x_1}{x_2}$ swaps a value stored in memory at pointer $x_1$ with the value of $x_2$.

We study a version of Tower extended with a statement $\sHadamard{x}$ that executes a Hadamard gate (\Cref{sec:background}) on the Boolean variable $x$.
Because the Hadamard and Toffoli gates are universal for quantum computation~\citep{shi2002}, the availability of the Hadamard and NOT gates and the $\sIf{x}{s}$ construct means that any quantum computation can be expressed as a Tower program.

\paragraph{Language Semantics}
The type system of Tower assigns a type to each value or expression and determines whether a statement is well-formed. In \Cref{sec:full-type-system}, we define typing for values and expressions, and the judgment $\stmtok{\ctx}{\sStmt}{\ctx'}$, which states that the statement $\sStmt$ is well-formed under a context $\ctx$ of variables and produces a context $\ctx'$ of the updated declarations after executing $\sStmt$.

The circuit semantics of Tower assigns to each program $s$ a corresponding quantum circuit $\circuit{s}$ that can execute on a quantum computer.
In \Cref{sec:full-circuit-semantics}, we define this semantics and specifically how $\circuit{s}$ maps an input machine state $\ket{\reg, \mem}$ to an output machine state $\ket{\reg', \mem'}$. Here, $\reg$ denotes a \emph{register file} mapping variables to values and $\mem$ denotes a \emph{memory} mapping addresses to values.

In Tower, the dereferencing of a null pointer is a no-op, not a runtime error. When a variable is re-defined, the value of its corresponding register becomes the XOR of its old and new values.

\paragraph{Derived Forms}
Each statement $s$ in Tower is \emph{reversible}, meaning that there exists a statement $\reverse{s}$ whose semantics are the reverse of $s$. Specifically, $\reverse{\sSeq{s_1}{s_2}}$ is $\sSeq{\reverse{s_2}}{\reverse{s_1}}$. Similarly, $\reverse{\sBind{x}{e}}$ is $\sUnbind{x}{e}$ and vice versa, $\reverse{\sIf{x}{s}}$ is $\sIf{x}{\reverse{s}}$, and the reverse of any other $s$ is $s$ itself.

Based on this concept, we define the derived form $\sWith{s_1}{s_2}$ as $\sSeq{s_1}{\sSeq{s_2}{\reverse{s_1}}}$, and use it to automate the insertion of uncomputation statements for variables within block scope.
Memory allocation and deallocation desugar to core constructs, following the process described in \citet[Section 5]{tower}.
Other derived forms, such as the \texttt{if}-\texttt{else} construct, are described in \citet[Appendix B]{tower} and similarly desugar to core constructs.

\section{Cost Model} \label{sec:cost-model}

In this section, we present a cost model that computes the \tgate{}-complexity of a quantum program that utilizes programming abstractions for control flow in superposition. Using the cost model, a developer can perform a syntactic analysis that determines the runtime cost of a program on an error-corrected quantum architecture and pinpoint the sources of asymptotic slowdown.

Given a program $\sStmt$, the cost model quantifies the number of gates in the circuit $\circuit{s}$ to which it compiles, following the semantics of \Cref{sec:tower}. More generally, the cost model also matches the compilation of other languages with quantum \texttt{if}, such as QML~\citep{qml}, ScaffCC~\citep{javadi2014}, Silq~\citep{hans2022}, and Qunity~\citep{voichick2023}.

\paragraph{MCX-Complexity}
We denote by $\mcxComplexity{s}$ the MCX-complexity of the program $s$, which is formally defined as the number of gates in its compiled circuit $\circuit{s}$ when expressed in an idealized gate set consisting of arbitrarily controllable Clifford gates, which includes arbitrary MCX gates:
\begin{alignat*}{5}
\mcxComplexity{\sSkip} &= 0 \quad & \mcxComplexity{\sSeq{s_1}{s_2}} &= \mcxComplexity{s_1} + \mcxComplexity{s_2} \\
\mcxComplexity{\sIf{x}{\sStmt}} &= \mcxComplexity{\sStmt} \quad & \mcxComplexity{\sStmt} &= c^\textrm{MCX}_s \text{ for any other $s$}
\end{alignat*}
where $0 \le c^\textrm{MCX}_s = O(1)$ represents the number of arbitrarily controllable Clifford gates, including MCX gates, used by the primitive operation $s$. This constant is determined by the implementation of $s$, and all primitive $s$ satisfy $c^\textrm{MCX}_s > 0$ except for only $\sSkip$ or $\sBind{x}{v}$ or $\sUnbind{x}{v}$ where $v$ has an all-zero bit representation for which no gates are emitted.
The reason for why the \texttt{if}-statement does not increase the MCX-complexity is that the number of arbitrarily controllable Clifford gates, including MCX gates, does not change when more control bits are added to gates.

\begin{theorem}[MCX-Complexity Soundness]
If $s$ is well-formed, i.e.\ $\stmtok{\ctx}{s}{\ctx'}$, then the number of arbitrarily controllable Clifford gates in $\circuit{s}$ is equal to $\mcxComplexity{s}$, up to choices for $c^\textrm{MCX}_s$.
\end{theorem}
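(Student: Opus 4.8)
The proof is a straightforward structural induction on the well-formedness derivation $\stmtok{\ctx}{s}{\ctx'}$, mirroring the recursive definition of $\mcxComplexity{\cdot}$ and the circuit semantics $\circuit{\cdot}$.

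\textbf{Setup.} First I would fix notation: write $N_{\textrm{MCX}}(\circuit{s})$ for the number of arbitrarily controllable Clifford gates (equivalently, gates in the idealized gate set, including all MCX gates) appearing in the circuit $\circuit{s}$. The goal is to show $N_{\textrm{MCX}}(\circuit{s}) = \mcxComplexity{s}$ for every well-formed $s$. The induction is on the structure of the typing derivation $\stmtok{\ctx}{s}{\ctx'}$, which follows the syntax of $s$ since the statement forms are syntax-directed.

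\textbf{Cases.} For the base cases: $\sSkip$ compiles to the empty circuit, so $N_{\textrm{MCX}} = 0 = \mcxComplexity{\sSkip}$. For each primitive statement form $s$ --- $\sBind{x}{e}$, $\sUnbind{x}{e}$, $\sHadamard{x}$, $\sSwap{x_1}{x_2}$, $\sMemSwap{x_1}{x_2}$ --- the circuit semantics of \Cref{sec:full-circuit-semantics} emits a fixed gate gadget whose size is a constant $c^\textrm{MCX}_s$ depending only on the form of $s$ and the bit widths of the involved types (held constant), and this constant is precisely the value assigned by the cost model to ``any other $s$''; for the degenerate subcase of $\sBind{x}{v}$ or $\sUnbind{x}{v}$ with $v$ all-zero, no gates are emitted and $c^\textrm{MCX}_s = 0$. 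For the composite cases: $\circuit{\sSeq{s_1}{s_2}}$ is the concatenation of $\circuit{s_1}$ and $\circuit{s_2}$, so $N_{\textrm{MCX}}$ is additive and the inductive hypotheses on the two sub-derivations give $\mcxComplexity{s_1} + \mcxComplexity{s_2}$, matching the cost model. The crucial case is $\sIf{x}{s}$: the compilation of the quantum \texttt{if} takes the circuit $\circuit{s}$ and adds $x$ as an additional control qubit to every gate of $\circuit{s}$ (this is the key fact about how control flow compiles, as informally described in \Cref{sec:compilation}). This operation changes neither the number of gates nor their membership in the idealized gate set --- an MCX gate with one more control is still an MCX gate, and an arbitrarily controllable Clifford gate with one more control is still one --- so $N_{\textrm{MCX}}(\circuit{\sIf{x}{s}}) = N_{\textrm{MCX}}(\circuit{s})$, and the inductive hypothesis finishes the case to match $\mcxComplexity{\sIf{x}{\sStmt}} = \mcxComplexity{\sStmt}$.

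\textbf{Main obstacle.} The only real content is verifying that the \texttt{if}-compilation genuinely acts by adding one control bit per gate and does not, for instance, introduce ancilla-manipulating Clifford gates or restructure the circuit in a way that changes the gate count in the idealized gate set. This requires unfolding the precise definition of $\circuit{\sIf{x}{s}}$ from \Cref{sec:full-circuit-semantics} and confirming it is exactly ``control every gate of $\circuit{s}$ on $x$.'' Everything else is bookkeeping. One should also note the ``up to choices for $c^\textrm{MCX}_s$'' qualifier simply acknowledges that the primitive-operation constants are implementation-defined, so the base-case equalities hold by definition once those constants are fixed to agree with the emitted gadgets; there is no freedom left to exploit once the circuit semantics is pinned down. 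The derived forms (\texttt{with}-\texttt{do}, \texttt{if}-\texttt{else}, allocation) need no separate treatment, as they desugar to the core constructs already covered, and $\reverse{s}$ has the same gate count as $s$ since reversal only inverts and reorders gates.
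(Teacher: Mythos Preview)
Your proposal is correct and takes essentially the same approach as the paper's own proof, which is a one-line appeal to induction on $\circuit{s}$ with the remark that the significant case is \texttt{if}. You have simply spelled out the cases that the paper leaves implicit, and your identification of the key fact---that adding a control bit to each gate preserves membership in the idealized gate set and hence the gate count---is exactly the ``as explained above'' the paper refers to.
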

\begin{proof}
By induction on the definition of $\circuit{s}$. The significant case is \texttt{if}, as explained above.
\end{proof}

\paragraph{T-Complexity}
We denote by $\tComplexity{s}$ the \tgate{}-complexity of the program $s$, which is formally defined as the number of \tgate{} gates in its compiled circuit $\circuit{s}$ when expressed in the Clifford+\tgate{} gate set:
\begin{alignat*}{5}
\tComplexity{\sSkip} &= 0 & \tComplexity{\sSeq{s_1}{s_2}} &= \tComplexity{s_1} + \tComplexity{s_2} \\
\tComplexity{\sIf{x}{\sSeq{s_1}{s_2}}} &= \tComplexity{\sIf{x}{s_1}} + \tComplexity{\sIf{x}{s_2}} \quad & \tComplexity{\sIf{x}{\sHadamard{y}}} &= c^{\tgate}_\mathit{CH} \\
\tComplexity{\sIf{x}{\sBind{y}{v}}} &= \tComplexity{\sIf{x}{\sUnbind{y}{v}}} = 0 \text{ for value $v$} \quad & \\
\tComplexity{\sIf{x}{\sStmt}} &= c^{\tgate}_\textrm{ctrl} * \mcxComplexity{s} + \tComplexity{s} \text{ for other $s$} \quad & \tComplexity{\sStmt} &= c^{\tgate}_s \text{ for other $s$}
\end{alignat*}
where $0 \le c^{\tgate}_s = O(1)$ represents the number of \tgate{} gates used by the primitive operation $s$, which is determined by the implementation of $s$. Simple $s$ such as $\sBind{x}{v}$ and $\sHadamard{x}$ have $c^{\tgate}_s = 0$, whereas others such as $\sBind{x}{\eBinop{y}{\oMul}{z}}$ for which an arithmetic circuit must be instantiated have $c^{\tgate}_s > 0$.
The constant $0 < c^{\tgate}_\mathit{CH} = O(1)$ represents the number of \tgate{} gates required to implement a controlled-Hadamard gate. Using the construction of \citet[Figure 17]{lee2021}, we have $c^{\tgate}_\mathit{CH} = 8$.
The constant $0 < c^{\tgate}_\textrm{ctrl} = O(1)$ represents the number of \tgate{} gates required to add an additional control bit to a multi-controlled gate. Using the decompositions in \Cref{fig:toffoli-decomposition,fig:mcx-decomposition}, we have $c^{\tgate}_\textrm{ctrl} = 2 \times 7 = 14$.

\begin{theorem}[\tgate{}-Complexity Soundness]
If $s$ is well-formed, i.e.\ $\stmtok{\ctx}{s}{\ctx'}$, then the number of \tgate{} gates in $\circuit{s}$ is equal to $\tComplexity{s}$, up to choices for the constants $c^\textrm{MCX}_s$, $c^{\tgate}_s$, $c^{\tgate}_\mathit{CH}$, and $c^{\tgate}_\textrm{ctrl}$.
\end{theorem}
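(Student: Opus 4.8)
The plan is to prove the statement by structural induction on the definition of $\circuit{s}$ (equivalently, on the well-formedness derivation $\stmtok{\ctx}{s}{\ctx'}$), with the clauses of the induction mirroring the defining equations of $\tComplexity{\cdot}$ and making essential use of the already-established MCX-Complexity Soundness theorem. The cases $s = \sSkip$ and $s = \sSeq{s_1}{s_2}$ are immediate: $\circuit{\sSkip}$ is empty, and $\circuit{\sSeq{s_1}{s_2}}$ is the concatenation of $\circuit{s_1}$ and $\circuit{s_2}$, so the \tgate{} count is additive and the inductive hypotheses close the case. For a primitive $s$ that is not an \texttt{if} — a bind, un-bind, Hadamard, swap, or memory swap — the circuit $\circuit{s}$ is the fixed gadget emitted by the compiler, and we take $c^{\tgate}_s$ to be its number of \tgate{} gates; this is exactly the latitude granted by ``up to choices for the constants.''

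The substantive case is $s = \sIf{x}{s'}$, which requires unfolding how the circuit semantics compiles a quantum \texttt{if}: $\circuit{\sIf{x}{s'}}$ is obtained from $\circuit{s'}$ by adding $x$ as one more control to every gate. I would then perform a secondary case analysis on $s'$, following the clauses of $\tComplexity{\sIf{x}{\cdot}}$. When $s' = \sSeq{s_1}{s_2}$, adding a control commutes with sequential composition, so $\circuit{\sIf{x}{s'}}$ decomposes as $\circuit{\sIf{x}{s_1}}$ followed by $\circuit{\sIf{x}{s_2}}$ and the inductive hypotheses apply. When $s' = \sHadamard{y}$, the added control turns the single $H$ into a controlled-$H$, which following \citet[Figure 17]{lee2021} uses $c^{\tgate}_\mathit{CH}$ \tgate{} gates. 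When $s'$ is $\sBind{y}{v}$ or $\sUnbind{y}{v}$ for a value $v$, $\circuit{s'}$ is a sequence of NOT gates (or empty), whose controlled forms are CNOT gates (or empty) — all Clifford — so no \tgate{} gates are added. For every remaining $s'$ — arithmetic binds, comparisons, memory swaps, and crucially nested \texttt{if}-statements — I invoke MCX-Complexity Soundness to conclude that $\circuit{s'}$, expressed in the idealized gate set, consists of exactly $\mcxComplexity{s'}$ controllable-Clifford gates; adding one more control to each of these leaves the pre-existing $\tComplexity{s'}$ \tgate{} gates intact and introduces $c^{\tgate}_\textrm{ctrl}$ new \tgate{} gates per gate, yielding the claimed $c^{\tgate}_\textrm{ctrl} \cdot \mcxComplexity{s'} + \tComplexity{s'}$. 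The nested-\texttt{if} subcase is absorbed here because $\mcxComplexity{\sIf{y}{s''}} = \mcxComplexity{s''}$, so the recurrence telescopes correctly through arbitrarily deep nesting (and the induction is well-founded on structural size, since $\sIf{x}{s_i}$ and $\sIf{y}{s''}$ are smaller than $\sIf{x}{s'}$).

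The main obstacle I anticipate is pinning down the per-control \tgate{} cost precisely enough to obtain an \emph{equality} rather than merely an asymptotic bound: the increment from adding one control to a $k$-controlled MCX gate equals $c^{\tgate}_\textrm{ctrl}$ only once $k \ge 2$, whereas controlling a bare NOT or a CNOT is cheaper. Discharging this cleanly requires being careful about exactly which idealized gates the compiler emits for each primitive and about the precise form of the controlled-operation construction in the circuit semantics of \Cref{sec:full-circuit-semantics}, so that the choices of $c^\textrm{MCX}_s$, $c^{\tgate}_s$, $c^{\tgate}_\mathit{CH}$, and $c^{\tgate}_\textrm{ctrl}$ are mutually consistent with those already fixed by MCX-Complexity Soundness. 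Once that alignment is in place, the remainder of the induction is routine equational bookkeeping.
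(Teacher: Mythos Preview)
Your proposal is correct and follows essentially the same approach as the paper: structural induction on $\circuit{s}$, with the significant work concentrated in the $\sIf{x}{s'}$ case and a secondary case analysis on $s'$ that singles out $\sBind{y}{v}$/$\sUnbind{y}{v}$, $\sHadamard{y}$, and the remaining forms where the MCX count governs the added \tgate{} cost. Your treatment is in fact more explicit than the paper's brief sketch, and the obstacle you flag about the per-control increment being non-uniform for small $k$ is exactly the slack that the paper absorbs into the phrase ``up to choices for the constants.''
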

\begin{proof}
By induction on $\circuit{s}$. There are three significant cases. The first is $\sIf{x}{\sBind{y}{v}}$ and $\sIf{x}{\sUnbind{y}{v}}$, which add a control bit to a circuit $\circuit{\sBind{y}{v}}$ or $\circuit{\sUnbind{y}{v}}$ respectively that does not contain any controlled or Hadamard gates, meaning that the resulting \tgate{}-complexity is zero.

The second is $\sIf{x}{\sHadamard{y}}$, a controlled-Hadamard gate with cost $c^{\tgate}_\mathit{CH}$ by definition.

The third is $\sIf{x}{\sStmt}$ where $\sStmt$ is $\sBind{y}{e}$ or $\sUnbind{y}{e}$ or $\sIf{y}{\sStmt'}$ or $\sSwap{y}{z}$ or $\sMemSwap{y}{z}$.
In these cases, the number of gates in $\circuit{\sStmt}$ that are not Clifford when one more control bit is added is proportional to $\mcxComplexity{\sStmt}$, and adding a control for $x$ incurs a \tgate{}-complexity of $c^{\tgate}_\textrm{ctrl}$ at each such gate.
\end{proof}

\section{Program-Level Optimizations} \label{sec:optimizations}

In this section, we present \emph{program-level optimizations} for quantum programs that utilize control flow in superposition. Using these optimizations, a developer can rewrite a program to reduce its \tgate{}-complexity, predict the \tgate{}-complexity of the optimized program using the cost model, and then compile the program to an efficient circuit using a straightforward strategy. Forms of these optimizations appear in prior work~\citep{ittah2022,steiger2018,seidel2022}, and we present in this section a novel unification of these optimizations as program rewrite rules.

\subsection{Conditional Flattening Optimization} \label{sec:conditional-flattening}

The conditional flattening optimization identifies instances in which control bits are introduced by nested \texttt{if}-statements and can be optimized by flattening the structure of \texttt{if}-statements. Specifically, the optimization performs these following program rewrite rules whenever possible:
\begin{align*}
\sIf{x}{\sIf{y}{s}} &\rightsquigarrow \sWith{\sBind{x'}{\eBinop{x}{\oAnd}{y}}}{\sIf{x'}{s}} \\
\sIf{x}{\sSeq{s_1}{s_2}} &\rightsquigarrow \sSeq{\sIf{x}{s_1}}{\sIf{x}{s_2}}
\end{align*}
Whereas the original program incurs many control bits over $s$, the optimized program computes a temporary value and uses it to control $s$ using only one bit, yielding an asymptotic improvement:

\begin{theorem} \label{thm:nested-conditionals-efficiency}
When $\circuit{s}$ contains $k$ MCX gates with at least one control and $s$ falls under $n$ levels of nested \texttt{if}, conditional flattening reduces the \tgate{}-complexity of the program from $O(kn)$ to $O(k + n)$.
\end{theorem}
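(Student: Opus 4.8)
The plan is to compute, via the cost model of \Cref{sec:cost-model}, the \tgate{}-complexity of the program in its original nested form and in its flattened form, and compare. First I would note that the second rewrite rule, $\sIf{x}{\sSeq{s_1}{s_2}} \rightsquigarrow \sSeq{\sIf{x}{s_1}}{\sIf{x}{s_2}}$, is cost-neutral, since it exactly mirrors the cost model clause $\tComplexity{\sIf{x}{\sSeq{s_1}{s_2}}} = \tComplexity{\sIf{x}{s_1}} + \tComplexity{\sIf{x}{s_2}}$ together with the additivity of $\mcxComplexity{\cdot}$. Applying this rule exhaustively pushes every \texttt{if} down onto single statements, so the occurrence of $s$ in question ends up under a chain $P \equiv \sIf{x_1}{\sIf{x_2}{\dots \sIf{x_n}{s}}}$ of exactly $n$ conditionals; viewing $s$ as a sequence $p_1; \dots; p_m$ of primitives, the analysis below applies to each $p_j$ and is summed using additivity again. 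It therefore suffices to analyze $P$, and I take $k = \Theta(\mcxComplexity{s})$ and $\tComplexity{s} = O(k)$, the generic situation the theorem's phrasing singles out by counting the MCX gates with at least one control.

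For the original program I would apply the cost model's \texttt{if}-clause $n$ times from the outside in, using that $\mcxComplexity{\sIf{x}{s'}} = \mcxComplexity{s'}$ for all $s'$ and that $\tComplexity{\sIf{x}{s'}} = c^{\tgate}_\textrm{ctrl} \cdot \mcxComplexity{s'} + \tComplexity{s'}$ whenever $s'$ is neither a literal bind/unbind nor a bare Hadamard. Each of the outer $n-1$ layers wraps an \texttt{if} and hence lands in this ``other $s$'' case, contributing one more $c^{\tgate}_\textrm{ctrl} \cdot \mcxComplexity{s}$ term, while the innermost layer contributes $c^{\tgate}_\textrm{ctrl} \cdot \mcxComplexity{s} + \tComplexity{s}$ (the $O(1)$ Hadamard and literal-bind exceptions can occur only at this innermost level and shift the result by at most a constant). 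Telescoping yields $\tComplexity{P} = n \cdot c^{\tgate}_\textrm{ctrl} \cdot \mcxComplexity{s} + \tComplexity{s}$, which is $\Theta(kn)$, in particular $O(kn)$.

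For the rewritten program I would apply the first rewrite rule, $\sIf{x}{\sIf{y}{s}} \rightsquigarrow \sWith{\sBind{x'}{\eBinop{x}{\oAnd}{y}}}{\sIf{x'}{s}}$, to the two outermost conditionals of $P$, producing a \texttt{with}-block that binds a fresh $x'_1$ equal to $x_1 \wedge x_2$ around a depth-$(n-1)$ chain; iterating this $n-1$ times (re-running the second rule to re-expose a redex whenever a body is a sequence) and then unfolding $\sWith{a}{b}$ as $\sSeq{a}{\sSeq{b}{\reverse{a}}}$, I obtain the program $b_1;\, b_2;\, \dots;\, b_{n-1};\, \sIf{x'_{n-1}}{s};\, \reverse{b_{n-1}};\, \dots;\, \reverse{b_1}$, where $b_i \equiv \sBind{x'_i}{\eBinop{x'_{i-1}}{\oAnd}{x_{i+1}}}$ and $x'_0 \equiv x_1$. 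Each $b_i$ and each $\reverse{b_i}$ is a primitive conjunction of cost $O(1)$ and, crucially, none of them is under an \texttt{if}, whereas $\sIf{x'_{n-1}}{s}$ is a \emph{single} conditional contributing $c^{\tgate}_\textrm{ctrl} \cdot \mcxComplexity{s} + \tComplexity{s}$. Summing, the \tgate{}-complexity of the rewritten program is $2(n-1) \cdot O(1) + c^{\tgate}_\textrm{ctrl} \cdot \mcxComplexity{s} + \tComplexity{s} = O(n) + O(k) = O(k + n)$, as claimed.

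The remaining routine work is: confirming the $O(1)$ cost-model edge cases do not disturb either telescoping sum; confirming the interleaved use of the two rewrite rules really reaches the fully flattened form, which holds because the first rule strictly lowers the depth of the outermost \texttt{if} and the second only reassociates sequences, so the process terminates at depth one; and observing that each fresh $x'_i$ equals the conjunction $x_1 \wedge \dots \wedge x_{i+1}$, so the surviving \texttt{if} is guarded by exactly the original conjunction and the \texttt{with}-blocks return every temporary to zero, yielding preservation of well-formedness and semantics. I expect the main obstacle to be the simultaneous bookkeeping over nested conditionals \emph{and} sequences: one must commute the \texttt{if}-distribution of the second rule past the nesting carefully enough that both the ``before'' telescoping and the ``after'' flattening decompose cleanly, term by term, over the primitive statements comprising $s$, rather than silently invoking cost-neutrality where the second rule still interacts with a not-yet-flattened inner \texttt{if}.
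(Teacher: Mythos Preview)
Your proposal is correct and follows essentially the same approach as the paper: the paper's proof is a two-sentence sketch (``By induction on the structure of $s$. For each of the $n-1$ layers of \texttt{if} that is removed, the \tgate{}-complexity of the program reduces by $k$, while the inserted \texttt{with}-block has $O(1)$ \tgate{}-complexity.''), and your argument is precisely the explicit unfolding of that sketch via the cost model, arriving at the same telescoping $n \cdot c^{\tgate}_\textrm{ctrl} \cdot \mcxComplexity{s} + \tComplexity{s}$ before and $2(n-1) \cdot O(1) + c^{\tgate}_\textrm{ctrl} \cdot \mcxComplexity{s} + \tComplexity{s}$ after. Your additional care with the cost-neutrality of the sequence-distribution rule, the edge cases for literal binds and Hadamard, and the termination of the rewriting process goes beyond what the paper spells out but is consistent with it.
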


\begin{proof}
By induction on the structure of $s$. For each of the $n-1$ layers of \texttt{if} that is removed, the \tgate{}-complexity of the program reduces by $k$, while the inserted \texttt{with}-block has $O(1)$ \tgate{}-complexity.
\end{proof}

We next show that this program-level optimization preserves the circuit semantics of a program with respect to its free variables, as formalized by the following definition:
\newsavebox{\slashwire}
\savebox{\slashwire}{
\begin{quantikz}
    & \qwbundle{}
\end{quantikz}
}
\newcommand{\varwire}{\smash{\hspace{-0.3em}{$\oset{\hspace{1.7em}\tiny k}{\usebox{\slashwire}}$}\hspace{-0.3em}}}
\newsavebox{\tripwire}
\savebox{\tripwire}{
\begin{quantikz}[classical gap=0.075cm]
    \setwiretype{b} & \qw{}
\end{quantikz}
}
\newcommand{\varswire}{\hspace{-0.3em}{\usebox{\tripwire}}\hspace{-0.3em}}
\begin{figure}
\begin{subfigure}[b]{0.56\textwidth}
\resizebox{\textwidth}{!}{
\begin{quantikz}[row sep={0.75cm,between origins},classical gap=0.075cm]
\ghost{T} \\
\lstick{$\ket{x}$} & \ctrl{2} & \rstick{$\ket{x}$} \\
\lstick{$\ket{y}$} & \ctrl{1} & \rstick{$\ket{y}$} \\
\setwiretype{b} & \gate{\circuit{s}} &
\end{quantikz}
\raisebox{-0.1em}{\huge =}\hspace*{-0.05em}%
\begin{quantikz}[column sep=0.2cm,row sep={0.75cm,between origins},classical gap=0.075cm]
\lstick{$\ket{x'}$} & \gate{\notgate} & \ctrl{3} & \gate{\notgate} & \rstick{$\ket{x'}$} \\
\lstick{$\ket{x}$} & \ctrl{-1} & & \ctrl{-1} & \rstick{$\ket{x}$} \\
\lstick{$\ket{y}$} & \ctrl{-2} & & \ctrl{-2} & \rstick{$\ket{y}$} \\
& \setwiretype{b} & \gate{\circuit{s}} & &
\end{quantikz}
}
\caption{Conditional flattening optimization.} \label{fig:nested-conditionals-circuit}
\end{subfigure}%
\hspace*{\fill}%
\begin{subfigure}[b]{0.44\textwidth}
\resizebox{\textwidth}{!}{
\raisebox{-3.2em}{\huge =}%
\begin{quantikz}[column sep=0.3cm,row sep={0.75cm,between origins},classical gap=0.075cm]
\lstick{$\ket{x}$} & \ctrl{1} & \ctrl{1} & \ctrl{1} & \ghost{\circuit{s_1}} \rstick{$\ket{x}$} \\
\setwiretype{b} & \gate{\circuit{s_1}} & \gate{\circuit{s_2}} & \gate{\circuit{s_1}\smash{{}^\dagger}} & \\[1em]
\lstick{$\ket{x}$} & & \ctrl{1} & & \ghost{\circuit{s_1}} \rstick{$\ket{x}$} \\
\setwiretype{b} & \gate{\circuit{s_1}} & \gate{\circuit{s_2}} & \gate{\circuit{s_1}\smash{{}^\dagger}} &
\end{quantikz}
}
\caption{Conditional narrowing optimization.} \label{fig:conditional-inverses-circuit}
\end{subfigure}
\caption{Circuit equivalence rules that hold by direct reasoning on quantum circuits and visually demonstrate the soundness of the program-level optimizations. The notation \varswire{} denotes a collection of many registers.} \label{fig:optimizations-circuits}
\end{figure}%
\begin{definition}[Circuit Equivalence]
Given a set $X$ of variables, we say that register files $\reg_1$ and $\reg_2$ are \emph{equivalent}, denoted $\reg_1 \equiv_X \reg_2$, when they map the variables in $X$ to equal values respectively and all other variables to zero.
Given two sets $X, X'$ of variables, we say that circuits $\mathcal{C}_1$ and $\mathcal{C}_2$ are \emph{equivalent}, denoted $X \vdash \mathcal{C}_1 \equiv \mathcal{C}_2 \dashv X'$, when given any memory $M$ and two register files $R_1$ and $R_2$ such that $R_1 \equiv_X R_2$, we have $\mathcal{C}_1\ket{R_1, M} = \ket{R'_1, M'}$ and $\mathcal{C}_2\ket{R_2, M} = \ket{R'_2, M'}$ where $R'_1 \equiv_{X'} R'_2$.
\end{definition}

\begin{theorem}[Conditional Flattening Soundness]
Assume $\stmtok{\ctx}{\sIf{x}{\sIf{y}{s}}}{\ctx'}$. Then, we have $\domain{\ctx} \vdash \circuit{\sIf{x}{\sIf{y}{s}}} \equiv \circuit{\sWith{\sBind{x'}{\eBinop{x}{\oAnd}{y}}}{\sIf{x'}{s}}} \dashv \domain{\ctx'}$.
\end{theorem}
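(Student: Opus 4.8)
The plan is to unfold both circuits according to the circuit semantics of \Cref{sec:tower} and compare their action on an arbitrary input $\ket{R,M}$, tracking that outputs may be superpositions but that the fresh ancilla stays disentangled at $\ket 0$. I would first record the two ingredients. (i) A lemma on the \texttt{if}-construct, which follows from the definition of $\circuit{\sIf{x}{s}}$ by induction on $s$ (and is essentially the defining property of controlled execution in the Tower semantics): if $\stmtok{\ctx}{\sIf{x}{s}}{\ctx''}$ then on any basis state $\ket{R,M}$ we have $\circuit{\sIf{x}{s}}\ket{R,M} = \circuit{s}\ket{R,M}$ when $R(x)=1$, and $\circuit{\sIf{x}{s}}\ket{R,M} = \ket{R,M}$ when $R(x)=0$. (ii) The semantics of the reassignment primitives: $\circuit{\sBind{x'}{\eBinop{x}{\oAnd}{y}}}\ket{R,M} = \ket{R[x' \mapsto R(x') \oplus (R(x) \wedge R(y))],\,M}$, and $\sUnbind{x'}{\eBinop{x}{\oAnd}{y}}$ has the same action, being its own reverse as a circuit. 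Combined with the desugaring $\sWith{s_1}{s_2} \equiv \sSeq{s_1}{\sSeq{s_2}{\reverse{s_1}}}$ together with $\reverse{\sBind{x'}{e}} = \sUnbind{x'}{e}$, the right-hand circuit is exactly the composite that XORs $x \wedge y$ into $x'$, runs $\circuit{s}$ controlled on $x'$, and XORs $x \wedge y$ into $x'$ again --- the picture in \Cref{fig:nested-conditionals-circuit}.

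Next I would fix $M$ and $R_1 \equiv_{\domain{\ctx}} R_2$ and compute. Because $x'$ is chosen fresh for the rewrite, $x' \notin \domain{\ctx}$, so $R_2(x')=0$; moreover well-formedness of $\sIf{x}{\sIf{y}{s}}$ forces $x,y \notin \modified{s}$ and $x'$ does not occur in $s$, so $\circuit{s}$ leaves the values of $x$, $y$, and $x'$ untouched in every branch of its (possibly superposed) output. Write $a = R_1(x) = R_2(x)$ and $b = R_1(y) = R_2(y)$ (both are in $\domain{\ctx}$ since the source typechecks), and split on $a \wedge b$. If $a \wedge b = 0$: on the left, the \texttt{if}-lemma applied once or twice gives $\circuit{\sIf{x}{\sIf{y}{s}}}\ket{R_1,M} = \ket{R_1,M}$; on the right, the first reassignment leaves $x'=0$, the controlled block is then the identity by the \texttt{if}-lemma, and the second reassignment again leaves $x'=0$, yielding $\ket{R_2,M}$. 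If $a \wedge b = 1$: on the left the \texttt{if}-lemma (twice) gives $\circuit{s}\ket{R_1,M}$; on the right the first reassignment sets $x'=1$, the controlled block applies $\circuit{s}$ --- which, not touching $x'$, acts identically to $\circuit{s}\ket{R_1,M}$ apart from carrying $x' = 1$ along in every branch --- and since $x,y$ still hold $1$ in every output branch, the second reassignment resets $x'$ to $1 \oplus 1 = 0$ in every branch. In both cases the resulting states agree on the memory and on all variables of $\domain{\ctx'}$ (which is the same $\ctx'$ on both sides, as the trailing $\sUnbind{x'}{\cdots}$ removes $x'$ from the context), while every variable outside $\domain{\ctx'}$ --- in particular $x'$, and any variable declared inside $s$ that is left at $0$ when the conditional is not taken --- is in state $\ket 0$. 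Hence $R'_1 \equiv_{\domain{\ctx'}} R'_2$, which is the claim; I would also note that the rewrite preserves well-formedness with produced context $\ctx'$, so the statement is well-posed.

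The hard part is the $a \wedge b = 1$ branch: the equivalence is exactly as strong as the guarantee that $\sUnbind{x'}{\eBinop{x}{\oAnd}{y}}$ returns $x'$ to $0$ \emph{uniformly across every branch of the superposition} produced by $\circuit{s}$. If $s$ could modify $x$ or $y$, different branches would recompute different values of $x \wedge y$, the ancilla $x'$ would be left entangled with the output, and the two programs would genuinely differ. The proof therefore hinges on invoking $x,y \notin \modified{s}$ and $x' \notin \mathrm{vars}(s)$, both of which are supplied precisely by well-formedness of the source program $\sIf{x}{\sIf{y}{s}}$ and freshness of $x'$; everything else is routine unfolding of the circuit semantics.
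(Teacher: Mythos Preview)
Your proof is correct and is, in essence, the same argument the paper gives: the paper's proof is a single sentence pointing to the circuit identity in \Cref{fig:nested-conditionals-circuit}, which is exactly the ancilla-based equivalence you unfold semantically (compute $x \wedge y$ into a fresh qubit, control $\circuit{s}$ on it, uncompute). Your version simply spells out the case analysis on $a \wedge b$ and the invariants on $x$, $y$, and $x'$ that the paper leaves implicit in the picture.
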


\begin{proof}
The claim follows from a circuit equivalence that we visually depict in \Cref{fig:nested-conditionals-circuit}.
\end{proof}

\subsection{Conditional Narrowing Optimization}

The conditional narrowing optimization identifies instances in which control bits are introduced by a \texttt{with}-\texttt{do} block under an \texttt{if}-statement, which can be optimized by moving the \texttt{if}-statement under the \texttt{do}-block. Specifically, the optimization performs the following rewrite whenever possible:
\begin{align*}
\sIf{x}{\sWith{s_1}{s_2}} \rightsquigarrow \sWith{s_1}{\sIf{x}{s_2}}
\end{align*}
The optimized program unconditionally executes $s_1$ and its reverse, for a constant improvement:

\begin{theorem}
When $\circuit{s_1}$ contains $k$ MCX gates with at least one control, the conditional narrowing optimization reduces the \tgate{}-complexity of the program by an $O(k)$ additive term.
\end{theorem}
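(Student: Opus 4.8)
The plan is to evaluate $\tComplexity{\cdot}$ on both sides of the rewrite using the cost-model equations and take the difference. First I would expand the derived form: $\sWith{s_1}{s_2}$ is $\sSeq{s_1}{\sSeq{s_2}{\reverse{s_1}}}$, so applying the clause for $\tComplexity{\sIf{x}{\sSeq{s_1}{s_2}}}$ twice gives
\[
\tComplexity{\sIf{x}{\sWith{s_1}{s_2}}} \;=\; \tComplexity{\sIf{x}{s_1}} + \tComplexity{\sIf{x}{s_2}} + \tComplexity{\sIf{x}{\reverse{s_1}}},
\]
while the optimized program $\sWith{s_1}{\sIf{x}{s_2}}$ is $\sSeq{s_1}{\sSeq{\sIf{x}{s_2}}{\reverse{s_1}}}$, so the unconditioned sequencing clause gives
\[
\tComplexity{\sWith{s_1}{\sIf{x}{s_2}}} \;=\; \tComplexity{s_1} + \tComplexity{\sIf{x}{s_2}} + \tComplexity{\reverse{s_1}}.
\]
The $\tComplexity{\sIf{x}{s_2}}$ term is syntactically identical on both sides and cancels in the difference, independently of the internal structure of $s_2$.

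Next I would use that $\reverse{\cdot}$ preserves gate counts: it merely reverses gate order and replaces each $\tgate$ by $\tgate^{\dagger}$ (which still has $\tgate$-complexity $1$), leaving Clifford and MCX gates unchanged, so $\tComplexity{\reverse{s_1}} = \tComplexity{s_1}$ and $\mcxComplexity{\reverse{s_1}} = \mcxComplexity{s_1}$. Since every cost-model clause for $\tComplexity{\sIf{x}{\cdot}}$ is built from $\tComplexity{\cdot}$ and $\mcxComplexity{\cdot}$ and commutes with reversal (a one-line induction mirroring the identity $\reverse{\sIf{x}{s_1}} = \sIf{x}{\reverse{s_1}}$), we also get $\tComplexity{\sIf{x}{\reverse{s_1}}} = \tComplexity{\sIf{x}{s_1}}$. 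Substituting, the reduction is exactly
\[
\tComplexity{\sIf{x}{\sWith{s_1}{s_2}}} - \tComplexity{\sWith{s_1}{\sIf{x}{s_2}}} \;=\; 2\bigl(\tComplexity{\sIf{x}{s_1}} - \tComplexity{s_1}\bigr),
\]
the factor of two reflecting that $s_1$ is run once forward and once in reverse and that in the optimized form neither copy is placed under the control of $x$.

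The remaining step — which I expect to be the main obstacle — is to bound $\tComplexity{\sIf{x}{s_1}} - \tComplexity{s_1}$ by $O(k)$, which I would prove by structural induction on $s_1$ following the cost-model clauses: sequencing makes the quantity additive over subterms; a value (un)assignment contributes $0$; a Hadamard contributes the constant $c^{\tgate}_{\mathit{CH}}$; and every other primitive, or a nested \texttt{if}- or swap-statement, contributes $c^{\tgate}_\textrm{ctrl} \cdot \mcxComplexity{\cdot}$ — i.e.\ $c^{\tgate}_\textrm{ctrl}$ for each gate of $\circuit{s_1}$ that becomes non-Clifford once the control for $x$ is added, which are precisely the $k$ MCX gates of $\circuit{s_1}$ already carrying at least one control (together with any Hadamard gates). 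Since $c^{\tgate}_\textrm{ctrl}, c^{\tgate}_{\mathit{CH}} = O(1)$, this sum is $O(k)$, and so is the reduction. The obstacle here is the same ``up to choices for the constants'' slack as in the $\tgate$-Complexity Soundness theorem: $\mcxComplexity{\cdot}$ counts all Clifford+MCX gates of a primitive rather than only its already-controlled gates, and a bare Hadamard incurs cost despite being no MCX gate, so the clean identity ``reduction $= 2c^{\tgate}_\textrm{ctrl}\,k$'' holds only up to constants — which still yields the stated $O(k)$ bound. (Throughout I assume the rewrite is applied to a well-formed program so both sides are well-formed and the cost-model equations apply; the accompanying semantic justification is the corresponding soundness result, provable from the circuit equivalence of \Cref{fig:conditional-inverses-circuit}.)
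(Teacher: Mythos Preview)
Your proposal is correct and follows essentially the same approach as the paper, which gives only a one-sentence proof (``By induction on the structure of $s_1$, where $k$ controls are removed on $s_1$ and its reverse''). Your version expands this into an explicit computation via the cost-model equations, arriving at the same reduction $2(\tComplexity{\sIf{x}{s_1}} - \tComplexity{s_1})$ and bounding it by the same structural induction; the paper simply elides these intermediate steps, and your discussion of the constant-level slack (Hadamards, the mismatch between $\mcxComplexity{\cdot}$ and the count of already-controlled gates) is more careful than the paper itself.
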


\begin{proof}
By induction on the structure of $s_1$, where $k$ controls are removed on $s_1$ and its reverse.
\end{proof}

\begin{theorem}[Conditional Narrowing Soundness]
Let $\stmtok{\ctx}{\sIf{x}{\sWith{s_1}{s_2}} }{\ctx'}$. Then, $\domain{\ctx} \vdash \circuit{\sIf{x}{\sWith{s_1}{s_2}} } \equiv \circuit{\sWith{s_1}{\sIf{x}{s_2}}} \dashv \domain{\ctx'}$.
\end{theorem}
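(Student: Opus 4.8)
The plan is to reduce the theorem, exactly as the proof of conditional flattening soundness reduces to \Cref{fig:nested-conditionals-circuit}, to the single circuit identity drawn in \Cref{fig:conditional-inverses-circuit}, after first unfolding the derived forms on both sides. Expanding $\sWith{s_1}{s_2}$ into $\sSeq{s_1}{\sSeq{s_2}{\reverse{s_1}}}$, the left-hand program is $\sIf{x}{\sSeq{s_1}{\sSeq{s_2}{\reverse{s_1}}}}$. Because the circuit semantics of \Cref{sec:tower} compiles an \texttt{if} over a sequence statement by statement (the same fact behind the identity $\tComplexity{\sIf{x}{\sSeq{s_1}{s_2}}} = \tComplexity{\sIf{x}{s_1}} + \tComplexity{\sIf{x}{s_2}}$ in \Cref{sec:cost-model}), and because $\reverse{\sIf{x}{s}} = \sIf{x}{\reverse{s}}$, the left circuit runs $\circuit{\sIf{x}{s_1}}$, then $\circuit{\sIf{x}{s_2}}$, then $\circuit{\sIf{x}{s_1}}^\dagger$. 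The right program desugars to $\sSeq{s_1}{\sSeq{\sIf{x}{s_2}}{\reverse{s_1}}}$, whose circuit runs $\circuit{s_1}$, then $\circuit{\sIf{x}{s_2}}$, then $\circuit{s_1}^\dagger$. So it suffices to show these two composite operators are equal; the circuit equivalence $\domain{\ctx} \vdash {\cdot} \equiv {\cdot} \dashv \domain{\ctx'}$ then follows since operator equality implies $\equiv$, and the typing rules give both desugared programs the common input context $\ctx$ and output context $\ctx'$ (the variables $s_1$ introduces are returned to zero by $\reverse{s_1}$ on both sides, and $\sIf{x}{s_2}$ has the same effect on the context as $s_2$).

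For the operator equality I would decompose in the computational basis of the control register $x$. From $\stmtok{\ctx}{\sIf{x}{\sWith{s_1}{s_2}}}{\ctx'}$, the typing rule for \texttt{if} yields $x \notin \modified{\sWith{s_1}{s_2}}$, and since $\modified{\sWith{s_1}{s_2}} = \modified{s_1} \cup \modified{s_2}$ this gives $x \notin \modified{s_1}$ and $x \notin \modified{s_2}$. Hence $\circuit{s_1}$ fixes the $x$ register and is block diagonal there, $\circuit{s_1} = \ket{0}\!\bra{0}_x \otimes U_0 + \ket{1}\!\bra{1}_x \otimes U_1$ for unitaries $U_0, U_1$ on the remaining registers --- the two blocks may differ because $s_1$ is still allowed to \emph{read} $x$, which is exactly why one cannot treat $\circuit{s_1}$ as acting on a disjoint tensor factor. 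Adding an $x$-control to every gate, as the semantics of \texttt{if} does, gives $\circuit{\sIf{x}{s_1}} = \ket{0}\!\bra{0}_x \otimes I + \ket{1}\!\bra{1}_x \otimes U_1$, and likewise $\circuit{\sIf{x}{s_2}} = \ket{0}\!\bra{0}_x \otimes I + \ket{1}\!\bra{1}_x \otimes V_1$ with $V_1$ the $x = 1$ block of $\circuit{s_2}$. Composing the three block-diagonal factors, both composites collapse to $\ket{0}\!\bra{0}_x \otimes I + \ket{1}\!\bra{1}_x \otimes \bigl(U_1^\dagger V_1 U_1\bigr)$: in the $x = 1$ block both programs run the same conjugated operator, while in the $x = 0$ block the left program is $I \cdot I \cdot I$ and the right program is $U_0^\dagger \cdot I \cdot U_0$, both equal to $I$.

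The step that carries the real content is this last cancellation: although the optimized program runs $s_1$ and then $\reverse{s_1}$ \emph{unconditionally}, they telescope to the identity precisely on the branch $x = 0$ where the original program did nothing, and they agree gate for gate with the guarded versions on the branch $x = 1$. This is where soundness rests on the $\sWith{}{}$ structure --- it is the fact that the bracketing statement is the \emph{exact} reverse $\reverse{s_1}$, and not some unrelated pair of statements, that makes the block-diagonal conjugation collapse; with an arbitrary pair the $x = 0$ block would not reduce to $I$. I expect the only fiddly part beyond this to be the bookkeeping that desugaring both sides produces well-formed programs with matching input and output contexts, which is routine given the typing rules of \Cref{sec:full-type-system}.
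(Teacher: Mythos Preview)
Your proposal is correct and follows the same approach as the paper, which simply states that the claim follows from the circuit equivalence depicted in \Cref{fig:conditional-inverses-circuit}. You go further than the paper by actually justifying that identity via the block-diagonal decomposition in the control qubit $x$, using $x \notin \modified{s_1}$ to ensure $\circuit{s_1}$ is block-diagonal and then observing the $U_0^\dagger I U_0 = I$ cancellation on the $x=0$ branch; the paper leaves this as a visual assertion.
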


\begin{proof}
The claim follows from a circuit equivalence that we visually depict in \Cref{fig:conditional-inverses-circuit}.
\end{proof}

\section{Implementation: Spire Quantum Compiler} \label{sec:implementation}

As the artifact of this work, we implemented Spire, an extension of the Tower compiler that performs the optimizations of \Cref{sec:optimizations}. In this section, we briefly describe the architecture of the Tower compiler, the transformations added by Spire, and the challenges that arose in implementation.

\paragraph{Compiler Overview}
The Tower compiler has four main stages. First, given a Tower program, the lexer and parser construct its abstract syntax tree. Next, the compiler lowers the surface AST to the \emph{core intermediate representation}, whose syntax is presented in \Cref{sec:tower}. This lowering involves inlining all function calls and translating memory allocation and derived forms to core syntax.

Then, the compiler lowers the core IR to an \emph{abstract circuit} that is analogous to classical assembly, with the abstractions of word-sized registers; arithmetic, logical, memory, and data movement instructions; and instructions controlled by registers. The compiler invokes a register allocator to map IR variables to registers and compiles \texttt{if}-statements to multiply-controlled instructions.

Finally, the compiler lowers the abstract circuit to a \emph{concrete circuit} by instantiating each arithmetic, logical, memory, and data movement instruction as an explicit sequence of MCX gates. The compiler then emits the concrete circuit in the quantum circuit format of~\citet{dotQC}.

\paragraph{Spire Transformations}
We implemented Spire as a compiler pass that transforms the core IR\@.
First, we modified the core IR to add \texttt{with}-\texttt{do} blocks, facilitating the conditional narrowing optimization.
Next, we implemented a compiler pass that rewrites the core IR using the conditional flattening and conditional narrowing optimizations.
As they are simple syntax rewrites, this pass constitutes only 12 lines of OCaml code, which we present in \Cref{sec:ocaml}.
Then, we added a simple compiler pass that flattens the structure of \texttt{with}-\texttt{do} blocks before continuing to the next stage.

\paragraph{Downstream Challenges}
Though the new passes are simple, they required detailed analysis and altered assumptions in the register allocation approach taken by the compiler.
In \Cref{sec:implementation-case-study}, we detail the challenge that arises and our solution, as a case study for quantum compiler developers.

\section{Evaluation} \label{sec:evaluation}

In this section, we evaluate our cost model and optimizations as measured by the \tgate{}-complexity of a benchmark suite of quantum programs. We answer the following research questions:
\begin{enumerate}
    \item [\emph{RQ1.}] How accurately does the cost model predict the asymptotic \tgate{}-complexity of programs?
    \item [\emph{RQ2.}] By how much do the program-level optimizations of conditional flattening and conditional narrowing improve the \tgate{}-complexity of a quantum program?
    \item [\emph{RQ3.}] By how much do quantum circuit optimizers from existing work improve the \tgate{}-complexity of a quantum program after it has been fully compiled to a circuit of logic gates?
    \item [\emph{RQ4.}] What is the effect on compilation time of performing the program-level optimizations, and how does it compare to the effect on compilation time of quantum circuit optimizers?
\end{enumerate}

In \Cref{tbl:benchmarks}, we list the benchmarks that we use throughout this evaluation and include in the paper artifact. They are data structure operations used by quantum algorithms for search~\citep{ambainis2004}, optimization~\citep{bernstein2013}, and geometry~\citep{aaronson2019}, and include the \texttt{length} example from \Cref{sec:example} and others such as insertion into a radix tree-based set.

In \Cref{sec:eval-program-level,sec:eval-circuit-optimizers}, we also introduce \texttt{length-simple}, a simplified version of \texttt{length} that has the same asymptotic \tgate{}-complexity but omits the primitive operations on lines~\ref{line:length-tower-dereference} and~\ref{line:length-tower-addition}. These lines perform a memory dereference and an addition operation respectively. Semantically, dropping them causes the \texttt{length} function to return an incorrect output. For compilation, dropping them results in a circuit whose size has the same asymptotic behavior but is scaled down by a fraction.

The reason we perform this simplification is to enable a comparison to existing quantum circuit optimizers.
Without this simplification, the circuit would be two orders of magnitude larger, meaning that all but one of the existing optimizers we tested would take more than 1 hour to run.

\subsection{RQ1: Accuracy of Cost Model}

\begin{enumerate}
    \item [\emph{RQ1.}] How accurately does the cost model predict the asymptotic \tgate{}-complexity of programs?
\end{enumerate}

\paragraph{Methodology}

To obtain the predicted asymptotic \tgate{}-complexity, we performed the same analysis as in \Cref{sec:example-cost-model}. We performed an asymptotic analysis because the values of constants in the cost function, in particular the costs of primitive operations such as arithmetic and memory, are difficult to determine theoretically and significantly affect the precision of non-asymptotic estimates.

As an example, the function \texttt{insert} in \Cref{tbl:benchmarks} inserts an element into a set data structure that is concretely implemented as a radix tree. This function invokes a string \texttt{compare} operation and a recursive call at each level, all under an \texttt{if}. Because the other operations in the program have equal or less \tgate{}-complexity compared to \texttt{compare}, the overall \tgate{}-complexity of \texttt{insert} is:
\begin{align*}
    \tComplexity[\texttt{insert}]{d}\, ={} & \!\!\!\underbrace{\tComplexity[\texttt{compare}]{d}}_{\text{operations in level}}+\!\underbrace{\mcxComplexity[\texttt{compare}]{d}}_{\text{control flow in level}}\!\! +\ \,\underbrace{\tComplexity[\texttt{insert}]{d-1}}_{\text{recursive call}}\ \,+\!\!\!\!\!\!\!\underbrace{\mcxComplexity[\texttt{insert}]{d-1}}_{\text{control flow over recursive call}} \\
    ={} & \,O(d^2) + O(d) + \tComplexity[\texttt{insert}]{d-1} + O(d^2)
\end{align*}
which solves to $\tComplexity[\texttt{insert}]{d} = O(d^3)$, an asymptotic increase over the MCX-complexity of $O(d^2)$.

To compute the empirical \tgate{}-complexity, we used Spire (\Cref{sec:implementation}) with optimizations off to compile each program to MCX gates. We then counted \tgate{} gates as follows: each MCX with $c \ge 2$ controls corresponds to $2(c - 2) + 1$ Toffoli gates as in \Cref{fig:mcx-decomposition}, and each Toffoli corresponds to 7 \tgate{} gates as in \Cref{fig:toffoli-decomposition}.
To determine the scaling in the recursion depth $n$ or $d$, we repeated the process for depths from 2 to 10 and found the lowest-degree polynomial that exactly fits the \tgate{}-complexities.

To obtain the predicted and empirical MCX-complexity, we performed the same procedure as above, except that we used the MCX-complexity recurrence and counted the number of MCX gates.

\paragraph{Results}
\newcommand{\subname}{\textcolor{gray}{$-$}\ }
\begin{table}
\caption{List of benchmark programs and their MCX and \tgate{}-complexities, in terms of the size $n$ or depth $d = O(\log n)$ of the data structure. We report \tgate{}-complexity both before and after program-level optimizations. ``Predicted'' reports the asymptotic MCX or \tgate{}-complexity predicted by the cost model, and ``Empirical'' reports the MCX or \tgate{}-complexity of the compiled circuit. Large empirical figures are reported in \Cref{sec:full-complexities}.}
\resizebox{\textwidth}{!}{%
\begin{tabular}{ l c @{} c @{} c c @{} c @{} c c @{} c @{} }
\toprule
                                    & \multicolumn{2}{c}{MCX-Complexity} & & \multicolumn{2}{c}{\tgate{}-Complexity Before Optimizations} & & \multicolumn{2}{c}{\tgate{}-Complexity After Optimizations} \\
\cmidrule{2-3} \cmidrule{5-6} \cmidrule{8-9}
Program                             & Predicted & Empirical & & Predicted & Empirical & & Predicted & Empirical \\
\midrule
List & & & & \\
\subname{}\texttt{length}          & $O(n)$ & $2246n + 32$ & & $O(n^2)$ & $15722n^2 + 19292n + 3934$ & & $O(n)$ & $12740n - 42$ \\
\subname{}\texttt{sum}             & $O(n)$ & $2642n + 32$ & & $O(n^2)$ & $18494n^2 + 19628n + 4298$ & & $O(n)$ & $13272n - 42$ \\
\subname{}\texttt{find\_pos}       & $O(n)$ & $2294n + 32$ & & $O(n^2)$ & $16058n^2 - 8820n + 6426$ & & $O(n)$ & $12740n - 42$ \\
\subname{}\texttt{remove}          & $O(n)$ & $4990n + 32$ & & $O(n^2)$ & $34930n^2 + 26376n + 10304$ & & $O(n)$ & $58912n - 12124$ \\
Queue & & & & \\
\subname{}\texttt{push\_back}      & $O(n)$ & $2864n + 32$ & & $O(n^2)$ & $20048n^2 + 11508n + 4634$ & & $O(n)$ & $46256n - 13006$ \\
\subname{}\texttt{pop\_front}      & $O(1)$ & $1452$ & & $O(1)$   & $8456$ & & $O(1)$ & $8456$ \\
String & & & & \\
\subname{}\texttt{is\_prefix}      & $O(n)$ & $4585n + 32$ & & $O(n^2)$ & $64190n^2 - 11529n + 6545$ & & $O(n)$ & $16758n - 42$ \\
\subname{}\texttt{num\_matching}   & $O(n)$ & $6052n + 5516$ & & $O(n^2)$ & $84728n^2 + 129360n + 59710$ & & $O(n)$ & $21826n + 18676$ \\
\subname{}\texttt{compare}         & $O(n)$ & $4633n + 32$ & & $O(n^2)$ & $97293n^2 + 10598n + 4781$ & & $O(n)$ & $17773n - 42$ \\
Set (radix tree) & & & & \\
\subname{}\texttt{insert}          & $O(d^2)$ & $O(d^2)$ (App.~\ref{sec:full-complexities}) & & $O(d^3)$ & $O(d^3)$ (\Cref{sec:full-complexities}) & & $O(d^2)$ & $256914d^2 + 1413244d - 840$ \\
\subname{}\texttt{contains}        & $O(d^2)$ & $O(d^2)$ (App.~\ref{sec:full-complexities}) & & $O(d^3)$ & $O(d^3)$ (\Cref{sec:full-complexities}) & & $O(d^2)$ & $134064d^2 + 687008d - 42$ \\
\bottomrule
\end{tabular}%
}
\label{tbl:benchmarks}
\end{table}
For each benchmark in \Cref{tbl:benchmarks}, the cost model accurately predicts the asymptotic \tgate{}-complexity, as confirmed by the matching empirical \tgate{}-complexity.
In particular, for each benchmark whose MCX-complexity is not constant, meaning the recurrence is nontrivial, it accurately predicts that the \tgate{}-complexity of the unoptimized program is one degree higher than the MCX-complexity.

\subsection{RQ2: Effect of Program-Level Optimizations on \tgate{}-Complexity} \label{sec:eval-program-level}

\begin{enumerate}
    \item [\emph{RQ2.}] By how much do the program-level optimizations of conditional flattening and conditional narrowing improve the \tgate{}-complexity of a quantum program?
\end{enumerate}

\paragraph{Methodology}
For this question, we used Spire to execute each optimization on each benchmark program and found the empirical \tgate{}-complexity by counting \tgate{} gates in the same way as in RQ1.

\paragraph{Results}
In \Cref{tbl:benchmarks}, we present the \tgate{}-complexity of each program after applying both optimizations.
For each benchmark, the optimizations recover a program whose \tgate{}-complexity is equal to the MCX-complexity, as determined both by the cost model and by circuit compilation.

For \texttt{length} and \texttt{length-simplified}, the \tgate{}-complexity improves from quadratic to linear.
In \Cref{fig:optimizer-efficacy-b}, we plot the \tgate{}-complexity of \texttt{length-simplified} after applying each of the optimizations in Spire.
When used alone, conditional narrowing achieves 19.9\% improvement over the original program at depth $n = 10$, and conditional flattening alone achieves 88.2\% improvement. When Spire applies conditional narrowing on top of conditional flattening, conditional narrowing achieves a further 63.0\% improvement, which stacks to 95.6\% improvement end-to-end.

In \Cref{sec:optimization-costs}, we analyze the \tgate{}-complexity that conditional flattening incurs due to its added uncomputation. Across all of the benchmarks in \Cref{tbl:benchmarks} at recursion depth $n = 10$, 0 to 4.81\% (average 0.49\%) of the \tgate{} gates in the final compiled circuit correspond to the uncomputation that is introduced by conditional flattening. At depth $n = 2$, this figure is 0 to 2.85\% (average 0.30\%).

\subsection{RQ3: Effect of Existing Circuit Optimizers on \tgate{}-Complexity} \label{sec:eval-circuit-optimizers}

\begin{enumerate}
    \item [\emph{RQ3.}] By how much do quantum circuit optimizers from existing work improve the \tgate{}-complexity of a quantum program after it has been fully compiled to a circuit of logic gates?
\end{enumerate}

\paragraph{Methodology}

We evaluated the following optimizers: Qiskit~\citep{qiskit}, VOQC~\citep{hietala2021}, Pytket~\citep{sivarajah2021}, Feynman~\citep{amy2014,feynman}, Quartz~\citep{xu2022}, and QUESO~\citep{xu2023}. We also evaluated QuiZX~\citep{quizx}, a fast Rust port of PyZX~\citep{kissinger2020} that produces outputs identical to PyZX\@.\footnote{As part of our evaluation, we ran PyZX for comparison with QuiZX, and observed that they produce circuits with identical \tgate{}-complexity, though PyZX takes more time to produce the output. We thus do not report PyZX results separately.}

First, we used Spire to compile the \texttt{length-simplified} program to a MCX circuit.
Notably, among the optimizers above, only Feynman directly accepts inputs containing MCX gates of arbitrary size, by means of a dedicated pass it provides to convert large MCX gates into Toffoli gates. By contrast, the other optimizers above do not accept MCX gates larger than Toffoli. For these optimizers, we used Feynman to preprocess the circuit into the Clifford+Toffoli or Clifford+CCZ gate sets accepted by each optimizer, without changing its \tgate{}-complexity. Then, we executed each optimizer to generate a Clifford+\tgate{} circuit, and then counted the \tgate{} gates in the resulting circuit.

To the extent possible, we specified configurations that are indicated by prior literature:
\begin{itemize}
\item For Qiskit, we invoked \lstinline{qiskit.compiler.transpile} with \lstinline{optimization_level=3}.
\item For VOQC, we invoked \lstinline{Voqc.Main.optimize_nam}.
\item For Pytket, we invoked two independent modes: \lstinline{pytket.passes.FullPeepholeOptimise} and \lstinline{pytket.passes.ZXGraphlikeOptimisation}, and report them separately below.
\item For Feynman, we invoked two different configurations: \lstinline{feynopt -mctExpand -O2} and \lstinline{feynopt -toCliffordT -O2}, and report them separately below.
\item For QuiZX, we invoked \lstinline{quizx::simplify::full_simp}.
\end{itemize}

\paragraph{Results}
In \Cref{fig:optimizer-efficacy-a}, we plot the \tgate{}-complexity of the \texttt{length-simplified} program at various recursion depths, before and after applying each circuit optimizer.
Of the tested optimizers, 6 of 8 do not asymptotically improve the \tgate{}-complexity of the circuit from quadratic to linear. They achieve 0\% to 71.4\% improvement over the original circuit at depth $n = 10$.\footnote{We note three results in \Cref{fig:optimizer-efficacy-a} that are close but distinct: at depth $n = 10$, VOQC obtains 17530 \tgate{} gates, Pytket ZX obtains 17176 gates, and Feynman \texttt{-toCliffordT} obtains 17166 gates, which is about 2\% fewer than VOQC\@.} Only Feynman \texttt{-mctExpand} and QuiZX obtain linear \tgate{}-complexity, achieving 88.0\% and 93.4\% improvement respectively.
\begin{figure}
\begin{subfigure}{0.5\textwidth}
\begin{tikzpicture}
\begin{axis}[
    xlabel=Recursion depth $n$,
    ylabel=Number of \tgate{} gates,
    scaled y ticks=false,
    ytick distance=2000,
    xtick distance=1,
    ymax=15000,
    width=7cm,
    legend pos=north west,
    legend cell align=left,
    legend style={nodes={scale=0.7, transform shape},cells={align=left},at={(0.5,1)},anchor=north,/tikz/every even column/.append style={column sep=0.4em}},
    legend columns=2,
    domain=2:10,
    samples=100,
    tick label style={font=\small},
    every axis plot/.append style={thick}
]
\addplot[Dark2-A,-,mark=square*] plot coordinates {
    (2,1638)
    (3,4046)
    (4,7854)
    (5,13062)
    (6,19670)
    (7,27678)
    (8,37086)
    (9,47894)
    (10,60102)
};
\addlegendentry{Original}
\addplot[Dark2-G,-,mark=+] plot coordinates {
    (2,1453)
    (3,3583)
    (4,6947)
    (5,11511)
    (6,17275)
    (7,24239)
    (8,32403)
    (9,41767)
    (10,52331)
};
\addlegendentry{Qiskit}
\addplot[Dark2-H,-,mark=*] plot coordinates {
    (2,630)
    (3,1876)
    (4,4284)
    (5,8092)
    (6,13300)
    (7,19908)
    (8,27916)
    (9,37324)
    (10,48132)
};
\addlegendentry{CN alone}
\addplot[Dark2-C,-,mark=diamond] plot coordinates {
    (2,764)
    (3,1570)
    (4,2376)
    (5,3182)
    (6,3986)
    (7,4792)
    (8,5594)
    (9,6400)
    (10,7206)
};
\addlegendentry{F. \texttt{-mctExpand}}
\addplot[Dark2-B,-,mark=Mercedes star flipped] plot coordinates {
    (2,1134)
    (3,1876)
    (4,2618)
    (5,3360)
    (6,4102)
    (7,4844)
    (8,5586)
    (9,6328)
    (10,7070)
};
\addlegendentry{CF alone}
\addplot[Dark2-D,-,mark=asterisk] plot coordinates {
    (2,430)
    (3,848)
    (4,1266)
    (5,1692)
    (6,2112)
    (7,2532)
    (8,2950)
    (9,3372)
    (10,3792)
};
\addlegendentry{QuiZX}
\addplot[Dark2-E,-,mark=pentagon] plot coordinates {
    (2,490)
    (3,756)
    (4,1022)
    (5,1288)
    (6,1554)
    (7,1820)
    (8,2086)
    (9,2352)
    (10,2618)
};
\addlegendentry{Spire (Ours)}
\addplot[Dark2-F,-,mark=triangle] plot coordinates {
    (2,348)
    (3,534)
    (4,720)
    (5,906)
    (6,1092)
    (7,1278)
    (8,1464)
    (9,1650)
    (10,1836)
};
\addlegendentry{Spire + F. \texttt{-mctExpand}}
\end{axis}
\end{tikzpicture}%
\caption{Program-level optimizations in Spire.} \label{fig:optimizer-efficacy-b}
\end{subfigure}%
\hspace*{\fill}
\begin{subfigure}{0.5\textwidth}
\begin{tikzpicture}
\begin{axis}[
    xlabel=Recursion depth $n$,
    scaled y ticks=false,
    ytick distance=10000,
    xtick distance=1,
    width=7cm,
    legend pos=north west,
    legend cell align=left,
    legend style={fill=none,draw=none,nodes={scale=0.7, transform shape}},
    domain=2:10,
    samples=100,
    tick label style={font=\small},
    every axis plot/.append style={thick}
]
\addplot[Dark2-A,-,mark=square*] plot coordinates {
    (2,1638)
    (3,4046)
    (4,7854)
    (5,13062)
    (6,19670)
    (7,27678)
    (8,37086)
    (9,47894)
    (10,60102)
};
\addlegendentry{Original program}
\addplot[Dark2-G,-,mark=+] plot coordinates {
    (2,1453)
    (3,3583)
    (4,6947)
    (5,11511)
    (6,17275)
    (7,24239)
    (8,32403)
    (9,41767)
    (10,52331)
};
\addlegendentry{Qiskit}
\addplot[Set2-B,-,mark=star] plot coordinates {
    (2,972)
    (3,2506)
    (4,4594)
    (5,7908)
    (6,12422)
    (7,18136)
    (8,25050)
    (9,33164)
    (10,42478)
};
\addlegendentry{Pytket peephole}
\addplot[Set2-A,-,mark=o] plot coordinates {
    (2,552)
    (3,1262)
    (4,2384)
    (5,3912)
    (6,5838)
    (7,8158)
    (8,10884)
    (9,14010)
    (10,17530)
};
\addlegendentry{VOQC}
\addplot[Set2-C,-,mark=x] plot coordinates {
    (2,550)
    (3,1258)
    (4,2328)
    (5,3812)
    (6,5682)
    (7,7954)
    (8,10620)
    (9,13676)
    (10,17176)
};
\addlegendentry{Pytket ZX}
\addplot[Set2-D,-,mark=square] plot coordinates {
    (2,548)
    (3,1258)
    (4,2328)
    (5,3806)
    (6,5678)
    (7,7950)
    (8,10620)
    (9,13694)
    (10,17166)
};
\addlegendentry{F. \texttt{-toCliffordT}}
\addplot[Dark2-C,-,mark=diamond] plot coordinates {
    (2,764)
    (3,1570)
    (4,2376)
    (5,3182)
    (6,3986)
    (7,4792)
    (8,5594)
    (9,6400)
    (10,7206)
};
\addlegendentry{F. \texttt{-mctExpand}}
\addplot[Dark2-D,-,mark=asterisk] plot coordinates {
    (2,430)
    (3,848)
    (4,1266)
    (5,1692)
    (6,2112)
    (7,2532)
    (8,2950)
    (9,3372)
    (10,3792)
};
\addlegendentry{QuiZX}
\end{axis}
\end{tikzpicture}
\caption{Existing quantum circuit optimizers.} \label{fig:optimizer-efficacy-a}
\end{subfigure}

\caption{\tgate{}-complexity of \texttt{length-simplified} after program-level optimizations and quantum circuit optimizers. CF, CN, and F.\ abbreviate conditional flattening, conditional narrowing, and Feynman respectively.} \label{fig:optimizer-efficacy}
\end{figure}

We do not plot Quartz and QUESO because the versions of these two optimizers available at the start of our experimentation require several hours to terminate for most of our benchmarks, even when the user specifies a 1-hour timeout. The partial results we obtained indicate that the \tgate{}-complexity of their output circuits is quadratic rather than linear. At depth $n = 5$, Quartz achieves 37\% improvement in \tgate{}-complexity, and at $n = 2$, QUESO achieves 13\% improvement. For more details on the methodology and results for these optimizers, please see \Cref{sec:quartz-queso}.

Notably, only select configurations of Feynman obtain asymptotic improvement in \tgate{}-complexity. In \Cref{fig:optimizer-efficacy-a}, we plot the \tgate{}-complexity Feynman obtains using two different flags: \texttt{-toCliffordT}, which is quadratic, and \texttt{-mctExpand}, which is linear. The difference is that the first configuration translates the circuit to the Clifford+\tgate{} gate set before applying gate simplifications, whereas the second simplifies the original circuit in terms of Toffoli gates before translating to Clifford+\tgate{}.

Spire's program-level optimizations also synergize with existing quantum circuit optimizers to achieve better results than either alone. In \Cref{fig:optimizer-efficacy-b}, we also plot the \tgate{}-complexity of applying Spire's optimizations followed by Feynman \texttt{-mctExpand}, which for \texttt{length-simplified} achieves 96.9\% improvement over the original program compared to 88.0\% for Feynman alone. In \Cref{tbl:performance}, we summarize the \tgate{}-complexity improvement of running either Feynman \texttt{-mctExpand} or QuiZX after Spire's optimizations. The latter achieves 98.1\% improvement compared to 93.4\% for QuiZX alone.

In \Cref{sec:more-evaluation}, we present more results showing that when the conditional narrowing optimization is used before Feynman or QuiZX, the output circuits are better than Feynman or QuiZX alone. These results indicate that even when a circuit optimizer achieves asymptotically efficient circuits, it can still benefit from the constant improvements provided by conditional narrowing.

\subsection{RQ4: Effect of Optimizations on Compilation Time}

\begin{enumerate}
    \item [\emph{RQ4.}] What is the effect on compilation time of performing the program-level optimizations, and how does it compare to the effect on compilation time of quantum circuit optimizers?
\end{enumerate}

\paragraph{Methodology}
To answer this question, we measured the time taken by Spire to emit a circuit for both the \texttt{length} and \texttt{length-simplified} programs, with program-level optimizations enabled or disabled. Then, we measured the time taken by Feynman \texttt{-mctExpand} and QuiZX to optimize the circuit emitted by Spire with optimizations enabled or disabled.
All timings are reported as the mean and standard error of 5 runs on one core of an AMD Threadripper 1920X and 32 GB of RAM\@.

\paragraph{Results}
Given the original \texttt{length} program at depth $n = 10$, Spire takes \SI{0.08}{\second} to emit a circuit without performing program-level optimizations, and \SI{0.05}{\second} with the optimizations.
The reason that compilation time decreases is that while the optimizations take tens of microseconds to perform, they enable the compiler to save significant time generating controls in the output circuit.

In \Cref{tbl:performance}, we summarize the performance of Feynman \texttt{-mctExpand} and QuiZX on each circuit.
When executed on the original \texttt{length} circuit, Feynman takes \SI{121.96 \pm 0.08}{\second}; QuiZX exceeds available memory and does not terminate after 72 hours.
By comparison, Spire alone yields comparable circuits in \SI{0.05}{\second}, which is $2400\times$ faster than Feynman.
When Spire's optimizations are run before Feynman, the smaller input circuit that is produced enables Feynman to take only \SI{17.05 \pm 0.01}{\second}, a 7$\times$ improvement.
These circuits remain large enough for QuiZX to be memory constrained.

\begin{table}
\caption{Summary of comparison and synergy between Spire and existing circuit optimizers, in terms of \tgate{}-complexity reduction and compilation time. Figures are given for both \texttt{length} and \texttt{length-simplified} programs at depth $n = 10$. We show only optimizers that achieve linear \tgate{}-complexity.}
\resizebox{\textwidth}{!}{
\begin{tabular}{ l c c c c c }
\toprule
& \multicolumn{2}{c}{\texttt{length-simplified}} & & \multicolumn{2}{c}{\texttt{length}} \\
\cmidrule{2-3} \cmidrule{5-6}
& \tgate{} Reduction & Compile Time & & \tgate{} Reduction & Compile Time \\
\midrule
Feynman \texttt{-mctExpand} & 88.0\% & \SI{0.54 \pm 0.00}{\second} & & 92.2\% & \SI{121.96 \pm 0.08}{\second} \\
QuiZX & 93.4\% & \SI{3510.80 \pm 1.97}{\second} & & \multicolumn{2}{c}{\textcolor{gray}{(consumes >32 GB RAM)}} \\
Spire (Ours) & 95.6\% & \textbf{\SI{0.01 \pm 0.00}{\second}} & & 92.8\% & \textbf{\SI{0.05}{\second}} \\
Spire + Feynman \texttt{-mctExpand} & 96.9\% & \SI{0.08 \pm 0.00}{\second} & & \textbf{94.7\%} & \SI{17.05 \pm 0.01}{\second} \\
Spire + QuiZX & \textbf{98.1\%} & \SI{1.18 \pm 0.00}{\second} & & \multicolumn{2}{c}{\textcolor{gray}{(consumes >32 GB RAM)}} \\
\bottomrule
\end{tabular}%
}%
\label{tbl:performance}%
\vspace*{-4pt}%
\end{table}

\subsection{Discussion} \label{sec:discussion}

First, our results indicate that VOQC, Quartz, Pytket ZX, and Feynman \texttt{-toCliffordT} obtain an intermediate result in \tgate{}-complexity that is higher than Feynman \texttt{-mctExpand} and QuiZX and lower than Qiskit and Pytket peephole. An explanation consistent with these results is that the first four optimizers implement the optimization of rotation merging~\citep{nam2018} that merges phase rotations across an arbitrary number of gates, whereas the last two do not.

Next, one explanation for why Feynman \texttt{-mctExpand} and QuiZX reduce the asymptotic \tgate{}-complexity of the program in our results is that they successfully identify and exploit the structure of Toffoli gates. Specifically, Feynman \texttt{-mctExpand} first cancels Toffoli gates in the circuit before translating them to Clifford+\tgate{} gates. Meanwhile, QuiZX uses an internal representation known as ZX-calculus~\citep{kissinger2020} that discovers long-range circuit structure at the expense of compile time, which in \Cref{tbl:performance} is 14$\times$--6500$\times$ longer than Feynman.\footnote{We note that Pytket ZX does not reduce the asymptotic \tgate{}-complexity in our results even though it also uses the ZX-calculus. This discrepancy may be due to different optimization choices taken by Pytket and QuiZX.}

By contrast, Qiskit, Pytket, VOQC, Quartz, and QUESO do not perform rewrites at the level of Toffoli gates. They instead either require the input to consist only of Clifford+\tgate{} gates, or decompose all Toffoli gates in the input to them. As we show next, the value of the structure of Toffoli gates is that cancelling Toffoli gates can capture the effect of conditional flattening. By contrast, cancelling adjacent gates no longer captures this effect after Toffoli gates are lowered to Clifford+\tgate{} gates.

\paragraph{Conditional Flattening}
In \Cref{fig:nested-conditionals-naive}, we present a sub-program of \Cref{fig:simple-if} with only the assignment to \texttt{a} that is controlled by \texttt{x}, \texttt{y}, and \texttt{z}, and its corresponding sub-circuit from \Cref{fig:simple-if-circuit-full}. We also depict the result of decomposing its MCX gates to Toffoli gates via the rule in \Cref{fig:mcx-decomposition}.
Compared to \Cref{fig:simple-if-circuit-optimized}, the final circuit in \Cref{fig:nested-conditionals-naive} incurs additional \tgate{}-complexity from Toffoli gates.

Now suppose that the final circuit in \Cref{fig:nested-conditionals-naive} is given to a quantum circuit optimizer. In general, to recover an asymptotically efficient circuit, the optimizer must eliminate all but a small number of Toffoli gates. In \Cref{fig:nested-conditionals-naive}, it must eliminate the redundant self-inverse gates in \textcolor{gray}{gray}.

The problem is that adjacent Toffoli gates become difficult to identify when Toffoli gates have been decomposed into Clifford+\tgate{} gates.
In \Cref{fig:toffoli-non-cancellability}, we depict the decomposition of a pair of Toffoli gates into a sequence of 32 Clifford+\tgate{} gates by the standard rule in \Cref{fig:toffoli-decomposition}. Because the decomposition of each Toffoli is asymmetric, the circuit optimizer cannot reduce this sequence to an empty circuit by merely cancelling adjacent Clifford+\tgate{} gates.\footnote{In particular, a GitHub issue open since 2021 (\url{https://github.com/Qiskit/qiskit/issues/6740}) describes the inability of Qiskit to optimize away the Clifford+\tgate{} sequence corresponding to adjacent Toffoli gates as depicted in \Cref{fig:toffoli-non-cancellability}.} To sidestep this problem, Feynman and \citet{maslov2005,nam2018} perform rewrites on Toffoli gates before they are decomposed to Clifford+\tgate{} gates. For other alternative approaches, see \Cref{sec:optimization-alternatives}.

\paragraph{Conditional Narrowing}
Even worse, conditional narrowing cannot be captured by a quantum circuit optimizer that acts on gate windows of any finite size. The rule in \Cref{fig:conditional-inverses-circuit} removes control bits on $\circuit{s_1}$ and $\smash{\circuit{s_1}^\dagger}$ when these sequences have been identified as inverses. The problem is that $\circuit{s_2}$ lies between them and can be of arbitrary length, meaning that without program structure, discovering the relationship between $\circuit{s_1}$ and $\smash{\circuit{s_1}^\dagger}$ requires a window of unbounded size.

This fact contributes to an explanation for why in \Cref{sec:eval-circuit-optimizers}, some of the tested circuit optimizers leave behind some fraction of \tgate{} gates that are otherwise captured by the conditional narrowing optimization. In principle, a circuit optimizer can capture conditional narrowing using rewrites over an unbounded number of gates, such as by an appropriate implementation of rotation merging.

\newcommand{\graynot}{\gate[style={draw=gray},label style=gray]{\notgate}}%
\newcommand{\grayctrl}[1]{\ctrl[style={gray,fill=gray},wire style=gray]{#1}}%
\newcommand{\grayslice}{\slice[style={gray,draw=gray}]{}}%
\newcommand{\grpi}{\gategroup[2,steps=3,style={dashed,inner xsep=2pt,inner ysep=2pt},background,label style={label position=above,anchor=north,yshift=2.25cm}]{\lstinline[style=color]{let a <- not t;}}}%
\begin{figure}
\begin{minipage}[c]{0.3\textwidth}
\begin{lstlisting}
if x {
  if y {
    if z {
      let a <- not t;
} } }
\end{lstlisting}
\end{minipage}%
\begin{minipage}[c]{0.65\textwidth}
\vspace*{-0.51em}%
\resizebox{\textwidth}{!}{
\begin{quantikz}[row sep={0.75cm,between origins}, classical gap=0.075cm, column sep={0.75cm,between origins}]
\ghost{T} \\
\lstick{\texttt{x}} & \redctrl{2}          & \redctrl{3}     & \redctrl{2}     & \\
\lstick{\texttt{y}} & \redctrl{2}          & \redctrl{3}     & \redctrl{2}     & \\
\lstick{\texttt{z}} & \ctrl{1}             & \redctrl{2}     & \ctrl{1}        & \\
\lstick{\texttt{t}} & \gate{\notgate}\grpi & \ctrl{1}        & \gate{\notgate} & \\
\lstick{\texttt{a}} &                      & \gate{\notgate} &                 &
\end{quantikz}
\hspace*{0.6em}\raisebox{-0.3em}{\huge =}\hspace*{-1em}%
\begin{quantikz}[column sep=0.2cm,row sep={0.75cm,between origins},classical gap=0.075cm]
\lstick{$\ket{0}$} & \gate{\notgate} & \redctrl{4}     & \graynot      & \grayslice & & \graynot      & \redctrl{5}     & \graynot      & \grayslice & & \graynot      & \redctrl{4}     & \gate{\notgate} & \rstick{$\ket{0}$} \\
                   & \redctrl{-1}    &                 & \redctrl{-1}  &            & & \redctrl{-1}  &                 & \redctrl{-1}  &            & & \redctrl{-1}  &                 & \redctrl{-1}    & \\
                   & \ctrl{-2}       &                 & \grayctrl{-2} &            & & \grayctrl{-2} &                 & \grayctrl{-2} &            & & \grayctrl{-2} &                 & \ctrl{-2}       & \\
                   &                 & \ctrl{1}        &               &            & &               & \redctrl{2}     &               &            & &               & \ctrl{1}        &                 & \\
                   &                 & \gate{\notgate} &               &            & &               & \ctrl{1}        &               &            & &               & \gate{\notgate} &                 & \\
                   &                 &                 &               &            & &               & \gate{\notgate} &               &            & &               &                 &                 &
\end{quantikz}
}%
\end{minipage}
\setlength{\abovecaptionskip}{5pt}
\setlength{\belowcaptionskip}{-8pt}
\caption{Direct compilation of nested conditionals to a Clifford+Toffoli circuit using the MCX decomposition in \Cref{fig:mcx-decomposition}. The redundant Toffoli gates (\textcolor{gray}{gray}) must be eliminated to obtain an efficient circuit.} \label{fig:nested-conditionals-naive}
\end{figure}

\begin{figure}
\resizebox{\textwidth}{!}{
\begin{quantikz}[column sep=0.2cm, row sep={0.75cm,between origins}]
& \grayctrl{2} & \grayctrl{2} & \ghost{T} \\
& \grayctrl{1} & \grayctrl{1} & \\
& \graynot & \graynot &
\end{quantikz}
=
\begin{quantikz}[column sep=0.2cm,row sep={0.75cm,between origins}]
& & & & \ctrl{2} & & & & \ctrl{2} & \ctrl{1} & & \ctrl{1} & \gate{T} & \grayslice & & & & & \ctrl{2} & & & & \ctrl{2} & \ctrl{1} & & \ctrl{1} & \gate{T} & \\
& & \ctrl{1} & & & & \ctrl{1} & \gate{T\smash{{}^\dagger}} & & \gate{\notgate} & \gate{T\smash{{}^\dagger}} & \gate{\notgate} & \gate{S} & & & & \ctrl{1} & & & & \ctrl{1} & \gate{T\smash{{}^\dagger}} & & \gate{\notgate} & \gate{T\smash{{}^\dagger}} & \gate{\notgate} & \gate{S} & \\
& \gate{H} & \gate{\notgate} & \gate{T\smash{{}^\dagger}} & \gate{\notgate} & \gate{T} & \gate{\notgate} & \gate{T\smash{{}^\dagger}} & \gate{\notgate} & \gate{T} & \gate{H} & & & & & \gate{H} & \gate{\notgate} & \gate{T\smash{{}^\dagger}} & \gate{\notgate} & \gate{T} & \gate{\notgate} & \gate{T\smash{{}^\dagger}} & \gate{\notgate} & \gate{T} & \gate{H} & & &
\end{quantikz}
}
\setlength{\abovecaptionskip}{3pt}
\setlength{\belowcaptionskip}{-7pt}
\caption{Two adjacent Toffoli gates after the standard Clifford+\tgate{} decomposition in \Cref{fig:toffoli-decomposition}. Though equal to the empty identity circuit, this gate sequence cannot be reduced to such by adjacent gate cancellations alone.} \label{fig:toffoli-non-cancellability}
\end{figure}

\section{Future Directions}

\paragraph{Benchmarking Optimizers}
One explanation for why certain existing circuit optimizers do not asymptotically reduce the \tgate{}-complexity of programs with control flow is that such programs are not a focus of the benchmarks on which optimizers have been primarily evaluated. The typical benchmarks, as in \citet[Appendix F]{xu2023}, are circuits with up to $10^3$ \tgate{} gates that are built directly from logic gates, not compiled from quantum programs with control flow. This work does not evaluate on these benchmarks as they do not exhibit the asymptotic behavior of interest.

Instead, this work studies the asymptotic behavior of families of circuits that are compiled from programming abstractions and large enough to be relevant to the regime of practical quantum advantage. For example, \citet{gidney2021} project that $4 \cdot 10^8$ Toffoli gates are necessary to break 1024-bit RSA, and $3 \cdot 10^{10}$ Toffoli gates to break elliptic curve cryptography. At such scales, optimization techniques that are profitable and tractable for small circuits, such as small peepholes and ZX-calculus, become less effective and would benefit from higher-level program structure.

Consequently, it is important future work to develop more explicit implementations of quantum algorithms to serve as large-scale benchmarks for quantum compilation that may reveal other quantum programming abstractions whose costs must be considered and mitigated.

\paragraph{Architectural Bottlenecks}
Aside from \tgate{}-complexity, error-corrected architectures are also constrained by the number of qubits used by a computation. Conditional narrowing does not affect qubit usage, as it only removes control bits from statements. In \Cref{sec:optimization-costs}, we show that given a compiler that uses the MCX decomposition in \Cref{fig:mcx-decomposition,fig:toffoli-decomposition}, conditional flattening introduces no more than $O(1)$ extra qubits in the circuit for the optimized program as compared to the unoptimized program.
The reason is that the new temporary variable $x'$ from the rule in \Cref{sec:conditional-flattening} reuses a qubit that would exist in the compiled circuit for the program even without conditional flattening. This extra qubit -- marked with $\ket{0}$ in \Cref{fig:mcx-decomposition} -- is introduced when the compiler decomposes all MCX to Clifford+\tgate{} gates as needed for a program regardless of conditional flattening.

For sake of thoroughness, we note that alternatives to \Cref{fig:mcx-decomposition} exist that use no extra qubits but use more \tgate{} gates~\citep[Section 7]{barenco1995}.
An important future direction is to explore the trade-offs of different MCX decompositions, and simultaneously optimize \tgate{}-complexity alongside qubit complexity and other metrics such as \emph{\tgate{}-depth} and \emph{quantum volume}~\citep{cross2019}.

Though this work focuses on the widely recognized bottleneck of \tgate{}-complexity on the surface code architecture, the asymptotic costs it presents arise on any error-corrected quantum computer. Fundamentally, the Eastin-Knill theorem~\citep{eastin2009} states that no quantum error-correcting code can \emph{transversely}, i.e.\ natively and efficiently, implement a gate set that is universal for quantum computation. Some gate -- in the surface code, the \tgate{} gate -- is always a bottleneck.

For example, while Reed-Muller codes support an efficient \tgate{} gate, they give up the Hadamard gate in exchange and are thus not universal for quantum computation~\citep{zeng2007}.
In general, strong evidence~\citep{newman2018,oconnor2018} indicates that a Toffoli or MCX gate will act as a performance bottleneck under any quantum error-correcting code.

\paragraph{Other Quantum Architectures}
Apart from the surface code, the abstraction cost of control flow also occurs broadly on hardware architectures in which MCX gates must be decomposed to native gates. For example, on an architecture with only single and two-qubit gates such as CNOT, an MCX gate with many control bits compiles to a proportional number of CNOT gates, making it important to study further how to reduce the performance impact of two-qubit gates~\citep{maslov2016b}.

\section{Related Work} \label{sec:related-work}

\paragraph{\tgate{}-Complexity Optimization}
Optimizations for \tgate{}-complexity have long been investigated in the literature of quantum algorithms. For example, instances of conditional narrowing and conditional flattening are used by physical simulation algorithms~\citep[Figures 1, 6, and 7]{babbush2018} to save control bits during state preparation and Hamiltonian selection respectively.

Researchers have proposed quantum compilers featuring variants of conditional narrowing~\citep{steiger2018,ittah2022} and separately of conditional flattening~\citep{seidel2022}. Novel to this work is our unification of both optimizations as syntax rewrite rules, which produce high-level programs that can be analyzed by the cost model.
Other novel contributions in this work are that we identify that these optimizations can mitigate the asymptotic slowdown caused by control flow, and empirically evaluate their effectiveness and speed relative to existing circuit optimizers.

\paragraph{Quantum Resource Analysis}
Researchers have proposed frameworks~\citep{liu2022,olmedo2019,avanzini2022} to analyze the expected runtime of a quantum program.
Unlike our cost model, prior frameworks do not support reasoning for abstractions for control flow in superposition such as the quantum \texttt{if}-statement.
In order to analyze a program featuring control flow, they require the developer to first lower all abstractions to explicit quantum logic gates.

However, as identified in this work, it is precisely this compilation process itself that introduces asymptotic overhead in \tgate{}-complexity. Our cost model and optimizations enable the developer to identify and mitigate the costs without compiling the program to an asymptotically large circuit.

\section{Conclusion}

The practical realization of quantum algorithms requires designers of programming languages and compilers to reconcile the expressive power of programming abstractions with the performance bottlenecks of error correction. As this work shows, control flow incurs \tgate{}-complexity costs that are significant yet can be mitigated by simple optimizations.
Our work holds out the promise of enabling both expressive and efficient control flow abstractions in quantum programming.

Our work additionally demonstrates the value of a deep study of the interface between quantum programs and error-corrected hardware. This study and our results illuminate a path to a future that combines powerful techniques from classical compilers with search-based optimization of circuits to increase the efficiency of both current and future quantum software.

\section*{Data Availability Statement}

The artifact for this paper, including source code, benchmark programs, and evaluation package, is available on Zenodo~\citep{artifact}.

\section*{Acknowledgements}
We thank Ellie Cheng, Tian Jin, Jesse Michel, Patrick Rall, Alex Renda, Agnes Villanyi, Logan Weber, and Cambridge Yang for helpful feedback on this work. This work was supported in part by the National Science Foundation (CCF-1751011) and the Sloan Foundation.

\nocite{bravyi2021,maslov2016,selinger2013}
\bibliography{biblio.bib}
\vfill
\clearpage

\appendix

\section{Effect of Variable Bit Width on \tgate{}-Complexity} \label{sec:bitwidth}

In practice, the bit width of variables in quantum programs can vary with the depth of control flow. In this work, we make the simplifying assumption that the bit width is constant for two reasons.

First, a large bit width (e.g. 32-bit integers) causes the \tgate{}-complexity of the programs we study to quickly be dominated by primitive arithmetic and memory operations. Though significant in practice and worthy of consideration, the costs of arithmetic and memory would obscure the asymptotic costs of control flow that we focus on. The costs of control flow would persist, but analyzing them would require us to use programs at significantly larger recursion depth.

Second, the bit width of condition variables and the presence of control flow contribute orthogonal, multiplicative costs to the \tgate{}-complexity of the program. For example, consider the program:

\begin{lstlisting}[numbers=none]
if x == 0 { s_1; s_2; ...; s_n; }
\end{lstlisting}

Here, the quantum \texttt{if}-statement incurs control bits for \lstinline{x == 0} for each statement \texttt{s} in the body. If \texttt{x} is $k$ bits wide, the overall complexity incurred would be $O(nk)$ across the $n$ statements in the \texttt{if}-body. Though significant, the factor of $k$ here is not unique to control flow and is simply the multiplicative factor incurred by operating over large data types in any quantum program.

\section{Full Language Semantics} \label{sec:full-semantics}

In this section, we present the full semantics of the Tower language as studied in this work.

\subsection{Type System} \label{sec:full-type-system}

In \Cref{fig:core-val-types-full,fig:core-exp-types-full,fig:core-stmt-ok}, we define the typing judgments for values, expressions, and statements respectively. Note that the context $\ctx$ is ordered and permits multiple distinct type bindings for the same variable $x$, the most recently inserted of which shadows previous instances.

The definitions of the typing judgments are identical to \citet[Section 4.2]{tower} except for two changes. The first change is to permit a variable to be re-declared in the same scope, which is a situation that arises due to the program-level optimizations we present. The second change is to define typing of $\sHadamard{x}$, in a manner consistent with existing constructs.

\begin{figure}[!htb]
\begin{mathpar}
\inferrule[TV-Var]{x \notin \ctx'}{\hastype{\ctx, x : \type, \ctx'}{x}{\type}}

\inferrule[TV-Unit]{\vphantom{\ctx}}{\hastype{\ctx}{\vUnit}{\tUnit}}

\inferrule[TV-Pair]{\hastype{\ctx}{x_1}{\type_1} \\ \hastype{\ctx}{x_2}{\type_2}}{\hastype{\ctx}{\ePair{x_1}{x_2}}{\tPair{\type_1}{\type_2}}}

\inferrule[TV-Num]{\vphantom{\ctx}}{\hastype{\ctx}{\vNum{n}}{\tUInt}}

\inferrule[TV-Bool]{b \in \{\vTrue, \vFalse\}}{\hastype{\ctx}{b}{\tBool}}

\inferrule[TV-Null]{\vphantom{\ctx}}{\hastype{\ctx}{\vNull{\type}}{\tPtr{\type}}}

\inferrule[TV-Ptr]{\vphantom{\ctx}}{\hastype{\ctx}{\vPtr{\type}{p}}{\tPtr{\type}}}
\end{mathpar}
\caption{Typing rules for values in Tower.} \label{fig:core-val-types-full}
\end{figure}

\begin{figure}[!htb]
\begin{mathpar}
\inferrule[TE-Val]{\hastype{\ctx}{\vValue}{\type}}{\hastype{\ctx}{\vValue}{\type}}

\inferrule[TE-Proj]{\hastype{\ctx}{x}{\tPair{\type_1}{\type_2}}}{\hastype{\ctx}{\eProj{i}{x}}{\type_i}}

\inferrule[TE-Not]{\hastype{\ctx}{x}{\tBool}}{\hastype{\ctx}{\eUnop{\oNot}{x}}{\tBool}}

\inferrule[TE-Test]{\hastype{\ctx}{x}{\type} \\ \type \in \{\tUInt, \tPtr{\type'}\}}{\hastype{\ctx}{\eUnop{\oTest}{x}}{\tBool}}

\inferrule[TE-Lop]{\hastype{\ctx}{x_1}{\tBool} \\ \hastype{\ctx}{x_2}{\tBool} \\ bop \in \{\oAnd, \oOr\}}{\hastype{\ctx}{\eBinop{x_1}{bop}{x_2}}{\tBool}}

\inferrule[TE-Aop]{\hastype{\ctx}{x_1}{\tUInt} \\ \hastype{\ctx}{x_2}{\tUInt} \\ bop \in \{\oAdd, \oSub, \oMul\}}{\hastype{\ctx}{\eBinop{x_1}{bop}{x_2}}{\tUInt}}
\end{mathpar}
\caption{Typing rules for expressions in Tower.} \label{fig:core-exp-types-full}
\end{figure}

\begin{figure}
\resizebox{\textwidth}{!}{
\parbox{1.1\textwidth}{
\begin{mathpar}
\inferrule[S-Skip]{\vphantom{\ctx}}{\stmtok{\ctx}{\sSkip}{\ctx}}

\inferrule[S-Seq]{\stmtok{\ctx}{\sStmt_1}{\ctx'} \\ \stmtok{\ctx'}{\sStmt_2}{\ctx''}}{\stmtok{\ctx}{\sSeq{\sStmt_1}{\sStmt_2}}{\ctx''}}

\inferrule[S-Assign]{\hastype{\ctx}{\eExp}{\type}}{\stmtok{\ctx}{\sBind{x}{e}}{\ctx, x : \type}}

\inferrule[S-UnAssign]{\hastype{\ctx}{\eExp}{\type} \\ x \notin \ctx'}{\stmtok{\ctx, x : \type, \ctx'}{\sUnbind{x}{e}}{\ctx, \ctx'}}

\inferrule[S-Hadamard]{\vphantom{\ctx}}{\stmtok{\ctx, x : \tBool}{\sHadamard{x}}{\ctx, x : \tBool}}

\inferrule[S-Swap]{\hastype{\ctx}{x_1}{\type} \\ \hastype{\ctx}{x_2}{\type}}{\stmtok{\ctx}{\sSwap{x_1}{x_2}}{\ctx}}

\inferrule[S-MemSwap]{\hastype{\ctx}{x_1}{\tPtr{\type}} \\ \hastype{\ctx}{x_2}{\type}}{\stmtok{\ctx}{\sMemSwap{x_1}{x_2}}{\ctx}}

\inferrule[S-If]{\stmtok{\ctx}{\sStmt}{\ctx'} \\ \hastype{\ctx}{x}{\tBool} \\ x \notin \modified{\sStmt} \\ \domain{\ctx} \subseteq \domain{\ctx'}}{\stmtok{\ctx}{\sIf{x}{s}}{\ctx'}}
\end{mathpar}
\begin{alignat*}{5}
\modified{\sSkip} &= \emptyset \quad & \modified{\sSwap{x_1}{x_2}} &= \{x_1, x_2\} \\
\modified{\sSeq{s_1}{s_2}} &= \modified{s_1} \cup \modified{s_2} \quad & \modified{\sMemSwap{x_1}{x_2}} &= \{x_2\} \\
\modified{\sBind{x}{e}} &= \modified{\sUnbind{x}{e}} = \modified{\sHadamard{x}} = \{x\} \quad & \modified{\sIf{x}{s}} &= \modified{s}
\end{alignat*}
}}
\setlength{\abovecaptionskip}{5pt}
\caption{Well-formation rules for statements in Tower.} \label{fig:core-stmt-ok}
\end{figure}

\subsection{Circuit Semantics} \label{sec:full-circuit-semantics}

In \Cref{fig:core-circuit}, we present the circuit semantics of programs.
The definition is identical to \citet[Section 4.6]{tower} except for two changes. The first is to allocate a re-declared variable to the same qubits as the original, and the second is to define $\sHadamard{x}$ as simply the Hadamard gate.

The semantics of Tower uses a quantum random-access memory (qRAM) gate that enables data to be addressed in superposition. It is defined as the unitary operation that maps~\citep{ambainis2004}:
\[
\ket{i, b, z_1, \ldots, z_m} \mapsto \ket{i, z_i, z_1, \ldots, z_{i-1}, b, z_{i+1}, \ldots, z_m}
\]
where $i$ is a $k$-bit integer, $b$ is a bit, and $\ket{z_i, \ldots, z_m}$ is the qRAM --- an array of $m$ qubits. The effect of this gate is to swap the data at position $i$ in the array $z$ with the data in the register $b$. The semantics generalizes this gate to multiple qubits, such that $b$ and each $z_i$ is a $k$-bit string.

The semantics is specified in terms of a \emph{register file} $\reg$ mapping variables to values and a \emph{memory} $\mem$ mapping addresses to values. The register file $\reg$ corresponds to the main quantum registers over which we may perform arbitrary gates, while the memory $\mem$ corresponds to the qRAM.
In each circuit, a wire depicted as \varwire{} denotes a $k$-bit register representing an individual program value. A wire depicted as \varswire{} denotes a collection of such values, such as $\reg$ and $\mem$. A circuit fragment may be expanded to operate on the entire program state by tensor product with the identity gate.

An expression $\eExp$ is lifted to a unitary gate $U_\eExp$ that XORs the value to which $\eExp$ evaluates into a register $x$. Each well-formed expression -- constant value, variable reference, projection from a pair, integer arithmetic, and Boolean logic -- can be implemented as a unitary gate in this manner. For details on the semantics of expressions, please see \citet[Section 4]{tower}.

\newsavebox{\skipcircuit}
\savebox{\skipcircuit}{
\begin{quantikz}[classical gap=0.075cm]
    \lstick{$\ket{\reg, \mem}$} \setwiretype{b} & \rstick{$\ket{\reg, \mem}$}
\end{quantikz}
}
\newsavebox{\assigncircuit}
\savebox{\assigncircuit}{
\begin{quantikz}[classical gap=0.075cm]
    \lstick{$\ket{\reg}$} \setwiretype{b} & \gate[wires=2]{U_{\eExp}} & \rstick{$\ket{\reg}$} \\
    \lstick{$\ket{x'}$} & \qwbundle[style={xshift=-0.25em}]{k} & \rstick{$\ket{x \oplus x'}$} \qwbundle{k}
\end{quantikz}
}
\newsavebox{\unassigncircuit}
\savebox{\unassigncircuit}{
\begin{quantikz}[classical gap=0.075cm]
    \lstick{$\ket{\reg}$} \setwiretype{b} & \gate[wires=2]{U_{\eExp}^\dagger} & \rstick{$\ket{\reg}$} \\
    \lstick{$\ket{x \oplus x'}$} & \qwbundle[style={xshift=-0.25em}]{k} & \rstick{$\ket{x'}$} \qwbundle{k}
\end{quantikz}
}
\newsavebox{\hadcircuit}
\savebox{\hadcircuit}{
\begin{quantikz}[classical gap=0.075cm]
    \lstick{$\ket{x}$} & \gate{H} & \rstick{$\ket{x'}$}
\end{quantikz}
}
\newsavebox{\seqcircuit}
\savebox{\seqcircuit}{
\begin{quantikz}[classical gap=0.075cm]
    \lstick{$\ket{\reg, \mem}$} \setwiretype{b} & \gate{\circuit{\sStmt_1}} & \gate{\circuit{\sStmt_2}} & \rstick{$\ket{\reg', \mem'}$}
\end{quantikz}
}
\newsavebox{\swapcircuit}
\savebox{\swapcircuit}{
\begin{quantikz}[classical gap=0.075cm]
    \lstick{$\ket{x_1}$} &[1em] \swap{1}\qwbundle[style={xshift=-0.25em}]{k} &[1em] \rstick{$\ket{x_2}$} \qwbundle{k} \\
    \lstick{$\ket{x_2}$} &[1em] \targX{}\qwbundle[style={xshift=-0.25em}]{k} &[1em] \rstick{$\ket{x_1}$} \qwbundle{k}
\end{quantikz}
}
\newsavebox{\memswapcircuit}
\savebox{\memswapcircuit}{
\begin{quantikz}[classical gap=0.075cm]
    \lstick{$\ket{x_1}$} & \gate[wires=3]{\textrm{qRAM}}\qwbundle[style={xshift=-0.35em}]{k_1} & \rstick{$\ket{x_1}$} \qwbundle{k_1} \\
    \lstick{$\ket{x_2}$} & \qwbundle[style={xshift=-0.35em}]{k_2} & \rstick{$\ket{x'_2}$} \qwbundle{k_2} \\
    \lstick{$\ket{\mem}$} \setwiretype{b} & & \rstick{$\ket{\mem'}$}
\end{quantikz}
}
\newsavebox{\ifcircuit}
\savebox{\ifcircuit}{
\begin{quantikz}[classical gap=0.075cm]
    \lstick{$\ket{x}$} & \ctrl{1} & \rstick{$\ket{x}$} \qw \\
    \lstick{$\ket{\reg, \mem}$} \setwiretype{b} & \gate{\circuit{s}} & \rstick{$\ket{\reg', \mem'}$}
\end{quantikz}
}
\begin{figure}
\captionsetup[subfigure]{labelformat=empty}
\captionsetup{justification=centering}
\centering
\begin{subfigure}[b]{.3\textwidth}
\centering
\makebox(100,20){\usebox{\skipcircuit}}
\caption{$\circuit{\sSkip}$ \\ (identity)}
\end{subfigure}
\begin{subfigure}[b]{.6\textwidth}
\centering
\makebox(100,20){\usebox{\seqcircuit}}
\caption{$\circuit{\sSeq{\sStmt_1}{\sStmt_2}}$ \\ (concatenation)}
\end{subfigure}
\begin{subfigure}[b]{.3\textwidth}
\centering
\makebox(100,60){\usebox{\assigncircuit}}
\caption{$\circuit{\sBind{x}{\eExp}}$ \\ (evaluate expression)}
\end{subfigure}
\begin{subfigure}[b]{.3\textwidth}
\centering
\makebox(100,60){\usebox{\unassigncircuit}}
\caption{$\circuit{\sUnbind{x}{\eExp}}$ \\ (un-evaluate expression)}
\end{subfigure}
\begin{subfigure}[b]{.3\textwidth}
\centering
\makebox(100,60){\usebox{\hadcircuit}}
\caption{$\circuit{\sHadamard{x}}$ \\ (Hadamard gate)}
\end{subfigure}
\begin{subfigure}[b]{.3\textwidth}
\centering
\makebox(100,75){\usebox{\swapcircuit}}
\caption{$\circuit{\sSwap{x_1}{x_2}}$ \\ (swap)}
\end{subfigure}
\begin{subfigure}[b]{.3\textwidth}
\centering
\makebox(100,75){\usebox{\memswapcircuit}}
\caption{$\circuit{\sMemSwap{x_1}{x_2}}$ \\ (swap with memory)}
\end{subfigure}
\begin{subfigure}[b]{.3\textwidth}
\centering
\makebox(100,75){\usebox{\ifcircuit}}
\caption{$\circuit{\sIf{x}{s}}$ \\ (conditional execution)}
\end{subfigure}
\caption{Definition of quantum circuit semantics of Tower.} \label{fig:core-circuit}
\end{figure}

\section{OCaml Implementation of Program-Level Optimizations} \label{sec:ocaml}

In \Cref{fig:ocaml}, we present the OCaml implementation of the program-level optimizations of \Cref{sec:optimizations}.

\begin{figure}
\begin{lstlisting}[language=caml]
let rec optimize_stmt s = match s with
| Sassign _ | Sunassign _ | Sswap _ | Smem_swap _ -> [ s ]
| Sif (x, ss) ->
  List.map ss ~f:(function
    | Swith (s1, s2) ->
      Swith (List.concat_map ~f:optimize_stmt s1, optimize_stmt @@ Sif (x, s2))
    | Sif (y, ss) ->
      let z = Symbol.new_symbol () in
      Swith ([ Sassign (Tbool, z, Ebop (Bland, x, y)) ], optimize_stmt @@ Sif (z, ss))
    | s -> Sif (x, optimize_stmt @@ s))
| Swith (s1, s2) ->
  [ Swith (List.concat_map ~f:optimize_stmt s1, List.concat_map ~f:optimize_stmt s2) ]
\end{lstlisting}
\caption{OCaml implementation of the program-level optimizations of \Cref{sec:optimizations}.} \label{fig:ocaml}
\end{figure}

\section{Register Allocation Case Study} \label{sec:implementation-case-study}

In this section, we illustrate the implementation challenges within the Spire register allocator that arise from performing the optimizations of \Cref{sec:optimizations} on a program. We present one challenge, and our solution to it, as a case study for quantum compiler developers.

\begin{figure}
\begin{subfigure}[t]{0.23\textwidth}
\begin{lstlisting}
if c {%\label{list:reg-alloc-before-if}%
  with {
    let x <- 1;%\label{list:reg-alloc-before-x}%
  } do {
    let x -> 1;
    let y <- 2;
    let x <- y-1;
  }
}
\end{lstlisting}
\caption{Original program.} \label{fig:reg-alloc-before}
\end{subfigure}%
\hspace*{\fill}%
\begin{subfigure}[t]{0.27\textwidth}
\begin{lstlisting}
if c {

  let x <- 1;   // r1%\label{list:reg-alloc-before-expand-1}%

  let x -> 1;   // r1%\label{list:reg-alloc-before-expand-2}%
  let y <- 2;   // r1%\label{list:reg-alloc-before-expand-3}%
  let x <- y-1; // r2%\label{list:reg-alloc-before-expand-4}%
  let x -> 1;   // r2%\label{list:reg-alloc-before-expand-5}%
}
\end{lstlisting}
\caption{Its expanded form.} \label{fig:reg-alloc-before-expand}
\end{subfigure}%
\hspace*{\fill}%
\begin{subfigure}[t]{0.23\textwidth}
\begin{lstlisting}
with {
  let x <- 1;
} do {
  if c {
    let x -> 1;
    let y <- 2;
    let x <- y-1;
  }
}
\end{lstlisting}
\caption{Optimized program.} \label{fig:reg-alloc-after}
\end{subfigure}%
\hspace*{\fill}%
\begin{subfigure}[t]{0.27\textwidth}
\begin{lstlisting}

let x <- 1;     // r1%\label{list:reg-alloc-after-expand-1}%

if c {
  let x -> 1;   // r1%\label{list:reg-alloc-after-expand-2}%
  let y <- 2;   // r1%\label{list:reg-alloc-after-expand-3}%
  let x <- y-1; // r2%\label{list:reg-alloc-after-expand-4}%
}
let x -> 1;     // ??%\label{list:reg-alloc-after-expand-5}%
\end{lstlisting}
\caption{Its expanded form.} \label{fig:reg-alloc-after-expand}
\end{subfigure}
\caption{An aggressive register allocation that becomes impermissible after conditional narrowing.} \label{fig:reg-alloc}
\end{figure}

\paragraph{Challenge}
The challenge is that while the conditional narrowing optimization always produces a program that compiles to a correct circuit under some appropriate allocation of variables to registers, the optimization causes certain aggressive allocations to become impermissible.

In \Cref{fig:reg-alloc-before}, we present a program that uses a \texttt{with}-\texttt{do} block inside an \texttt{if}-statement. This program is a candidate for the conditional narrowing optimization to move the \texttt{if} into the \texttt{do}-block and save control bits on line~\ref{list:reg-alloc-before-x}. Accordingly, in \Cref{fig:reg-alloc-after}, we depict the optimized program.

To compile a program to a quantum circuit, the Tower compiler first expands the construct $\sWith{s_1}{s_2}$ to the sequence $\sSeq{s_1}{\sSeq{s_2}{\reverse{s_1}}}$, where $\reverse{s_1}$ is the reverse of $s_1$ (\Cref{sec:tower}). In \Cref{fig:reg-alloc-before-expand} and \Cref{fig:reg-alloc-after-expand}, we depict the expansion of \Cref{fig:reg-alloc-before} and \Cref{fig:reg-alloc-after} respectively.

\paragraph{Register Allocation}
The next step in compilation is to assign program variables to registers. For simplicity, we assume there are two integer registers \texttt{r1} and \texttt{r2}. The comments in \Cref{fig:reg-alloc-before-expand} depict one possible allocation of the variables \texttt{x} and \texttt{y} to registers. On line~\ref{list:reg-alloc-before-expand-1}, \texttt{x} is assigned to \texttt{r1}.

On line~\ref{list:reg-alloc-before-expand-2}, \texttt{x} is uncomputed and \texttt{r1} is restored to zero. Because \texttt{r1} is now free, on line~\ref{list:reg-alloc-before-expand-3}, it is sound to assign \texttt{y} to \texttt{r1} and reuse it.
On line~\ref{list:reg-alloc-before-expand-4}, \texttt{x} is re-defined and assigned to the next free register, \texttt{r2}. Finally, on line~\ref{list:reg-alloc-before-expand-5}, \texttt{x} is uncomputed from the register \texttt{r2}, and register allocation is complete.

In general, it is essential for Tower to reuse registers, as occurred for \texttt{y} and \texttt{r1}, to ensure that the number of qubits used by a program does not blow up quickly. However, we will see next that such aggressive reuse can become impermissible after the conditional narrowing optimization.

\paragraph{Failed Allocation}
Moving to \Cref{fig:reg-alloc-after-expand}, the first four allocations on lines~\ref{list:reg-alloc-after-expand-1},~\ref{list:reg-alloc-after-expand-2},~\ref{list:reg-alloc-after-expand-3}, and~\ref{list:reg-alloc-after-expand-4} are identical to \Cref{fig:reg-alloc-before-expand}. In particular, \texttt{y} again reuses the register \texttt{r1}, and \texttt{x} is assigned to \texttt{r2}.

The problem comes on line~\ref{list:reg-alloc-after-expand-5}. It would be correct to uncompute \texttt{x} from \texttt{r2} in the case that \texttt{c} is true and the \texttt{if}-clause from lines~\ref{list:reg-alloc-after-expand-2} to~\ref{list:reg-alloc-after-expand-4} executed. However, if \texttt{c} is false, then they did not execute, meaning \texttt{x} still resides in \texttt{r1}. In general, uncomputing a variable from the wrong register corrupts the state of the program, meaning there is no correct way to complete this register allocation.

This example illustrates a general rule of thumb --- reusing registers in such a way that a variable is assigned different registers on different control flow paths can lead to a failed allocation.

\paragraph{Solution}
Our solution is simple and conservative. We define the set of \emph{affected} variables as those that are 1) used in a \texttt{with}-block, and 2) live at the beginning and end of the corresponding \texttt{do}-block. This definition covers \texttt{x} in \Cref{fig:reg-alloc-after}. Then, we add a condition that any affected variable must be assigned the same register at the beginning and end of the \texttt{do}-block, even if it is reallocated. This condition precludes the situation in \Cref{fig:reg-alloc-after} where \texttt{x} was assigned either \texttt{r1} or \texttt{r2}.

In principle, this fix increases register pressure and the risk of a spill, i.e.\ more qubits used by the compiled circuit. Any such increase is bounded by the number of variables re-defined in the same scope.
In practice, we have not observed significant increases in qubit usage --- typically, the same pool of registers is allocated in a different order, rather than spills occurring.

Other solutions are possible. For example, line~\ref{list:reg-alloc-before-expand-5} in \Cref{fig:reg-alloc-after} could be replaced with an \texttt{if} that uncomputes from either \texttt{r1} or \texttt{r2}, but doing so is counterproductive to conditional narrowing.

\section{Full Empirical MCX-Complexities and \tgate{}-Complexities} \label{sec:full-complexities}

In \Cref{tbl:benchmarks-full}, we report the full version of \Cref{tbl:benchmarks}, with all empirical figures reported.

\setlength\rotFPtop{0pt plus 1fil}
\begin{sidewaystable}
\centering
\caption{Full version of \Cref{tbl:benchmarks}, with all empirical figures reported.}
\resizebox{55em}{!}{%
\begin{tabular}{ l c c c c c c c c }
\toprule
                                    & \multicolumn{2}{c}{MCX-Complexity} & & \multicolumn{2}{c}{\tgate{}-Complexity Before Optimizations} & & \multicolumn{2}{c}{\tgate{}-Complexity After Optimizations} \\
\cmidrule{2-3} \cmidrule{5-6} \cmidrule{8-9}
Program                             & Predicted & Empirical & & Predicted & Empirical & & Predicted & Empirical \\
\midrule
List & & & & \\
\subname{}\texttt{length}          & $O(n)$ & $2246n + 32$ & & $O(n^2)$ & $15722n^2 + 19292n + 3934$ & & $O(n)$ & $12740n - 42$ \\
\subname{}\texttt{sum}             & $O(n)$ & $2642n + 32$ & & $O(n^2)$ & $18494n^2 + 19628n + 4298$ & & $O(n)$ & $13272n - 42$ \\
\subname{}\texttt{find\_pos}       & $O(n)$ & $2294n + 32$ & & $O(n^2)$ & $16058n^2 - 8820n + 6426$ & & $O(n)$ & $12740n - 42$ \\
\subname{}\texttt{remove}          & $O(n)$ & $4990n + 32$ & & $O(n^2)$ & $34930n^2 + 26376n + 10304$ & & $O(n)$ & $58912n - 12124$ \\
Queue & & & & \\
\subname{}\texttt{push\_back}      & $O(n)$ & $2864n + 32$ & & $O(n^2)$ & $20048n^2 + 11508n + 4634$ & & $O(n)$ & $46256n - 13006$ \\
\subname{}\texttt{pop\_front}      & $O(1)$ & $1452$ & & $O(1)$   & $8456$ & & $O(1)$ & $8456$ \\
String & & & & \\
\subname{}\texttt{is\_prefix}      & $O(n)$ & $4585n + 32$ & & $O(n^2)$ & $64190n^2 - 11529n + 6545$ & & $O(n)$ & $16758n - 42$ \\
\subname{}\texttt{num\_matching}   & $O(n)$ & $6052n + 5516$ & & $O(n^2)$ & $84728n^2 + 129360n + 59710$ & & $O(n)$ & $21826n + 18676$ \\
\subname{}\texttt{compare}         & $O(n)$ & $4633n + 32$ & & $O(n^2)$ & $97293n^2 + 10598n + 4781$ & & $O(n)$ & $17773n - 42$ \\
Set (radix tree) & & & & \\
\subname{}\texttt{insert}          & $O(d^2)$ & $70154d^2 + 299158d + 32$ & & $O(d^3)$ & $(3076192 / 3)d^3 + 5099374d^2 + (35136290 / 3)d + 155050$ & & $O(d^2)$ & $256914d^2 + 1413244d - 840$ \\
\subname{}\texttt{contains}        & $O(d^2)$ & $36680d^2 + 114553d + 32$ & & $O(d^3)$ & $1027040d^3 + 4142292d^2 + 3380461d + 26369$ & & $O(d^2)$ & $134064d^2 + 687008d - 42$ \\
\bottomrule
\end{tabular}%
}
\label{tbl:benchmarks-full}
\end{sidewaystable}

\section{Analysis of \tgate{} Gate and Qubit Costs of Optimizations} \label{sec:optimization-costs}

In \Cref{tbl:optimization-costs}, we report for each benchmark the number of \tgate{} gates in the circuit that are introduced by the additional uncomputation from conditional flattening. To compute the figures, we compiled each program to a circuit using a modified version of Spire that omits the additional uncomputation, and took the difference in \tgate{}-complexity between these modified circuits and those in \Cref{tbl:benchmarks}.

The results show that across all of the benchmarks in \Cref{tbl:optimization-costs} at recursion depth $n = 10$, 0 to 4.81\% (average 0.49\%) of the \tgate{} gates in the final compiled circuit correspond to the uncomputation that is introduced by conditional flattening. At depth $n = 2$, this figure is 0 to 2.85\% (average 0.30\%).

We also report the number of qubits used by each program, with and without Spire's optimizations, when compiled to a Clifford+Toffoli circuit by the process used in \Cref{sec:eval-circuit-optimizers}. To reiterate this process, we first used Spire to compile each program to an MCX circuit, and next used Feynman to convert MCX gates to Toffoli gates following the decomposition of \Cref{fig:mcx-decomposition}.

For the largest circuits \texttt{insert} and \texttt{contains} in \Cref{tbl:optimization-costs}, which are considerably larger than the \texttt{length} and \texttt{length-simplified} examples from \Cref{sec:eval-circuit-optimizers}, Feynman did not produce an output after one hour. In these cases, we report the results of our own consistent re-implementation of Feynman's MCX-decomposition following \Cref{fig:mcx-decomposition}. This re-implementation produces equal circuits to Feynman for all cases in \Cref{tbl:optimization-costs} where both tools produce outputs.

The results in \Cref{tbl:optimization-costs} show that at depths $n = 2$ and $n = 10$, Spire increases the qubit usage of the program by 1 qubit for the \texttt{push\_back} benchmark. It does not change qubit usage for \texttt{length}, \texttt{length-simplified}, \texttt{sum}, \texttt{remove}, and \texttt{pop\_front}. It decreases the qubit usage by 1 for \texttt{find\_pos}, \texttt{is\_prefix}, \texttt{num\_matching}, \texttt{compare}, and \texttt{contains}. For the \texttt{insert} benchmark, which composes the above functions as subroutines, Spire reduces qubit usage by 3 at depth $n = 2$ and 19 at $n = 10$.

\setlength\rotFPtop{0pt plus 1.1fil}
\begin{sidewaystable}
\centering
\newcommand{\timeout}{\textcolor{gray}{timeout}}
\caption{Cost incurred by Spire's optimizations, in terms of the number of \tgate{} gates introduced by the additional uncomputation from conditional flattening, and the number of qubits used in the compiled circuit after decomposing MCX to Toffoli gates using Feynman. We list each benchmark program from \Cref{tbl:benchmarks} at recursion depths $n = 10$ and $2$.\ An entry marked with $*$ denotes an instance where Feynman did not produce an output Clifford+Toffoli circuit after one hour, and where we report the output of our own consistent re-implementation of Feynman's behavior.}
\resizebox{45em}{!}{%
\begin{tabular}{ l c c c c c c c c }
\toprule
                                     & \multicolumn{3}{c}{\tgate{}-Complexity of Optimized Program} & & \multicolumn{3}{c}{Number of Qubits in Compiled Circuit} \\
\cmidrule{2-4} \cmidrule{6-8}
Program                              & Total & Uncomputation & \% Uncomputation & & Without Spire & With Spire & Difference \\
\midrule
Depth $n = 10$ \\
\subname{}\texttt{length}            & 127358   & 126   & 0.09\% & & 1128   & 1128   & 0 \\
\subname{}\texttt{length-simplified} & 2618     & 126   & 4.81\% & & 161    & 161    & 0 \\
\subname{}\texttt{sum}               & 132678   & 126   & 0.09\% & & 1436   & 1436   & 0 \\
\subname{}\texttt{find\_pos}         & 127358   & 126   & 0.09\% & & 1211   & 1210   & -1 \\
\subname{}\texttt{remove}            & 576996   & 252   & 0.04\% & & 1247   & 1247   & 0 \\
\subname{}\texttt{push\_back}        & 449554   & 126   & 0.02\% & & 843    & 844    & 1 \\
\subname{}\texttt{pop\_front}        & 8456     & 0     & 0.00\% & & 336    & 336    & 0 \\
\subname{}\texttt{is\_prefix}        & 167538   & 266   & 0.15\% & & 3627   & 3626   & -1 \\
\subname{}\texttt{num\_matching}     & 236936   & 294   & 0.12\% & & 4783   & 4782   & -1 \\
\subname{}\texttt{compare}           & 177688   & 406   & 0.22\% & & 4169   & 4168   & -1 \\
\subname{}\texttt{insert}            & 39823000 & 50498 & 0.12\% & & 647077* & 647058* & -19* \\
\subname{}\texttt{contains}          & 20276438 & 23226 & 0.11\% & & 313244* & 313243* & -1* \\
Average                              &          &       & 0.49\% \\
\midrule
Depth $n = 2$ \\
\subname{}\texttt{length}            & 25438   & 14   & 0.05\% & & 480   & 480   & 0 \\
\subname{}\texttt{length-simplified} & 490     & 14   & 2.85\% & & 41    & 41    & 0 \\
\subname{}\texttt{sum}               & 26502   & 14   & 0.05\% & & 548   & 548   & 0 \\
\subname{}\texttt{find\_pos}         & 25438   & 14   & 0.05\% & & 499   & 498   & -1 \\
\subname{}\texttt{remove}            & 105700  & 28   & 0.02\% & & 535   & 535   & 0 \\
\subname{}\texttt{push\_back}        & 79506   & 14   & 0.01\% & & 435   & 436   & 1 \\
\subname{}\texttt{pop\_front}        & 8456    & 0    & 0.00\% & & 336   & 336   & 0 \\
\subname{}\texttt{is\_prefix}        & 33474   & 42   & 0.12\% & & 819   & 818   & -1 \\
\subname{}\texttt{num\_matching}     & 62328   & 70   & 0.11\% & & 1351  & 1350  & -1 \\
\subname{}\texttt{compare}           & 35504   & 70   & 0.19\% & & 921   & 920   & -1 \\
\subname{}\texttt{insert}            & 3853304 & 3346 & 0.08\% & & 49341* & 49338* & -3* \\
\subname{}\texttt{contains}          & 1910230 & 1050 & 0.05\% & & 20588* & 20587* & -1* \\
Average                              &         &      & 0.30\% \\
\bottomrule
\end{tabular}%
}
\label{tbl:optimization-costs}
\end{sidewaystable}

\section{Additional Evaluation Results: Specific Optimizers} \label{sec:quartz-queso}

In this section, we present more evaluation details for Quartz~\citep{xu2022} and QUESO~\citep{xu2023}. We describe their running time behavior and present the partial results we obtained.

In \Cref{tbl:quartz-queso}, we present the results of Quartz and QUESO on the circuit for \texttt{length-simplified} at recursion depths 1 to 5. We report \tgate{}, $H$, and CNOT gate counts and the runtime of each optimizer.

\paragraph{Methodology Details}
Unlike the other circuit optimizers we evaluated, Quartz and QUESO do not by default accept input circuits that contain Toffoli gates. Instead, they accept controlled-controlled-$Z$ (CCZ) gates, to which Toffoli gates are closely related --- a Toffoli, or CCX, gate is simple to construct from one CCZ gate and two Hadamard gates. Thus, we translated each circuit to use CCZ gates, which does not change its \tgate{}-complexity, before invoking the optimizer.

Quartz optimizes circuits via a preprocessing phase that performs rotation merging and greedy CCZ decomposition followed by a search-based superoptimization phase.
We performed these steps following recommendations that the Quartz developers gave to us during correspondence. First, we invoked the preprocessing phase using the gate set of~\citet{nam2018} -- consisting of CNOT, $H$, $X$, and $R_z$ -- to ensure that Quartz's implementation of rotation merging operates as intended.
Second, before invoking the search-based phase, we converted each $R_z$ gate back to Clifford+$T$ gates following the identities $R_z(\pi/4) = T$, $R_z(-\pi/4) = T^\dagger$, $R_z(\pi/2) = S$, $R_z(-\pi/2) = S^\dagger$.

Both Quartz and QUESO require a rule file and associated command-line arguments to be provided to the optimizer, and we obtained the necessary rule files and configurations from their respective artifacts.
For Quartz, we used the provided \lstinline{gen_ecc_set} tool to generate the rule file \lstinline{3_2_5_complete_ECC_set.json},\footnote{The name of this file has since been changed by the Quartz developers to \lstinline{Clifford_T_5_3_complete_ECC_set.json}.} which we then provided to the optimizer using the \lstinline{--eqset} flag.
For QUESO, we invoked the \lstinline{EnumeratorPrune -g nam -q 3 -s 6} flags to generate the rule files \lstinline{rules_q3_s6_nam.txt} and \lstinline{rules_q3_s6_nam_symb.txt}, which we then provided to the optimizer using the \texttt{-r} and \texttt{-sr} flags respectively, along with the \texttt{-j "nam"} flag.

\paragraph{Running Time Behavior}
Unlike other circuit optimizers whose runtime is bounded a priori by the length of the input circuit, Quartz and QUESO use open-ended search strategies to discover possible circuit rewrites. The runtime of this process is bounded in practice only by an explicit timeout.
In our evaluation, we used a timeout of \SI{3600}{\second}, or one hour. Quartz uses one hour as its timeout value by default. For QUESO, we specified the timeout using the \texttt{-t 3600} flag.

For Quartz, the user-specifiable timeout affects only the search-based phase and not preprocessing (rotation merging and greedy CCZ decomposition). As shown in \Cref{tbl:quartz-queso}, preprocessing takes about 8 hours on the $n = 5$ circuit, and the search-based phase takes about 1 additional hour.

For possibly similar reasons, QUESO also takes longer than 1 hour on our benchmark circuits despite the specified 1 hour timeout. As shown in \Cref{tbl:quartz-queso}, QUESO takes about 1.5 hours to terminate at depth $n = 1$ and about 26 hours at depth $2$ when the timeout is set to 1 hour.

\paragraph{Results}
As shown in \Cref{tbl:quartz-queso}, the preprocessing phase of Quartz improves the \tgate{}-complexity of the circuit by 32\% at depth $n = 1$ and 37\% at depth $n = 5$. The scaling of \tgate{}-complexity is quadratic. Afterwards, running the search phase for one hour yields circuits that have the same number of $T$ gates and fewer $H$ and CNOT gates compared to those output by the preprocessing phase.

The Quartz developers informed us of possible explanations for why the impact of preprocessing on \tgate{} gates dominates over that of search for our benchmarks: ``The Toffoli decomposition (into 13 gates) is known to be optimal, so inside each CCZ gate, Quartz does not have any chance to optimize it further'', and ``Although T gates across multiple CCZ gates can be optimized together, it will need [equivalent circuit class] sets with more than 13 gates, which is beyond Quartz's scalability''.

After we contacted the Quartz developers, they provided to us a new version of Quartz that, according to the developers, was created after consideration of the Clifford+\tgate{} gate set. The new version adds to Quartz a native operation that translates an $R_z$ gate into \tgate{} and $S$ gates, and one that translates a CCZ gate into a fixed sequence of CNOT and $R_z$ gates. These new operations enable an option to bypass greedy CCZ decomposition and only run rotation merging in preprocessing.

This new version (v0.1.1) of Quartz uses less runtime to achieve \tgate{}-complexities that are equal to or slightly greater than those in \Cref{tbl:quartz-queso}, which were achieved by the version (v0.1.0) of Quartz available at the start of our experimentation.
We depict the results of Quartz v0.1.1 in \Cref{tbl:quartz-new}.

Meanwhile, QUESO improves the \tgate{}-complexity of the circuit by 20\% at depth $n = 1$ and 13\% at depth 2. Though we were unable to run QUESO for longer than 26 hours to collect additional data, these results suggest that the \tgate{}-complexity of the output is not asymptotically linear.

\begin{sidewaystable}
\caption{Optimization performance of Quartz~\citep{xu2022} and QUESO~\citep{xu2023} on the \texttt{length-simplified} program for recursion depths 1 to 5. Reported columns are the \tgate{}, $H$, and CNOT gate counts for the original and optimized circuits, and the time taken by each optimizer. Times reported are for a single run. For Quartz, the column ``preprocess only'' reports the gate counts and running time of the preprocessing phase of rotation merging and CCZ decomposition. The column ``preprocess + search'' reports the gate counts and total running time of preprocessing followed by search-based optimization.}
\centering
\resizebox{55em}{!}{%
\begin{tabular}{ c c c c c c c c S[table-format=5.2] c c c c S[table-format=5.2] c c c c S[table-format=5.2] }
\toprule
          & \multicolumn{3}{c}{Original Circuit} & & \multicolumn{4}{c}{Quartz (preprocessing only)} & & \multicolumn{4}{c}{Quartz (preprocessing + search)} & & \multicolumn{4}{c}{QUESO} \\
\cmidrule{2-4} \cmidrule{6-9} \cmidrule{11-14} \cmidrule{16-19}
Depth $n$ & \tgate{} & $H$ & CNOT & & \tgate{} & $H$ & CNOT & {Time (s)} & & \tgate{} & $H$ & CNOT & {Time (s)} & & \tgate{} & $H$ & CNOT & {Time (s)} \\
\midrule
1 & 392 & 112 & 346 & & 266 & 112 & 346 & 1.58 & & 266 & 68 & 272 & 3602.06 & & 312 & 68 & 332 & 5524.79 \\
2 & 1638 & 468 & 1448 & & 1114 & 468 & 1448 & 99.10 & & 1114 & 262 & 1130 & 3701.61 & & 1424 & 262 & 1420 & 93725.16 \\
3 & 4046 & 1156 & 3512 & & 2642 & 1156 & 3512 & 797.96 & & 2642 & 640 & 2638 & 4415.72       & & & & & \\
4 & 7854 & 2244 & 6776 & & 5010 & 2244 & 6776 & 5723.98 & & 5010 & 1210 & 5006 & 9411.60     & & & & & \\
5 & 13062 & 3732 & 11240 & & 8240 & 3732 & 11240 & 30064.91 & & 8240 & 1988 & 8190 & 33959.49 & & & & & \\
\bottomrule
\end{tabular}%
}
\label{tbl:quartz-queso}
\end{sidewaystable}

\begin{sidewaystable}
\caption{Optimization performance of the new version of Quartz (v0.1.1) the developers provided to us. The column ``RM only'' reports the gate counts and running time of rotation merging alone. The column ``RM + search'' reports gate counts and running time of rotation merging followed by search-based optimization. The column ``RM + CD + search'' reports those of rotation merging and greedy CCZ decomposition followed by search-based optimization.}
\centering
\resizebox{55em}{!}{%
\begin{tabular}{ c c c c c c c c S[table-format=2.2] c c c c S[table-format=4.2] c c c c S[table-format=5.2] }
\toprule
          & \multicolumn{3}{c}{Original Circuit} & & \multicolumn{4}{c}{Quartz (RM only)} & & \multicolumn{4}{c}{Quartz (RM + search)} & & \multicolumn{4}{c}{Quartz (RM + CD + search)} \\
\cmidrule{2-4} \cmidrule{6-9} \cmidrule{11-14} \cmidrule{16-19}
$n$ & \tgate{} & $H$ & CNOT & & \tgate{} & $H$ & CNOT & {Time (s)} & & \tgate{} & $H$ & CNOT & {Time (s)} & & \tgate{} & $H$ & CNOT & {Time (s)} \\
\midrule
1 & 392 & 112 & 346 & &
266 & 112 & 346 & 0.05 & &
266 & 68 & 292 & 3600.44 & &
266 & 68 & 272 & 3602.13 \\
2 & 1638 & 468 & 1448 & &
1114 & 468 & 1448 & 0.74 & &
1114 & 262 & 1160 & 3603.37 & &
1114 & 262 & 1128 & 3697.53 \\
3 & 4046 & 1156 & 3512 & &
2650 & 1156 & 3512 & 2.74 & &
2650 & 640 & 2780 & 3623.43 & &
2642 & 640 & 2634 & 4403.51 \\
4 & 7854 & 2244 & 6776 & &
5023 & 2244 & 6776 & 14.58 & &
5023 & 1210 & 5390 & 3677.66 & &
5010 & 1210 & 5004 & 9204.48 \\
5 & 13062 & 3732 & 11240 & &
8268 & 3732 & 11240 & 72.90 & &
8268 & 1988 & 9022 & 3834.36 & &
8240 & 1988 & 8188 & 33238.18 \\
\bottomrule
\end{tabular}%
}
\label{tbl:quartz-new}
\end{sidewaystable}

\section{Additional Evaluation Results: Synergy Effect} \label{sec:more-evaluation}

\begin{figure}
\begin{tikzpicture}
\begin{axis}[
    xlabel=Recursion depth $n$ of \texttt{length-simplified},
    scaled y ticks=false,
    ytick distance=2000,
    xtick distance=1,
    ymax=8750,
    width=8cm,
    legend pos=outer north east,
    legend cell align=left,
    domain=2:10,
    samples=100,
    every axis plot/.append style={thick}
]
\addplot[Dark2-A,-,mark=square*] plot coordinates {
    (2,1638)
    (3,4046)
    (4,7854)
    (5,13062)
    (6,19670)
    (7,27678)
    (8,37086)
    (9,47894)
    (10,60102)
};
\addlegendentry{Original program}
\addplot[Dark2-H,-,mark=*] plot coordinates {
    (2,630)
    (3,1876)
    (4,4284)
    (5,8092)
    (6,13300)
    (7,19908)
    (8,27916)
    (9,37324)
    (10,48132)
};
\addlegendentry{CN alone}
\addplot[Dark2-C,-,mark=diamond] plot coordinates {
    (2,764)
    (3,1570)
    (4,2376)
    (5,3182)
    (6,3986)
    (7,4792)
    (8,5594)
    (9,6400)
    (10,7206)
};
\addlegendentry{Feynman \texttt{-mctExpand}}
\addplot[Dark2-B,-,mark=Mercedes star flipped] plot coordinates {
    (2,1134)
    (3,1876)
    (4,2618)
    (5,3360)
    (6,4102)
    (7,4844)
    (8,5586)
    (9,6328)
    (10,7070)
};
\addlegendentry{CF alone}
\addplot[Set2-D,-,mark=Mercedes star] plot coordinates {
    (2,436)
    (3,934)
    (4,1740)
    (5,2546)
    (6,3352)
    (7,4156)
    (8,4962)
    (9,5764)
    (10,6570)
};
\addlegendentry{CN + Feynman \texttt{-mctExpand}}
\addplot[Set2-A,-,mark=heart] plot coordinates {
    (2,764)
    (3,1262)
    (4,1760)
    (5,2258)
    (6,2756)
    (7,3254)
    (8,3752)
    (9,4250)
    (10,4748)
};
\addlegendentry{CF + Feynman \texttt{-mctExpand}}
\addplot[Dark2-D,-,mark=asterisk] plot coordinates {
    (2,430)
    (3,848)
    (4,1266)
    (5,1692)
    (6,2112)
    (7,2532)
    (8,2950)
    (9,3372)
    (10,3792)
};
\addlegendentry{QuiZX}
\addplot[Set2-B,-,mark=triangle*] plot coordinates {
    (2,238)
    (3,528)
    (4,946)
    (5,1364)
    (6,1790)
    (7,2210)
    (8,2630)
    (9,3048)
    (10,3470)
};
\addlegendentry{CN + QuiZX}
\addplot[Set2-C,-,mark=-] plot coordinates {
    (2,432)
    (3,726)
    (4,1014)
    (5,1304)
    (6,1596)
    (7,1890)
    (8,2178)
    (9,2468)
    (10,2762)
};
\addlegendentry{CF + QuiZX}
\addplot[Dark2-E,-,mark=pentagon] plot coordinates {
    (2,490)
    (3,756)
    (4,1022)
    (5,1288)
    (6,1554)
    (7,1820)
    (8,2086)
    (9,2352)
    (10,2618)
};
\addlegendentry{CF + CN}
\addplot[Dark2-F,-,mark=triangle] plot coordinates {
    (2,348)
    (3,534)
    (4,720)
    (5,906)
    (6,1092)
    (7,1278)
    (8,1464)
    (9,1650)
    (10,1836)
};
\addlegendentry{CF + CN + Feynman \texttt{-mctExpand}}
\addplot[Set2-E,-,mark=|] plot coordinates {
    (2,206)
    (3,320)
    (4,434)
    (5,548)
    (6,662)
    (7,776)
    (8,890)
    (9,1004)
    (10,1118)
};
\addlegendentry{CF + CN + QuiZX}
\end{axis}
\end{tikzpicture}%
\caption{Synergy of individual program-level optimizations with Feynman and QuiZX.} \label{fig:ablation}
\end{figure}

In this section, we present experimental results demonstrating that the synergy effect from \Cref{sec:eval-program-level} occurs even when program-level optimizations are used individually rather than together.

In \Cref{fig:ablation}, we plot the \tgate{}-complexity of \texttt{length-simplified} after applying conditional flattening and conditional narrowing alone, together, and alongside Feynman. We observe:
\begin{itemize}
    \item Applying conditional narrowing and then Feynman achieves better results than either alone.
    \item The same pattern emerges with the conditional flattening optimization.
    \item Applying both program-level optimizations followed by Feynman achieves better results than applying each one individually followed by Feynman.
\end{itemize}

We also plot the \tgate{}-complexity achieved by QuiZX, with and without first applying the program-level optimizations individually or together, and observe the same patterns as for Feynman.

\section{Alternative Toffoli Optimization Strategies} \label{sec:optimization-alternatives}

Apart from \Cref{fig:toffoli-non-cancellability} itself, \citet{bravyi2021} give examples of rewrite rules over sequences longer than two adjacent Clifford+\tgate{} gates that can be used to optimize the decomposed form of adjacent Toffoli gates. Another approach is to use an alternative decomposition of the Toffoli gate, such as proposed by \citet{maslov2016,selinger2013}, whose structure is simpler to optimize than \Cref{fig:toffoli-non-cancellability}.

\end{document}
\endinput